\documentclass{comnet}%%%%where comnet is the template name
\usepackage{amsfonts}
\usepackage{graphicx}
\usepackage{epstopdf}
\usepackage{algorithm}
\usepackage{algorithmic}
\usepackage{kotex}
\ifpdf
  \DeclareGraphicsExtensions{.eps,.pdf,.png,.jpg}
\else
  \DeclareGraphicsExtensions{.eps}
\fi

\usepackage{enumitem}
\setlist[enumerate]{leftmargin=.5in}
\setlist[itemize]{leftmargin=.5in}

\usepackage{amsopn}

\usepackage{newtxtext}
\usepackage{newtxmath}
\usepackage{bm}
\usepackage{subcaption}
\usepackage{multirow}
\usepackage{pdflscape}%for rotating tables
\usepackage{orcidlink}
\usepackage{xr-hyper}
\usepackage{cleveref}

\DeclareTextFontCommand{\textib}{%
  \fontseries\bfdefault % change series without selecting the font yet
  \itshape
}
\newcommand{\bA}{{\mathbf A}}

\newcommand{\bfd}{{\mathbf d}}

\newcommand{\bff}{{\mathbf f}}

\newcommand{\bv}{{\mathbf v}}

\newcommand{\bz}{{\mathbf z}}

\newcommand{\bTheta} {\bm{\Theta}}

\newcommand{\bxi} {\bm{\xi}}

\newcommand{\btau}{\bm{\tau}}

\newcommand{\bone}{{\mathbf 1}}

\newcommand{\mbR}{{\mathbb R}}

\DeclareMathOperator*{\argmin}{argmin}
\DeclareMathOperator*{\argmax}{argmax}
\DeclareMathOperator*{\cov}{Cov}
\DeclareMathOperator*{\var}{Var}

\newcommand{\I}{\mathbb{I}}
\newcommand{\E}{\mathbb{E}}

\newcounter{cnstcnt}

\newcommand{\R}{{\textsf{R}}}

\newcommand{\T}{\mathrm{T}}

\newcommand{\iid}{{\textit{i.i.d.} }}

\makeatletter
\def\ps@plain{%
  \let\@oddhead\@empty
  \let\@evenhead\@empty
  \def\@oddfoot{\hfill\thepage\hfill}%
  \let\@evenfoot\@oddfoot
}
\let\@received\@empty
\let\@revised\@empty
\let\@accepted\@empty
\makeatother

\begin{document}

\title{Joint Estimation of Sparse Multilayer Networks via Graph Limits}

\shorttitle{Joint Estimation of Sparse Multilayer Networks via Graph Limits} %%%for recto running head
% \shortauthorlist{Y. Song and S. C. Olhede} %%% for verso running head

\author{%%%% First author details
Youngseok Song~\orcidlink{0000-0003-3577-7509}
\address{School of Mathematical and Data Sciences, West Virginia University, Morgantown, West Virginia 26506, USA
\email{youngseok.song@mail.wvu.edu}}
%%%%%%% Second author details
\and
Sofia C. Olhede~\orcidlink{0000-0003-0061-227X}
\address{Chair of Statistical Data Science, EPFL, 1015 Lausanne, Switzerland
\email{sofia.olhede@epfl.ch}}
%%%%%%%
% \and
% %%%%%%% Third author details
% \name{Insert third author}
% \address{Third author address}}
}
\maketitle
\begingroup
\renewcommand{\thefootnote}{}
\footnotetext{* Submitted to the Journal of Complex Networks.}
\endgroup

\begin{abstract}
{Network datasets in modern applications often involve multiple types of interactions occurring over a shared set of individuals. Characterizing the generating mechanisms of these interactions can be enhanced by joint modelling, as shared vertices allow layers to help explain the structure of other layers. We model multiplex observations using graph limits, called a scaled set of graphons, and develop a nonparametric joint estimator based on blockmodel approximations, termed the multi-network histogram. This nonparametric framework captures each layer's varying sparsity and connection structure, accounting for heterogeneity via shared latent variables across all layers. We establish the theoretical properties of the multi-network histogram, providing an upper bound for the weighted mean integrated squared error and deriving the optimal bandwidth that minimizes this error. 
By leveraging information across layers, this joint modelling achieves a reduction in error and a smaller optimal bandwidth, which enables high-resolution estimation even in sparser layers. 
Its usefulness is demonstrated through simulation studies and an application to socioeconomic networks in an Indian village.
}
{multiplex network, joint graphon estimation, nonparametric statistics, network histograms, sparse networks, heterogeneous graphons
}
%%%% If classification number provided then
\\
2000 Math Subject Classification: 
% 34K30, 35K57, 35Q80,  92D25
05C80, 60G09, 62G05 
\end{abstract}

\section{Introduction}
% Motivation

This paper proposes methods for nonparametric estimation of the generative mechanism of sparse multilayer networks, where all layers are defined on a common set of vertices. A multilayer network consists of several distinct layers with both inter-layer and intra-layer connections \cite{bianconi2018Multilayer}. Such structures are particularly relevant for modelling systems characterized by several interaction types among entities, such as biological \cite{dedomenico2015MuxViz, vazquez2003Changes}, social \cite{banerjee2013Diffusion,padgett1993Robust,sapiezynski2019Interaction}, economic \cite{banerjee2013Diffusion,dedomenico2015Structuralb}, or transportation networks \cite{asgari2016CTMapper,cardillo2013Emergence,dedomenico2014Navigability}. 
Among multilayer networks, our focus lies on those in which all layers share an identical vertex set without inter-layer edges between distinct nodes. 
\begin{figure}[!ht]
\centering
\begin{subfigure}[b]{0.225\linewidth}
    \centering
    \includegraphics[width=\textwidth]{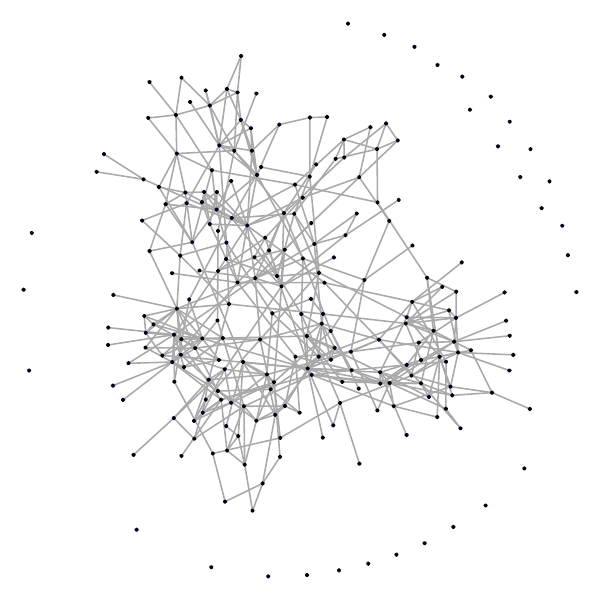}
    \caption{Borrow money}
    \label{subfig:igraph_borrowmoney}
\end{subfigure}
\hfill
\begin{subfigure}[b]{0.225\linewidth}
    \centering
    \includegraphics[width=\textwidth]{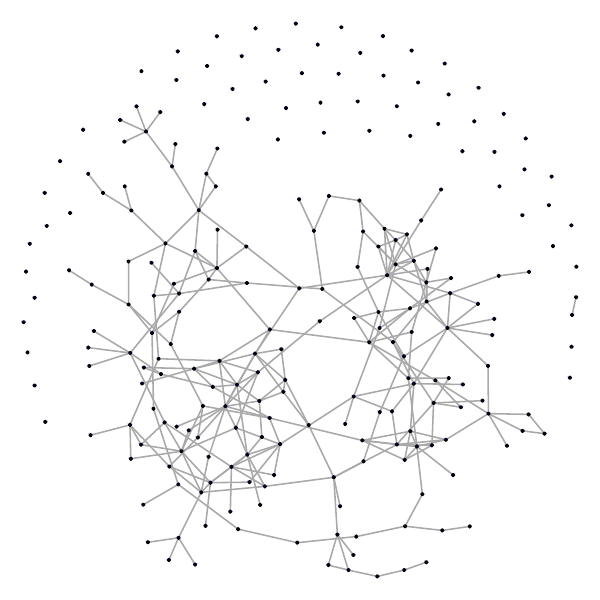}
    \caption{Give advice}
\end{subfigure}
\hfill
\begin{subfigure}[b]{0.225\linewidth}
    \centering
    \includegraphics[width=\textwidth]{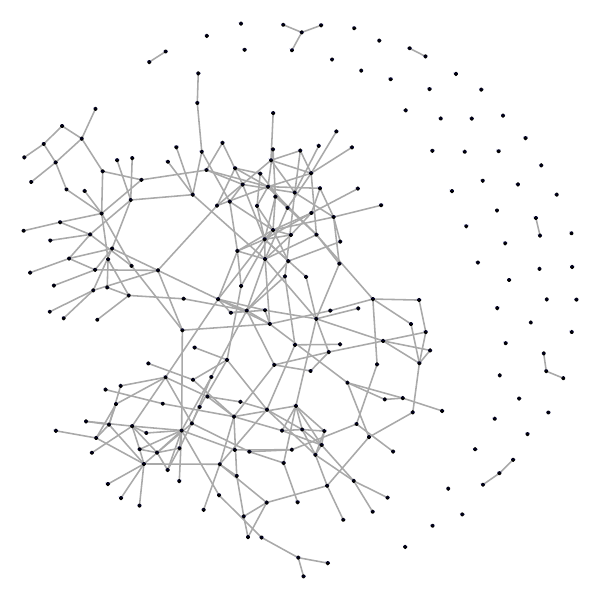}
    \caption{Help decision}
\end{subfigure}
\hfill
\begin{subfigure}[b]{0.225\linewidth}
    \centering
    \includegraphics[width=\textwidth]{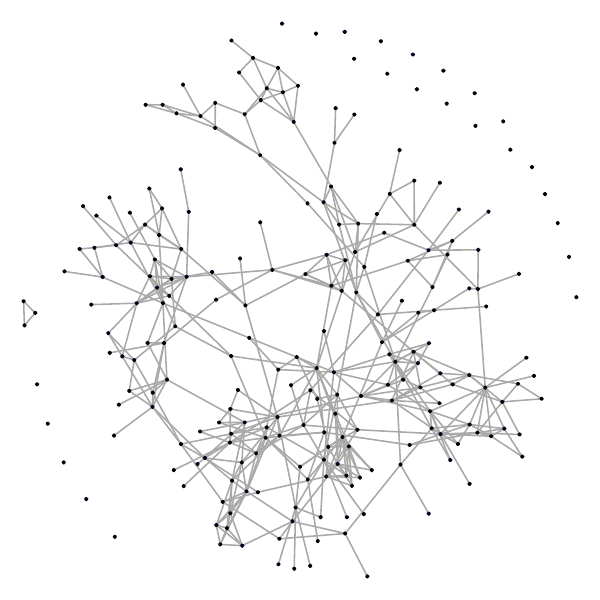}
    \caption{Kero rice come}
\end{subfigure}
\\
\begin{subfigure}[b]{0.225\linewidth}
    \centering
    \includegraphics[width=\textwidth]{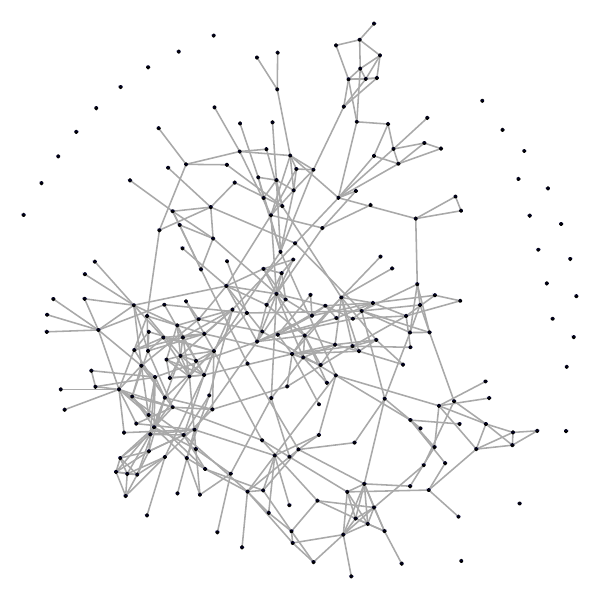}
    \caption{Kero rice go}
\end{subfigure}
\hfill
\begin{subfigure}[b]{0.225\linewidth}
    \centering
    \includegraphics[width=\textwidth]{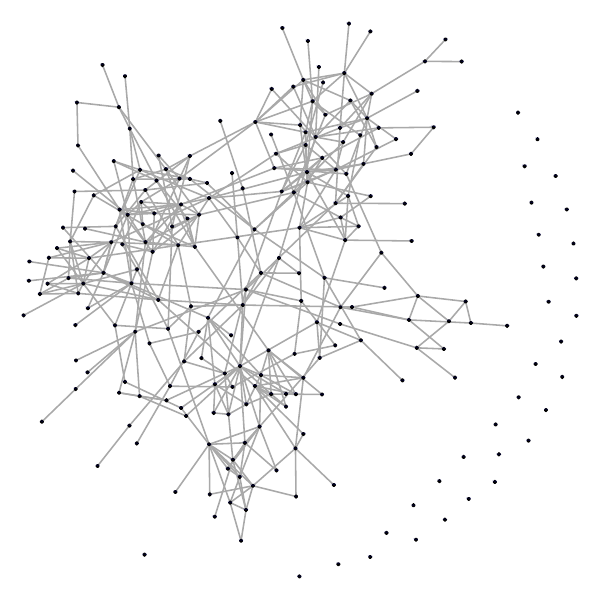}
    \caption{Lend money}
\end{subfigure}
\hfill
\begin{subfigure}[b]{0.225\linewidth}
    \centering
    \includegraphics[width=\textwidth]{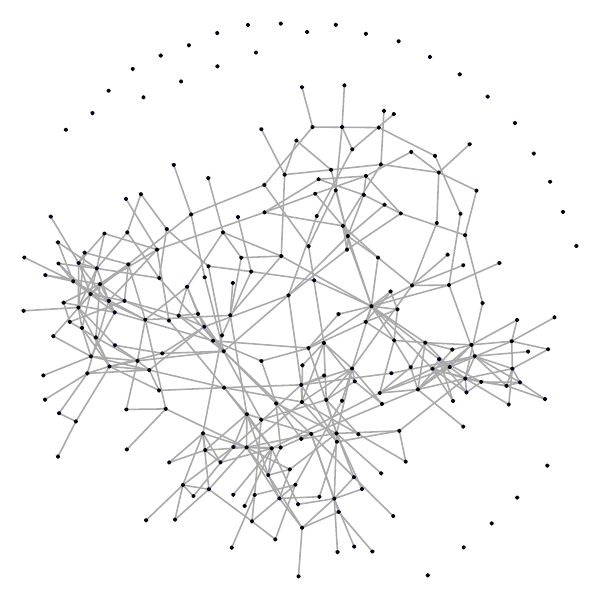}
    \caption{Medic}
\end{subfigure}
\hfill
\begin{subfigure}[b]{0.225\linewidth}
    \centering
    \includegraphics[width=\textwidth]{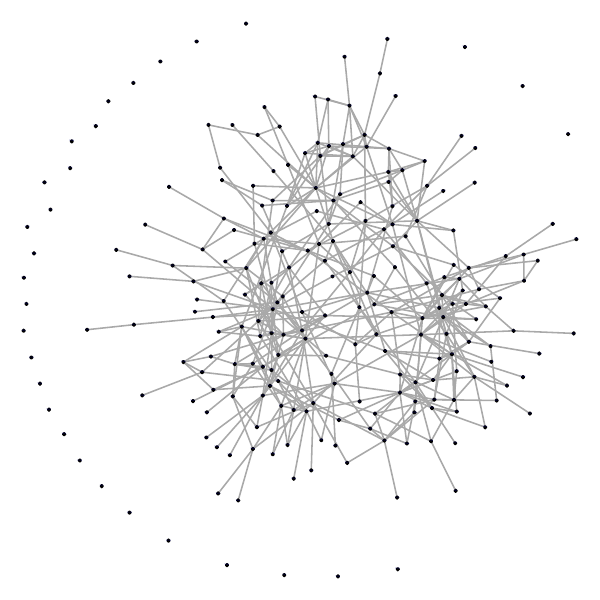}
    \caption{Nonrel}
    \label{subfig:vil_40_nonrel_igraph}
\end{subfigure}
\\
\begin{subfigure}[b]{0.225\linewidth}
    \centering
    \includegraphics[width=\textwidth]{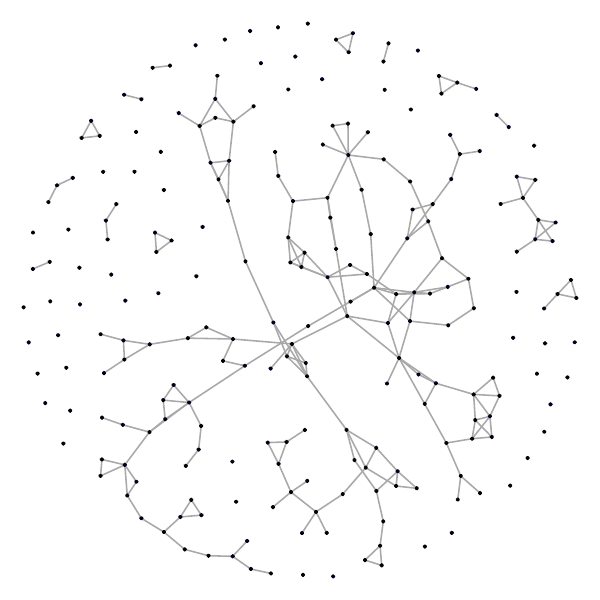}
    \caption{Rel}
    \label{subfig:vil_40_rel_igraph}
\end{subfigure}
\hfill
\begin{subfigure}[b]{0.225\linewidth}
    \centering
    \includegraphics[width=\textwidth]{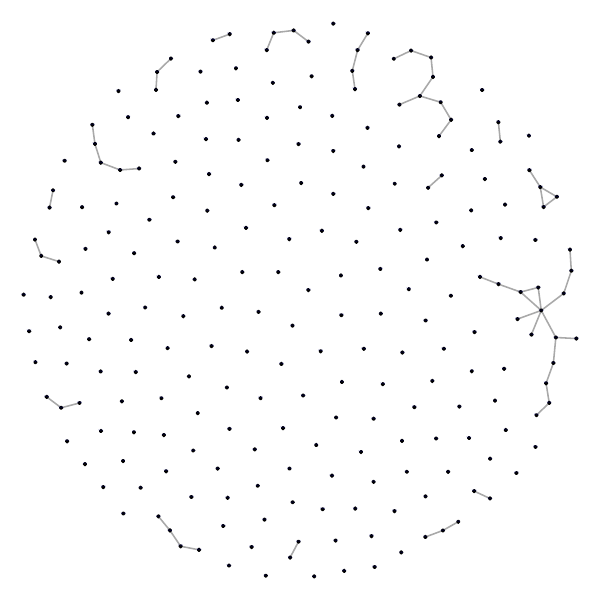}
    \caption{Temple company}
    \label{subfig:vil_40_temple_igraph}
\end{subfigure}
\hfill
\begin{subfigure}[b]{0.225\linewidth}
    \centering
    \includegraphics[width=\textwidth]{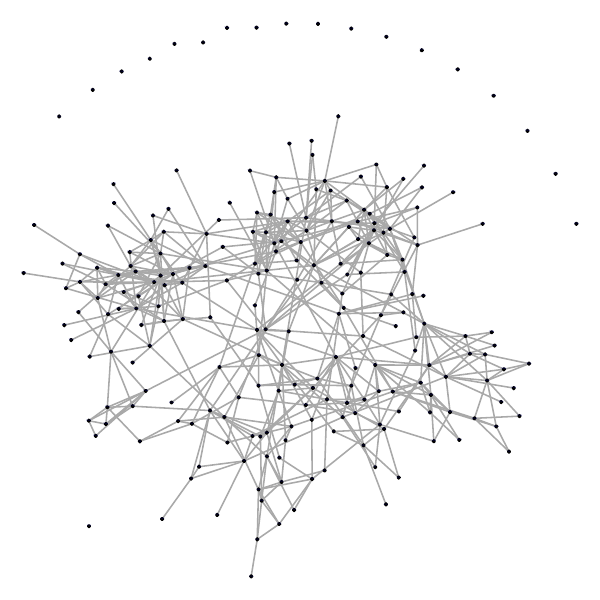}
    \caption{Visit come}
    \label{subfig:vil_40_visitcome_igraph}
\end{subfigure}
\hfill
\begin{subfigure}[b]{0.225\linewidth}
    \centering
    \includegraphics[width=\textwidth]{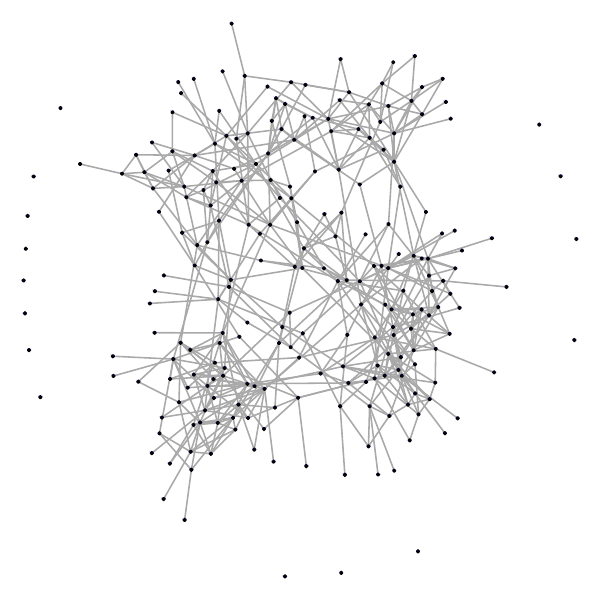}
    \caption{Visit go}
\end{subfigure}
\caption{Household network of village ID 40 in Indian village data \cite{banerjee2013Diffusion}. Each layer displays a socioeconomic relationship of 231 households. Note that the vertices have not been projected to the same location. For more details about the dataset, see Supplementary Material Section~\ref{sec_append:indian_vils}.}
\label{fig:indian_village_vil_40}
\end{figure}

Networks of this type present unique challenges in modelling their structures. 
To illustrate this, we consider Indian village network data \cite{banerjee2013Diffusion}, shown in Figure~\ref{fig:indian_village_vil_40}. In this dataset, all layers are defined on the same set of households, with each layer representing a distinct type of socioeconomic interaction and edges indicating the presence of interactions. The layers display varying levels of sparsity and structure. For example, the Temple company layer in Figure~\ref{subfig:vil_40_temple_igraph} is the sparsest layer, while the Visit come layer in Figure~\ref{subfig:vil_40_visitcome_igraph} is the densest. Some layers differ in structure, for example, the Rel layer in Figure~\ref{subfig:vil_40_rel_igraph} exhibits clear small-sized clusters, whereas the Nonrel layer  in Figure~\ref{subfig:vil_40_nonrel_igraph} appears to be randomly connected. To account for layer-wise heterogeneity, we develop a joint modelling framework based on graph limit models.

A graphon is a graph limit model for dense and vertex-exchangeable networks~\cite{lovasz2012large}. To accommodate sparse structures commonly observed in real-world networks, percolation can be applied to the underlying graphon. This leads to the scaled graphon framework~\cite{bickel2009nonparametric}, in which the generative mechanism is characterized by an underlying structural pattern along with a sparsity parameter that may vanish with network size.  
Introducing a sparsity parameter is crucial, as the sparsity level of a network determines, to first order, its sampling characteristics.
Specifically, networks with extremely few edges are tree-like in structure, whereas cycles become ubiquitous as edge density increases. Many network features, such as the counts of various cycles and other motifs, are primarily governed by the sparsity level, see for example, \cite{bickel2009nonparametric, bickel2011method, bollobas2007phase}.

For multilayer networks, sparsity levels often vary substantially, which further complicates network estimation. Applying a uniform sparsity scaling to all layers fails to capture these layer-specific structural differences and varying interaction costs. To address these challenges, we propose a {scaled set of graphons}, a multilayer network model that allows sparsity and structure to differ across layers. Central to our approach is the use of shared latent variables to represent a common vertex set, which facilitates identifying a shared group structure and joint modelling. This mechanism, incorporating layer-specific characteristics, captures coarse structural features and enables efficient joint analysis.

Graphon estimation has been extensively investigated in the single-layer setting. Several methods based on blockmodel approximation have been proposed, including vertex grouping by profile likelihood maximization \cite{olhede2014network,wolfe2013Nonparametric}, least-squares estimation \cite{gao2015rate}, and degree-based sorting \cite{chan2014Consistent}. In particular,  \cite{olhede2014network} introduced the network histogram, a nonparametric graphon estimator based on blockmodel approximations with equal-sized bins. This estimator has been further extended to accommodate missing edges \cite{gaucher2021Maximum}, non-rectangular block shapes \cite{verdeyme2024Hybrid}, and local linear estimation incorporating vertex covariates \cite{chandna2022Local}. 
Beyond blockmodel-based approaches, various alternative methodologies have been developed for single-layer network estimation, including universal singular value thresholding \cite{chatterjee2015Matrix} and neighbourhood smoothing \cite{zhang2017Estimating}.
While estimation methods for single-layer networks are well developed, those for multilayer networks remain relatively limited. 
Applying the aforementioned single-layer methods to each layer of a multilayer network independently leads to poor estimates for sparser layers, failing to take advantage of information available from other layers.
Moreover, existing multilayer approaches assume either a common sparsity parameter~\cite{chandna2020Nonparametric} or a common graphon~\cite{navarroGraphonaidedJointEstimation2022} across layers. These methods may not adequately capture layer-specific heterogeneity and are not based on blockmodel approximation. 
Recent studies have extended neighbourhood smoothing approaches to multilayer networks \cite{he2026Joint,guo2026Connection}, but they do not address settings where the layer-wise sparsity levels decay at different orders.
In addition, graphon-based approaches have been proposed for estimating time-varying networks \cite{lee2026Nonparametric}. This approach relies on the temporal ordering of layers, which is not available in general multilayer networks.

We shall therefore propose the {multi-network histogram} to jointly identify the underlying structure of a multilayer network defined on the same set of vertices. This method estimates a scaled set of graphons by leveraging the increased degrees of freedom available across layers while accounting for heterogeneous sparsity levels and structures. Specifically, it is constructed by applying blockmodel approximations to each layer based on the same group assignments and a shared bandwidth across all layers. This enables joint estimation of group assignments by maximizing the total profile likelihood, and it also allows even sparser layers to use smaller bandwidths, leading to a finer resolution of the network histogram. We establish theoretical properties regarding the weighted mean integrated squared error of the multi-network histogram using oracle group labels. These results characterize the effects of layer-specific sparsity and bandwidth choice on estimation and provide guidance for practical implementation. We also develop the homogeneous multi-network histogram, a special case in which all layers are generated from the same graphon but may differ in sparsity.

The rest of the paper is organized as follows. \Cref{sec:background} provides an overview of graphons and network histograms. \Cref{sec:method} introduces the scaled set of graphons and multi-network histograms, along with the corresponding algorithm. \Cref{sec:theory} presents the theoretical results. \Cref{sec:sim_study} demonstrates the performance of the proposed methods through simulations. In \Cref{sec:real_data}, we apply the method to a multilayer network of socio-economic relationships. \Cref{sec:discussion} concludes the paper with a brief discussion. All proofs from the theoretical study, along with additional tables and figures from simulations and real data analyses, are provided in the Supplementary Material.

\section{Background}
\label{sec:background}

In this section, we provide a brief overview of graphon models and a graphon estimation method known as {network histogram} for single-layer network \cite{olhede2014network}. These concepts are essential for understanding challenges and motivations behind our proposed methodology in \Cref{sec:method}.

We first introduce notations that will be used throughout the paper. We denote a finite index set $\{1,\ldots, n\}$ by $[n]$. The uniform distribution on the interval $[0,1]$ is denoted by $U(0,1)$. Bold capital letters represent matrices, for example, adjacency matrices $\bA=\{A_{ij}\}_{1\leq i,j\leq n}$, while bold lowercase letters denote vectors, for example, $\bv = (v_1,\ldots,v_n)$. We use superscript $(\ell)$ on matrices and vectors to denote that they correspond to the $\ell$th layer. For example, $\bA^{(\ell)}$ denotes the adjacency matrix of the $\ell$th layer of a multilayer network. The ``generalized" inverse of $c$ is defined by $c^{-}=c^{-1}$ if $c\neq 0$, otherwise, $c^{-}=0$.

\subsection{Graphon and Scaled Graphon}
\label{subsec:graphons}

{Graphons} and {scaled graphons} model vertex exchangeable networks, whose distribution is invariant under permutations of the vertex labels. We refer readers to \cite[Chapter 15 and 16]{izenman2023Network} for an introduction to graphons, scaled graphons, and their estimation, and to \cite{lovasz2012large} for a detailed mathematical review of graphons.

A {graphon} is defined as a symmetric measurable function, $\phi(x,y): [0,1]^2\to [0,1]$, representing the probability of an edge between the latent positions of vertices $x$ and $y$. An undirected, binary, and vertex exchangeable dense network can be represented by a random graphon, as outlined in the Aldous-Hoover Theorem \cite{aldous1981Representations,hoover1979Relations}. A graphon serves both as a limiting object for a sequence of networks and as a generative model for dense networks. 
However, since countable vertex exchangeable networks are almost surely either dense or empty  \cite[see Chapter 6.5.1]{crane2018probabilistic}, graphons are not well suited for modelling real-world networks that are commonly sparse.

A {scaled graphon}, also known as a sparse graphon, is used to model sparse networks \cite{bickel2009nonparametric,bollobas_riordan_2009}. Using a possibly vanishing term $\rho_n$, the scaled graphon accommodates sparsity in large networks, defined as $\phi(x,y) = \rho_n f(x,y) \in [0,1]$, where $\iint_{[0,1]^2}f(x,y)dxdy = 1$. Under this specification, $\rho_n$ controls the sparsity level. A dense network corresponds to $\rho_n = O(1)$, while a sparse network corresponds to $\rho_n=o(1)$ and $\rho_n=\omega(n^{-1})$, following the regime described in \cite{bollobas_riordan_2009}. To generate a network of size $n$ from a scaled graphon, \iid latent variables for $n$ vertices, $\xi_1,\ldots, \xi_n$, are sampled from $U(0,1)$. Then, for each pair $1\leq i < j \leq n$, edges are drawn from independent Bernoulli random variables with probabilities $p_{ij}=P(A_{ij}=1|\xi_i,\xi_j) = \rho_n f(\xi_i,\xi_j)$. In this sense, a network is instantiated from the scaled graphon. This can be interpreted as a percolation step, where edges in the dense underlying network defined by $f$ are randomly retained with probability $\rho_n$ to achieve the desired sparsity. In practice, $\rho_n$ should be interpreted as a scaling for finite networks, rather than as describing a mathematical limit for infinitely large networks.

\subsection{Network Histogram}
\label{subsec:nethist}

The network histogram is a nonparametric method for estimating a graphon \cite{olhede2014network}. It uses blockmodel approximations to a H\"{o}lder smooth graphon \cite{wolfe2013Nonparametric}, which involves grouping similar vertices and then computing blockwise edge densities within and between these groups. The vertex partition can be obtained using profile likelihood estimation \cite{olhede2014network,wolfe2013Nonparametric}. The group size, referred to as the bandwidth of the network histogram, is set equal across all groups, analogous to standard histograms. Using this consistent resolution reduces errors in estimating the underlying network structure.

We briefly formulate the network histogram for a single-layer network. Let $\bz = (z_1,\ldots, z_n) \in \mathcal{Z}_{k}$ be the vector of group labels, where $\mathcal{Z}_k\subset [k]^{n}$ is the set of all possible group labels of $n$ vertices into $k$ groups. When $n=hk+r$ for integer bandwidth $h$, the partition consists of $k-1$ groups of size $h$ and one group of size $h+r$.
Following \cite{wolfe2013Nonparametric}, the log-likelihood for the blockmodel approximation is given by
\begin{align*}
L(\bA; \bz, \bTheta)
= \sum_{i<j}\Big\{&A_{ij}\log \theta_{z_i z_j}
+ (1-A_{ij})\log(1-\theta_{z_i z_j})\Big\},
\end{align*}
where $\bTheta= \{\theta_{ab}\}_{a,b\in[k]}$ is the block probability matrix.
Given a group assignment $\bz$, the log-likelihood is maximized at $\theta_{z_i z_j}= \bar{A}_{z_iz_j}$, where $\bar{A}_{z_iz_j}$ is the edge density within the block containing vertices $i$ and $j$. 
The network histogram approach then identifies the optimal group label vector by maximizing the profile log-likelihood, $\widehat{\bz} = \argmax_{\bz\in\mathcal{Z}_{k}} L(\bA; \bz)$,
where
\begin{equation}
\label{eq:nethist}
\begin{aligned}
L(\bA; \bz)
:=\sum_{i<j}\Big\{&A_{ij}\log \bar{A}_{z_i z_j} + (1-A_{ij})\log (1-\bar{A}_{z_i z_j})\Big\}.
\end{aligned}
\end{equation} 
The optimal bandwidth $h^*$, which determines the group size, is selected as the one that minimizes the oracle mean integrated squared errors, see Theorem 1 of \cite{olhede2014network}. 
The selected $h^*$ increases as the number of vertices $n$ increases, as the network becomes sparser (smaller $\rho_n$), and as the underlying smooth graphon becomes flatter (smaller H\"{o}lder constant).
\begin{figure}[!tbp]
\centering
\begin{subfigure}[b]{0.45\linewidth}
    \centering
\includegraphics[width=\textwidth,trim={0.5cm 1.5cm 0.2cm 1cm},clip]{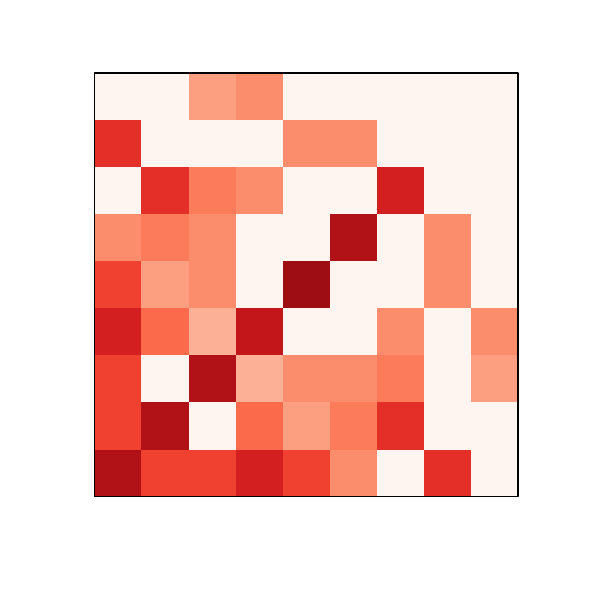}
    \caption{Borrow money (layer-wise)}
    \label{subfig:nethist_borrowmoney}
\end{subfigure}
% \hfill
\begin{subfigure}[b]{0.485\linewidth}
    \centering
    \includegraphics[width=\textwidth,trim={0.5cm 1.5cm 0.2cm 1cm},clip]{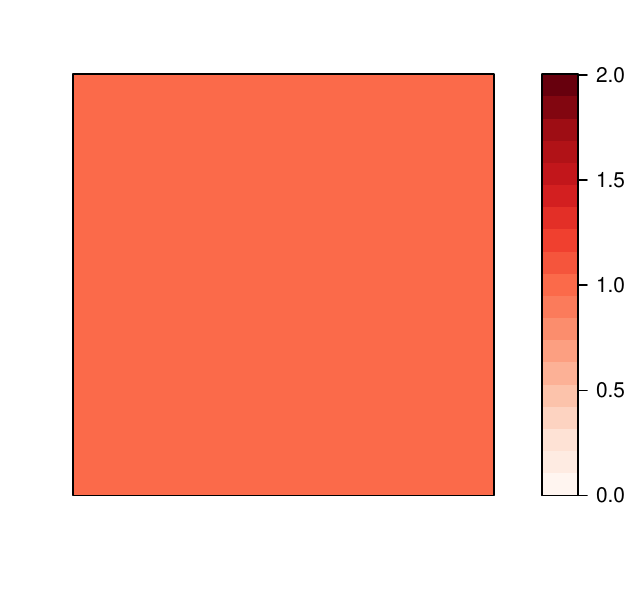}
    \caption{Temple company (layer-wise)}
    \label{subfig:nethist_templecompany}
\end{subfigure}
\caption{Heatmaps of fitted layer-wise network histograms from the household network of village ID 40 in Indian village data. Displayed legend values are $\{\widehat{f}^{(\ell)}(x,y)\}^{1/4}$ and are applied to both figures.}
\label{fig:layer_wise_nethist_indian_vil_40}
\end{figure}

Fitting network histograms to each layer separately in a multilayer network, however, has limitations. First, it produces different layer-wise bandwidths and group labels for each layer, failing to capture the joint structure of the multilayer network. 
Moreover, since the bandwidth in each layer depends on its sparsity, sparser layers are estimated with lower resolution and produce poor estimates. For example, the Borrow money layer in Figure~\ref{subfig:igraph_borrowmoney}  contains enough within-layer connections to separate groups into small blocks as in Figure~\ref{subfig:nethist_borrowmoney}, whereas the Temple company layer in Figure~\ref{subfig:vil_40_temple_igraph} has too few connections, causing it to collapse into a single large block as in Figure~\ref{subfig:nethist_templecompany}. To overcome these limitations, we introduce a joint estimation approach for multiplex networks in \Cref{sec:method}.

\section{Method}
\label{sec:method}

This section develops a nonparametric estimation framework for multiplex networks where all layers represent different types of interactions over an identical vertex set. Our approach is designed to accommodate layers that may differ in both their connection structures and levels of sparsity. We first define the notion of {a scaled set of graphons}. Then we introduce the {multi-network histogram}, which is a joint graphon estimation method using common latent variables across layers.

\subsection{Scaled Set of Graphons}
\label{subsec:SMG}

We introduce {a scaled set of graphons}, which is a modelling framework for multilayer networks that exhibit layer-wise heterogeneity in both sparsity and network structure. Its formal definition is as follows. 
\begin{definition}[A scaled set of graphons]
\label{def:scaled_multigraphon}
A set of non-negative integrable functions $\{f^{(\ell)}(x,y): [0,1]^2 \to \mbR_0^+, \ell\in[L]\}$ is called a {scaled set of graphons} if they are coupled with $L$ non-negative scaling sequences enumerated in index $n$, $\{\rho_n^{(1)},\dots, \rho_n^{(L)}\}$ where $\rho_n^{(\ell)} \in (0,1)$, so that $\rho_n^{(\ell)}f^{(\ell)}(x,y)$ defines a valid probability.
We set  $f^{(\ell)}(x,y) = f^{(\ell)}(y,x)$ for symmetry, and $\iint_{[0,1]^2} f^{(\ell)}(x,y)dxdy = 1$ for identifiability. 
\end{definition}

In this framework, a fixed set of functions, $\{f^{(1)},\ldots, f^{(L)}\}$, represents the underlying connection structures within each layer. Layer-specific sparsity parameters, $\{\rho_n^{(1)},\ldots, \rho_n^{(L)}\}$, are the proportions of within-layer connections, ranging from sparse to dense. This formulation facilitates modelling both diverse network patterns and different sparsity levels in a multilayer network.
As a special case of Definition~\ref{def:scaled_multigraphon}, when $L=1$, it reduces to the scaled graphon for single-layer network discussed in \Cref{subsec:graphons}.
When interpreting a scaled set of graphons, care must be taken in understanding the large sample properties of the multiple layers. As with standard scaled graphons for single-layer networks, see e.g.~\cite{bollobas_riordan_2009}, the model should be understood as a scaling when the networks consist of a large number of vertices. 

A scaled set of graphons can be used to generate multiplex networks. Each latent variable, $\xi_i \in [0,1]$, shared across all layers captures intrinsic characteristics of each vertex $i\in[n]$. 
Then, the edge probability between vertices $i$ and $j$ in the $\ell$th layer is determined by a scaled graphon corresponding to that layer.
Conditional on the latent variables, the edges are independent both within and across layers.
Example~\ref{ex:scaled_multigraphon} outlines the sampling scheme.
\begin{example}[Sampling an undirected multiplex network from a scaled set of graphon with the common latent variables]
\label{ex:scaled_multigraphon}
Let $\xi_1,\ldots, \xi_n$ be an \iid sample from $U(0,1)$.
Given a scaled set of graphon, we define the edge probability between the vertex positions $\xi_i$ and $\xi_j$ in the $\ell$th layer by $\rho_n^{(\ell)} f^{(\ell)}(\xi_i,\xi_j)
$. The adjacency matrix entries $\{A_{ij}^{(\ell)}\}_{i,j\in[n]}$ are modelled using independent Bernoulli trials,
\begin{align}
A_{ij}^{(\ell)}|\xi_i,\xi_j \sim \text{Bernoulli}\left(\rho_n^{(\ell)} f^{(\ell)}(\xi_i,\xi_j)\right),
\end{align}
for $i<j$. To ensure symmetry, we set $A_{ji}^{(\ell)} = A_{ij}^{(\ell)}$ for each $\ell\in[L]$.
\end{example} 

In addition, we consider a multiplex network in which all layers share the same interaction structure but differ only in sparsity.
In this case, we can generate a multiplex network using the procedure in Example~\ref{ex:scaled_multigraphon}, but applying the same function $f$ to all layers while allowing the layer-wise sparsity parameters $\rho_n^{(\ell)}$ to vary. Specifically, the adjacency matrix for the $\ell$th layer is generated independently given the latent variables as
\begin{equation}
\label{model:common_f}
\begin{aligned}
A_{ij}^{(\ell)}|\xi_i,\xi_j &\sim \textnormal{Bernoulli}\left(\rho_n^{(\ell)}f(\xi_i,\xi_j)\right).
\end{aligned}
\end{equation}

\subsection{Multi-Network Histogram}
\label{subsec:multinethist}

In this section, we propose a nonparametric estimator for a scaled set of graphons, referred to as {multi-network histogram}. It effectively estimates layer-specific functions using common latent variables, even when each layer's sparsity and interaction structure are highly heterogeneous. In addition, we introduce {homogeneous multi-network histograms} for layers that share an identical structure but have varying sparsity.

The multi-network histogram applies blockmodel approximations to each layer using the same group size and identical group assignments across all layers.
To begin with, we define $\mathcal{Z}_k\subset [k]^n$ as the set of all possible group label vectors of multi-network histograms with $k-1$ groups of size $h$ and a group of size $h+r$, when $n=hk+r$ and $0\leq r < h$. Let $R_{ab}(\bz)$ be the set of vertex pairs in the $(a,b)$-th block for the common group label vector $\bz= (z_1,\ldots, z_n)\in \mathcal{Z}_k$, given by
\begin{align*}
    R_{ab}(\bz) := \begin{cases}
        \{(i,j): i<j, z_i=z_j = a\}, & a=b,\\
        \{(i,j): z_i = a, z_j = b\}, & a\neq b.
    \end{cases}
\end{align*}
Denote the number of vertices in the $a$th group by $h_a = h + r \I(a=k)$ and  the number of possible edges in the $(a,b)$-th bin by $h_{ab}^2$, where
\begin{align*}
h_{ab}^2 := |R_{ab}(\bz)|= \begin{cases}
    \binom{h_a}{2} & a=b,\\
    h_a h_b & a\neq b.
    \end{cases}
\end{align*}
Then, the optimal group labels $\widehat{\bz} = (\widehat{z}_1,\ldots, \widehat{z}_n)$ are determined by maximizing the sum of layer-wise profile likelihoods \eqref{eq:nethist} under a common labelling across all layers. It is given by 
\begin{equation}
\label{eq:optim_pl_multinethist}
\begin{aligned}
\widehat{\bz} := \argmax_{\bz\in \mathcal{Z}_{k}}\sum_{\ell=1}^L\sum_{i<j}\Big\{&A_{ij}^{(\ell)}\log \bar{A}_{z_i z_j}^{(\ell)}
+ (1-A_{ij}^{(\ell)})\log (1-\bar{A}_{z_i z_j}^{(\ell)})\Big\},
\end{aligned}
\end{equation}
where the edge density of the bin containing vertices $i$ and $j$ in $\ell$th layer given $\bz$ is 
\begin{align*}
    \bar{A}_{z_i z_j}^{(\ell)}:=\bar{A}_{ab}^{(\ell)} (\bz)
    = \frac{1}{h_{ab}^2}\sum_{(i,j)\in R_{ab}(\bz)} A_{ij}^{(\ell)}.
\end{align*}
Once $\widehat{\bz}$ is obtained from \eqref{eq:optim_pl_multinethist}, 
the multi-network histogram is defined as
\begin{equation}
\label{eq:multi_nethist}
\widehat{f}_{\hat{z}_i \hat{z}_j} ^{(\ell)} := (\widehat{\rho}_n^{(\ell)})^{-}\bar{A}_{\hat{z}_i \hat{z}_j}^{(\ell)},
\ \ i,j\in[n], \ \ell\in[L],
\end{equation}
where $\widehat{\rho}_n^{(\ell)}= \binom{n}{2}^{-1}\sum_{i<j}A_{ij}^{(\ell)}$.
Each $\widehat{f}_{\hat{z}_i \hat{z}_j}^{(\ell)}$ provides an estimate of $f^{(\ell)}(\xi_i,\xi_j)$ for vertices $i$ and $j$ in $\ell$th layer.

Using the multi-network histogram for multiplex networks benefits particularly in estimating sparser layers compared to fitting a single-layer network histogram discussed in \Cref{subsec:nethist} to each layer.
Sparse layers contain few interactions, which makes it difficult to create sufficiently fine group partitions when applying the single-layer method to such layer. However, the multi-network histogram uses the common group labels obtained by maximizing the total profile likelihood \eqref{eq:optim_pl_multinethist}. This allows vertices to be partitioned more finely and improves blockmodel approximations by leveraging interaction information across all layers. 
As a result, the overall estimation error of the proposed joint method decreases, which is further explored in \Cref{sec:sim_study}.

\begin{remark}
To the best of our knowledge, this work is among the first to propose a nonparametric method for jointly estimating multilayer networks while allowing heterogeneity in both sparsity levels and network structures. 
\cite{chandna2020Nonparametric} assumes a common sparsity parameter across all layers, whereas our framework allows layer-specific sparsity parameters through the scaled set of graphons.  Recent neighbourhood smoothing approaches \cite{he2026Joint,guo2026Connection} identify similar layers using adjacency matrices $\mathbf{A}^{(\ell)}$ or preliminary connection probability estimates and borrow information across such layers, but do not model sparsity levels that decay at different rates.  In contrast, our multi-network histogram method explicitly incorporates layer-specific sparsity through the scaled graphon representation and accommodates heterogeneous graphon structures across layers.
\end{remark}

\begin{remark}
\label{remark:BM_approx}
The blockmodel approximation of graphons for multilayer networks differs from {multilayer stochastic blockmodels} \cite{han2015Consistent, paul2016Consistent} in several ways. Our approach targets nonparametric estimation under the common latent variable assumption, rather than vertex community detection \cite{paul2016Consistent, paul2020Spectral}, layer class partitioning \cite{donnat2018Tracking}, or parametric stochastic blockmodel estimations \cite{han2015Consistent}. Vertices are grouped only to estimate histogram bin heights, providing nonparametric estimates of H\"{o}lder-$\alpha$ smooth graphons, not to identify vertex communities or estimate blockmodel probabilities. This distinguishes our work from existing stochastic blockmodel-based methods for community detection in single-layer networks \cite{abbe2018Community, holland1983Stochastic, lee2019review} and multiplex networks \cite{chen2022global, fan2022alma, fu2023Profile, paul2016Consistent, paul2020Spectral, qing2024Estimating, zhang2024Consistent}.
\end{remark}

\begin{remark}
The use of a common labelling $\mathbf{z}$ across all $L$ layers in Algorithm~1 is not an artificial constraint. In a graph limit, a measure-preserving transformation is the analogue of a permutation of the vertices, so the same transformation must apply to every
layer \cite[Remark 3.3]{ganguly2025Multiplexons}.
The scaled graphon model in Definition~\ref{def:scaled_multigraphon}, with shared latent variables $\xi_i$, follows this framework. 
A common transformation is required across layers, but it does not have to be unique, and multiple labellings $\mathbf{z}$ are equally valid.
\end{remark}

Lastly, we introduce the {homogeneous multi-network histogram} for layers sharing a common function $f$, while sparsity levels $\rho_n^{(\ell)}$ may differ. 
It involves computing the weighted average of layer-specific estimates $\widehat{f}^{(\ell)}$, using weights proportional to the sparsity level of each layer.
Specifically, the weights are defined as
\begin{align}
\label{eq:sol_co}
    \tau_{\ell} = \frac{\widehat{\rho}_n^{(\ell)}}{\sum_{l=1}^L \widehat{\rho}_n^{(l)}},
\end{align}
and the weighted average of ${\widehat{f}_{z_i z_j}^{(\ell)}= { (\widehat{\rho}_n^{(\ell)}})^{-}\bar{A}_{z_i z_j}^{(\ell)}}$ for group labels $\bz$ is 
\begin{align*}
\widehat{f}_{z_i z_j} 
= \sum_{\ell=1}^L 
\tau_{\ell}\widehat{f}_{z_i z_j}^{(\ell)}
= \frac{\sum_{\ell=1}^L \bar{A}_{z_i z_j}^{(\ell)}}{\sum_{\ell=1}^L \widehat{\rho}_n^{(\ell)}}.
\end{align*}
Since $\widehat{\rho}_n^{(\ell)}\widehat{f}_{z_i z_j}$ serves as an estimate of  $\rho_n^{(\ell)}f(\xi_i,\xi_j)$ in \eqref{model:common_f}, 
the profile likelihood maximization problem \eqref{eq:optim_pl_multinethist} becomes
\begin{equation}
\label{eq:ll_homogeneous_multinethist}
\begin{aligned}
\widehat{\bz} = \argmax_{\bz \in \mathcal{Z}_k}&\sum_{\ell=1}^L\sum_{i<j}\Big\{
A_{ij}^{(\ell)}\log\left(\widehat{\rho}_n^{(\ell)}\widehat{f}_{z_i z_j}\right) + (1-A_{ij}^{(\ell)})\log\left(1-\widehat{\rho}_n^{(\ell)}\widehat{f}_{z_i z_j}\right)
\Big\},
\end{aligned}
\end{equation}
which determines the optimal group labels.
Given these labels, the homogeneous multi-network histogram is defined as
\begin{equation}
\label{eq:multinethist_homogenous}
\widehat{f}_{\hat{z}_i \hat{z}_j}  = \frac{\sum_{\ell=1}^L \bar{A}_{\hat{z}_i \hat{z}_j}^{(\ell)}}{\sum_{\ell=1}^L \widehat{\rho}_n^{(\ell)}}.
\end{equation}
In this homogeneous setting, using \eqref{eq:multinethist_homogenous} achieves lower variance than \eqref{eq:multi_nethist}. This is because taking an average typically makes the variance smaller, and the chosen weights \eqref{eq:sol_co} reduce the influence of the noisier estimates from the sparser layers.
For further discussions on weight selection, we refer to Section~\ref{sec_append:more_algorithm} in Supplementary Material.

\subsection{Algorithm: Multi-network histograms}
\label{subsec:algo_multinethist}

\begin{algorithm}[t]
\caption{Multi-network histogram in \texttt{nethist} \R{} package}
\label{alg:multi_nethist_combined_short}
\begin{algorithmic}[1]
    \STATE Input $L$ symmetric adjacency matrices of identical size $\mathcal{A} = (\bA^{(1)}, \ldots, \bA^{(L)})$, the number of groups $k$, a bandwidth $h$, an initial group label vector $\bz$, and a flag \texttt{homogeneous} (TRUE/FALSE).
    \STATE Compute estimates of layer-wise sparsity parameters, $\widehat{\rho}_n^{(\ell)} = \sum_{i<j} A_{ij}^{(\ell)} / \binom{n}{2}$ for $\ell\in [L]$.
    \STATE Update the group label vector using a greedy search algorithm.
    
    i) Randomly select two vertices and swap their group labels to form $\bz_{new}$ from $\bz$. 
    
    ii) Update $\bz \leftarrow \bz_{new}$ if $L(\mathcal{A},\bz_{new})> L(\mathcal{A},\bz)$ for $L(\mathcal{A},\bz)$ is the log-profile likelihood defined in \eqref{eq:optim_pl_multinethist} when \texttt{homogeneous} = FALSE, and \eqref{eq:ll_homogeneous_multinethist} when \texttt{homogeneous} = TRUE, for given $\bz$.
    \STATE Repeat Step 3 until either no changes occur during the pre-specified number of consecutive iterations or the maximum number of iterations is reached. Then set $\widehat{\bz} \leftarrow {\bz}$.
    \STATE Given $\widehat{\bz}$, compute the multi-network histogram estimates using \eqref{eq:multi_nethist} when \texttt{homogeneous} = FALSE, 
    and \eqref{eq:multinethist_homogenous} when \texttt{homogeneous} = TRUE.
\end{algorithmic}
\end{algorithm}

We provide an algorithm for both the multi-network histogram and the homogeneous multi-network histogram. Algorithm~\ref{alg:multi_nethist_combined_short} illustrates the procedure for identifying the optimal group labels and computing the corresponding block estimates for a given bandwidth.
Finding the group label vector that maximizes \eqref{eq:optim_pl_multinethist} or \eqref{eq:ll_homogeneous_multinethist} is a combinatorial optimization problem, making it infeasible to guarantee a global optimum within reasonable time. The multi-network histogram in \texttt{nethist} \R{} package is implemented using a greedy search with a data-driven initialization \cite{Rnethist}. Specifically, the group labels are initialized by performing spectral ordering of the nodes based on the node similarity matrix of the densest layer. The greedy search then iteratively permutes the labels of randomly selected vertex pairs, evaluating the profile likelihood to decide whether to update the group assignments.
This step is repeated until some stopping criterion is met, after which the resulting group labels $\widehat{\bz}$ are used to compute the multi-network histogram of the observed network.

Since Algorithm~\ref{alg:multi_nethist_combined_short} requires a bandwidth as input, we provide a data-driven bandwidth selection procedure in Algorithm~\ref{alg:bw_sel}. 
The bandwidth is chosen to minimize the upper bound of weighted mean integrated squared errors for multi-network histogram, following Theorems~\ref{thm:WMISE} and \ref{thm:MISE_homogeneous}.
Assuming the H\"{o}lder exponent $\alpha=1$, the optimal bandwidth depends on the sparsity $\rho_n^{(\ell)}$ and the layer-wise gradient magnitude $M_{\ell}$. 
The sparsity is estimated by $\widehat{\rho}_n^{(\ell)}$, and the gradients $M_{\ell}$ are estimated using layer-wise rank-1 graphon approximations based on degrees, following \cite{olhede2014network}. The details are outlined in Steps 2–4 of Algorithm~\ref{alg:bw_sel}.
Once these layer-wise estimates are obtained, the estimated bandwidth $\widehat{h}$ is a plug-in estimate of the theoretical optimal bandwidth.

\begin{algorithm}[!t]
\caption{Data-driven bandwidth selection for {multinethist}}
\label{alg:bw_sel}
\begin{algorithmic}[1]
    \STATE Input $L$ symmetric adjacency matrices of identical size $(\bA^{(1)}, \ldots, \bA^{(L)})$ and a flag \texttt{homogeneous} (TRUE/FALSE). 
    \STATE Compute degrees of $\bA^{(\ell)}$ for each $\ell$, then sort their entries to get an ordered degree vector $\bfd^{(\ell)}=(d_{1}^{(\ell)}, \ldots, d_{n}^{(\ell)})$. 
    \STATE Estimate the slope of $\bfd^{(\ell)}$ over indices $\lfloor n/2 \rfloor \pm \lfloor c\sqrt{n} \rfloor$ for some $c$, such as $\min(4,\sqrt{n}/8)$, by the least square method:
    $$
    (\widehat{m}^{(\ell)},\widehat{b}^{(\ell)}):=\argmin_{m,b} \sum_{j=\lfloor -c\sqrt{n}\rfloor}^{\lfloor c\sqrt{n}\rfloor}\left\{d_{\lfloor n/2 \rfloor + j}^{(\ell)} - (j m + b)\right\}^2.
    $$
    \STATE Estimate ${M}_{\ell}$ by a plug-in estimator,
    $$
    \widehat{M}_{\ell}:= \sqrt{2}n\left\{(\|\bfd^{(\ell)}\|_2^2)^{-}\right\}^2(\widehat{\rho}_n^{(\ell)})^{-}\{(\bfd^{(\ell)})^{\T}\bA^{(\ell)}\bfd^{(\ell)}\}\widehat{m}^{(\ell)}\widehat{b}^{(\ell)}.
    $$
    \STATE Use plug-in estimates of $h^*$ from \eqref{eq:oracle_bandwidth_WMISE} when \texttt{homogeneous}= FALSE, and from \eqref{eq:oracle_bandwidth_MISE_homogeneous} when \texttt{homogeneous}= TRUE. That is,
    $$
    \widehat{h} = \begin{cases}
    \sqrt{n}\left\{2L^{-1}\sum_{\ell=1}^L (\widehat{M}^{(\ell)})^2\widehat{\rho}_n^{(\ell)}\right\}^{-1/4}, & \text{\texttt{homogeneous}= FALSE},\\
    \sqrt{n}\left[2\{L^{-1}\sum_{\ell=1}^L (\widehat{M}^{(\ell)})^2\}\sum_{\ell=1}^L \widehat{\rho}_n^{(\ell)}\right]^{-1/4},& \text{\texttt{homogeneous}= TRUE}.
    \end{cases}
    $$
    Use its nearest integer greater than 1 as the bandwidth for multi-network histograms.
\end{algorithmic}
\end{algorithm}

\section{Theoretical study}
\label{sec:theory}

This section presents the theoretical properties of the multi-network histogram estimators proposed in \Cref{subsec:multinethist}. 
We establish an upper bound of the weighted mean integrated squared error (WMISE) for the multi-network histograms using the {oracle group labels}. The optimal bandwidth is then derived by minimizing the upper bound, and we show that it is largely influenced by denser layers. 
This section also discusses alternative bandwidth selection criteria based on the mean integrated squared error (MISE) of multi-network histograms and explains why they are less suitable than our proposed approach. 

\subsection{Main Results}

We begin by defining the oracle group labels, following \cite{olhede2014network}. For vertex latent variables $\xi_1,\ldots, \xi_n$ and bandwidth $h$, denote the oracle group label vector for a multi-network histogram with $k$ groups by $\widetilde{\bz}= (\widetilde{z}_1, \ldots, \widetilde{z}_n)^{\T}\in \mathcal{Z}_k$, where
\begin{equation}
    \label{def:oracle_z}
    \widetilde{z}_i = \min\left(\left\lceil\frac{\text{rank}(\xi_i)}{h}\right\rceil , k\right).
\end{equation}
The oracle labels defined in \eqref{def:oracle_z} represent the optimal grouping of vertices for multi-network histograms when the latent variables are known. Under the H\"{o}lder smoothness assumption, scaled graphon values at nearby latent positions do not differ much.
Consequently, the bin heights of multi-network histograms closely approximate the true function values, so these labels produce a high-quality grouping that achieves the best possible approximation. 

Using the oracle labels \eqref{def:oracle_z}, the $(a,b)$-th block probability matrix for the $\ell$th layer can be defined as
\begin{equation}
    \label{def:oracle_Theta}
    \begin{aligned}
        \left(\bar{A}_{ab}^{(\ell)}\right)^*(\widetilde{\bz}) 
&=       \frac{1}{h_{ab}^2}\sum_{(i,j)\in R_{ab}(\widetilde{\bz})} A_{ij}^{(\ell)}.
\end{aligned}
\end{equation}
Accordingly, the oracle estimator is defined as 
\begin{equation}
\label{eq:oracle_f}
\begin{aligned}(\widehat{f}^{(\ell)})^{*}(x,y;h)
=& ({\rho}_n^{(\ell)})^{-1}
\left(\bar{A}_{\min(\lceil \frac{nx}{h}\rceil,k)\min(\lceil \frac{ny}{h}\rceil,k)}^{(\ell)}\right)^*(\widetilde{\bz}), 
\end{aligned}
\end{equation}
where ${\rho}_n^{(\ell)}\in (0,1)$ for $\ell\in[L]$ and $x,y\in[0,1]$.

To evaluate the estimation performance of the oracle multi-network histogram in \eqref{eq:oracle_f}, we employ the WMISE, a standard criterion used in nonparametric statistics \cite{tsybakov2009Introducation}.
Denote the vector of layer-wise oracle graphon estimates by $\widehat{\bff}^* = \{(\widehat{f}^{(\ell)})^{*}\}_{\ell=1}^L$.
The WMISE of $\widehat{\bff}^*$ is defined as
\begin{equation}
\begin{aligned}
\label{eq:oracle_WMISE}
&\text{WMISE}(\widehat{\bff}^*) := 
\E \left(\inf_{\sigma} \iint_{[0,1]^2}\sum_{\ell=1}^L w_{\ell}^{*}\left|(\widehat{f}^{(\ell)})^*(\sigma(x),\sigma(y);h)-f^{(\ell)}(x,y)\right|^2 dxdy\right),
\end{aligned}
\end{equation}
where the weights $w_{\ell}^{*} = \rho_n^{(\ell)}/\sum_{l=1}^L \rho_n^{(l)}$ are proportional to the true layer-wise sparsity $\rho_n^{(\ell)}$ of scaled graphons, and the infimum is taken over all measure-preserving bijection maps $\sigma: [0,1] \to [0,1]$. These weights give more importance to denser layers, which contain more information about interaction patterns, and downweight sparser layers  in the total error.

We study the upper bound of WMISE under the following conditions.
\begin{condition}
\label{cond:multinethist}
For a scaled set of graphons $\{f^{(\ell)}(x,y)\}$ for $\ell \in [L]$ and its multi-network histogram, we assume:
\begin{enumerate}
\item[i)] the bandwidth $h\to\infty$ as $n\to\infty$, but slower than $n$,
\item[ii)] $f^{(\ell)}(x,y)$ is $\text{H\"{o}lder}^{\alpha}(M_{\ell})$ for $\alpha\in(0,1]$, meaning that for all $(x,y), (x',y')\in [0,1]^2$, it holds that $|f^{(\ell)}(x,y)- f^{(\ell)}(x',y')|\leq M_{\ell}\|(x,y)-(x',y')\|_2^{\alpha}$.
\end{enumerate}
\end{condition}
These conditions are widely used in network histogram-type estimators, see for example \cite{chandna2022Local,olhede2014network}.
The first condition ensures that the fitted multi-network histogram approximates the true function, while  
the second condition requires the true graphons to be smooth.

Theorem~\ref{thm:WMISE} provides an upper bound for the WMISE of the oracle multi-network histogram in \eqref{eq:oracle_f} and guides the choice of the optimal bandwidth. 
As a reference point, Theorem 1 in \cite{olhede2014network} is a special case of Theorem~\ref{thm:WMISE} with $L=1$ and $\alpha=1$.
All proofs of the theoretical results in this section are provided in Supplementary Material Section~\ref{sec_append:proofs_main}.

\begin{theorem}[WMISE of Multi-Network Histogram]
\label{thm:WMISE}
Assume that Condition~\ref{cond:multinethist} holds. Then, the WMISE defined in  \eqref{eq:oracle_WMISE} satisfies the following upper bound: 
\begin{align*}
\text{WMISE}(\widehat{\bff}^*) 
&\leq 
\frac{1}{\overline{\rho_n}}
\left\{
2^{\alpha}\left(\frac{h}{n}\right)^{2\alpha}\overline{M^2\rho_n}
+ 
\frac{2\overline{M^2\rho_n}}{(2n)^{\alpha}}
+ 
\frac{1}{h^2}
\right\}\{1+o(1)\},
\end{align*}
where $\overline{\rho_n} = L^{-1}\sum_{\ell=1}^{L} \rho_n^{(\ell)}$ and $\overline{M^2\rho_n} = L^{-1}\sum_{\ell=1}^{L} M_{\ell}^2\rho_n^{(\ell)}$. The right-hand side of the above inequality is minimized at 
\begin{equation}
\label{eq:oracle_bandwidth_WMISE}
h^* = n^{\alpha/(\alpha+1)}\left(\alpha2^{\alpha}\overline{M^2\rho_n}\right)^{-1/(2\alpha+2)}.
\end{equation}
Then, it holds that
\begin{equation}
\label{ineq:WMISE_oracle}
\begin{aligned}
\text{WMISE}(\widehat{\bff}^*)\Big|_{h=h^*}
&= O\left({\left\{\binom{n}{2}\overline{\rho_n}\right\}^{-\alpha/(\alpha+1)}}\right).
\end{aligned}
\end{equation}
\end{theorem}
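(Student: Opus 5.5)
The plan is to bound the weighted error layer by layer, using the standard bias–variance split of \cite{olhede2014network}. We then sum over layers with the weights $w_\ell^*$; the weights cancel the $(\rho_n^{(\ell)})^{-1}$ factors in the variance. First fix the measure-preserving map $\sigma$ that sends $x$ to the position associated with the order statistic of the latent variables. Then the oracle histogram in \eqref{eq:oracle_f} evaluated at $(\sigma(x),\sigma(y))$ compares the bin average of $A^{(\ell)}_{ij}$ over the block containing $(x,y)$ with $f^{(\ell)}(x,y)$. Bounding the infimum by this choice, for each $\ell$ we write
\[
(\widehat f^{(\ell)})^*-f^{(\ell)} = \bigl\{(\widehat f^{(\ell)})^*-\bar f^{(\ell)}_{ab}\bigr\} + \bigl\{\bar f^{(\ell)}_{ab}-f^{(\ell)}\bigr\},
\]
where $\bar f^{(\ell)}_{ab}=h_{ab}^{-2}\sum_{(i,j)\in R_{ab}(\widetilde\bz)} f^{(\ell)}(\xi_i,\xi_j)$. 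Conditionally on $\bxi$, the first term has mean zero, so the cross term vanishes after taking expectations, and the mean squared error splits into variance plus squared bias.

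For the variance, conditionally on $\bxi$ the $A^{(\ell)}_{ij}$ in a block are independent Bernoulli variables. The variance of the bin average is therefore at most $h_{ab}^{-4}\sum \rho_n^{(\ell)} f^{(\ell)}(\xi_i,\xi_j)$. After multiplying by $(\rho_n^{(\ell)})^{-2}$ and integrating over the $k^2$ cells, each of area roughly $(h/n)^2$, and using $h_{ab}^2\approx h^2$ together with $\iint f^{(\ell)}=1$, the integrated variance is $(\rho_n^{(\ell)})^{-1}h^{-2}(1+o(1))$. Here $h\to\infty$ and the single oversized last group contributes only to the $o(1)$. Multiplying by $w_\ell^*=\rho_n^{(\ell)}/(L\overline{\rho_n})$ and summing gives $h^{-2}/\overline{\rho_n}$. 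This is exactly the third term of the bound, and it shows why the weighting makes layer sparsity enter only through $\overline{\rho_n}$.

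For the bias, we use the H\"older condition: $|f^{(\ell)}(\xi_i,\xi_j)-f^{(\ell)}(x,y)|\le M_\ell\|(\xi_i-x,\xi_j-y)\|^\alpha$. The displacement is split into two parts: the within-bin spread, of order $h/n$ in each coordinate, and the deviation of the order statistics $\xi_{(r)}$ from $r/n$. Using $(u+v)^2$-type inequalities and $\|(s,t)\|^{2\alpha}\le (s^2+t^2)^\alpha$, the within-bin part gives $2^\alpha (h/n)^{2\alpha}M_\ell^2$. For the order-statistic part, $\E|\xi_{(r)}-r/(n+1)|^2\le 1/(4n)$, and Jensen's inequality applied to the concave map $t\mapsto t^\alpha$ gives $2M_\ell^2(2n)^{-\alpha}$. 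Weighting by $w_\ell^*(\rho_n^{(\ell)})^{0}$ yields $\overline{M^2\rho_n}/\overline{\rho_n}$ times these two terms. We expect this step to be the main obstacle: the constants must be tracked carefully, and the randomness of $\bxi$ must be kept from interacting badly with the minimum over $\sigma$. Following \cite{olhede2014network}, we handle it by conditioning on $\bxi$ throughout and only taking expectations over order-statistic deviations at the end.

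Finally, we minimise $g(h)=2^\alpha h^{2\alpha}n^{-2\alpha}\overline{M^2\rho_n}+h^{-2}$. Setting $g'(h)=0$ gives $\alpha 2^{\alpha}\overline{M^2\rho_n}\,h^{2\alpha+2}=n^{2\alpha}$, which yields \eqref{eq:oracle_bandwidth_WMISE}. Substituting back, both terms are of order $(\overline{M^2\rho_n})^{1/(\alpha+1)} n^{-2\alpha/(\alpha+1)}$. Treating the $M_\ell$ as constants gives $\overline{M^2\rho_n}\asymp\overline{\rho_n}$, so dividing by $\overline{\rho_n}$ gives order $(n^2\overline{\rho_n})^{-\alpha/(\alpha+1)}$, which is \eqref{ineq:WMISE_oracle}. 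The $(2n)^{-\alpha}$ term is of the same or lower order and is absorbed. Condition~\ref{cond:multinethist}(i) holds at $h^*$ provided $n^2\overline{\rho_n}\to\infty$.
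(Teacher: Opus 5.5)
The gap is in the bias step, where you expected trouble. The constants you state do not follow from the argument you describe, and they matter because $h^*$ is read off from the coefficient of $(h/n)^{2\alpha}$. Take the pointwise split $|f^{(\ell)}(\xi_{(i)},\xi_{(j)})-f^{(\ell)}(x,y)|\le M_\ell D_{ij}^{\alpha}+M_\ell(\sqrt{2}\,h/n)^{\alpha}$, with $D_{ij}=\|(\xi_{(i)}-\tfrac{i}{n+1},\,\xi_{(j)}-\tfrac{j}{n+1})\|_2$. Followed by $(u+v)^2\le 2u^2+2v^2$, it gives $2M_\ell^2\{2(n+2)\}^{-\alpha}+2^{\alpha+1}M_\ell^2(h/n)^{2\alpha}$. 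The order-statistic constant comes out as $2$, but the within-bin constant is $2^{\alpha+1}$, not $2^{\alpha}$. Replacing the $2$'s by $1+\epsilon$ and $1+\epsilon^{-1}$ cannot recover both constants when the two terms are of the same order. That happens for $h\asymp\sqrt{n}$, which Condition~\ref{cond:multinethist} allows and where $h^*$ sits for $\alpha=1$ with dense layers.

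The missing idea is the orthogonal split on each cell that the paper uses in Lemma~\ref{lem:layer_MISE}. Write $c^{(\ell)}_{ab}$ for your empirical block average and $m^{(\ell)}_{ab}=|\omega_{ab}|^{-1}\iint_{\omega_{ab}}f^{(\ell)}$ for the cell mean (the paper's $\bar f^{(\ell)}_{ab}$). Since $f^{(\ell)}-m^{(\ell)}_{ab}$ integrates to zero over $\omega_{ab}$, there is no cross term:
\[
\iint_{\omega_{ab}}\E\bigl\{c^{(\ell)}_{ab}-f^{(\ell)}(x,y)\bigr\}^2dxdy
=|\omega_{ab}|\,\E\bigl(c^{(\ell)}_{ab}-m^{(\ell)}_{ab}\bigr)^2
+\iint_{\omega_{ab}}\bigl\{m^{(\ell)}_{ab}-f^{(\ell)}(x,y)\bigr\}^2dxdy .
\]
The last integral is at most $|\omega_{ab}|M_\ell^2(\sqrt{2}\,h/n)^{2\alpha}$ by the cell diameter, which gives exactly $2^{\alpha}$. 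The wider last row and column have total area $O(h/n)$ and go into the $o(1)$.

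The first term equals $\var(c^{(\ell)}_{ab})+(\E c^{(\ell)}_{ab}-m^{(\ell)}_{ab})^2$. Each piece is at most $M_\ell^2(2n)^{-\alpha}\{1+o(1)\}$, using your order-statistic estimate plus an $O(n^{-\alpha})$ Riemann-sum error between the grid average $h_{ab}^{-2}\sum f^{(\ell)}(\tfrac{i}{n+1},\tfrac{j}{n+1})$ and $m^{(\ell)}_{ab}$. This yields the $2M_\ell^2(2n)^{-\alpha}$ term. Applying Minkowski to $\E(c^{(\ell)}_{ab}-m^{(\ell)}_{ab})^2$ directly even gives constant $1$.

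With this repair, your argument is a valid and somewhat more elementary alternative to the paper's. The paper uses an unconditional bias--variance split and imports Lemmas 2--5 of \cite{olhede2014network} for the mean and variance of $(\bar A^{(\ell)}_{ab})^*$. That variance contains covariances between pairs sharing latent variables, and the paper's two $M_\ell^2(2n)^{-\alpha}$ pieces come from the mean error and that covariance remainder. Conditioning on $\bxi$ reduces your variance to that of independent Bernoullis and moves all latent-position randomness into $\E(c-m)^2$. By the law of total variance these are the same quantities, regrouped. Your variance term, the weighting to $h^{-2}/\overline{\rho_n}$, and the optimisation coincide with the paper's.

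Two smaller points. First, the data-dependent $\sigma$ is unnecessary: the comparison you describe is precisely $\sigma=\mathrm{id}$, which is what the paper uses. Second, and shared with the paper, absorbing the $(2n)^{-\alpha}$ term into $O(\{\binom{n}{2}\overline{\rho_n}\}^{-\alpha/(\alpha+1)})$ is not automatic. The ratio of the two is of order $(n^{1-\alpha}\overline{\rho_n})^{\alpha/(\alpha+1)}$. This is bounded for $\alpha=1$ but diverges for $\alpha<1$ when $\overline{\rho_n}$ does not vanish, so your ``same or lower order'' claim needs $n^{1-\alpha}\overline{\rho_n}=O(1)$.
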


The optimal bandwidth \eqref{eq:oracle_bandwidth_WMISE} is mainly determined by the sparsity level of the densest layer through the order of $\overline{M^2\rho_n}$ and is applied uniformly to all layers. 
This common bandwidth is used when a shared vertex partition is estimated jointly across all layers via profile log-likelihood \eqref{eq:optim_pl_multinethist}.
By leveraging information from other layers, the joint estimation improves the accuracy of label estimation and enables higher-resolution estimation even in sparser layers.

\begin{remark}
In practice, the bandwidth can be computed with $\alpha=1$, following \cite{olhede2014network}. Setting $\alpha=1$ corresponds to the Lipschitz condition, under which $M_\ell$ bounds the maximum gradient of the graphon. In this case, $M_\ell$ can be estimated in a relatively simple manner, as outlined in Algorithm~\ref{alg:bw_sel}.
\end{remark}

Next, we derive an upper bound of the homogeneous multi-network histograms \eqref{eq:multinethist_homogenous} with the oracle labels, given by
\begin{equation}
\label{eq:oracle_f_homo}
\begin{aligned}
\widehat{f}^{*}(x,y;h) := 
\frac{\sum_{\ell=1}^L \left(\bar{A}_{\min(\lceil nx/h\rceil,k)\min(\lceil ny/h\rceil,k)}^{(\ell)}\right)^*}{\sum_{\ell=1}^L {\rho}_n^{(\ell)}}. 
\end{aligned}
\end{equation}
Since all layers share the same estimate $\widehat{f}^*$ and $\sum_{\ell=1}^L w_{\ell}^*=1$, the WMISE of $\widehat{\bff}^*=(\widehat{f}^*,\ldots, \widehat{f}^*)$ simplifies to 
\begin{align*}
\text{WMISE}(\widehat{\bff}^*) &= \E \left(\inf_{\sigma} \iint_{[0,1]^2}\sum_{\ell=1}^L w_{\ell}^* \left|\widehat{f}^*(\sigma(x),\sigma(y);h)-f(x,y)\right|^2 dxdy\right)\\
&= \E \left(\inf_{\sigma} \iint_{[0,1]^2} \left|\widehat{f}^*(\sigma(x),\sigma(y);h)-f(x,y)\right|^2 dxdy\right).%\\
% &= \text{MISE}(\widehat{f}^*)
\end{align*}
Therefore, 
analyzing the WMISE of $\widehat{\bff}^*$ is equivalent to
finding an upper bound of the following MISE of $\widehat{f}^*$:
\begin{equation}
    \label{eq:oracle_MISE_homo}
    \begin{aligned}
    &\text{MISE}(\widehat{f}^*) := \E \left(\inf_{\sigma} \iint_{[0,1]^2}\left|\widehat{f}^*(\sigma(x),\sigma(y);h)-f(x,y)\right|^2 dxdy\right).
    \end{aligned}
\end{equation}
The following theorem presents an upper bound for the MISE and the optimal bandwidth when all layers share the function $f$, while sparsity levels differ across layers.

\begin{theorem}[MISE of Homogeneous Multi-Network Histogram]
\label{thm:MISE_homogeneous}
Assume that Condition~\ref{cond:multinethist} i) holds, and that the common function $f$ is $\text{H\"{o}lder}^{\alpha}(M)$ for $\alpha\in(0,1]$. Then, the MISE defined in \eqref{eq:oracle_MISE_homo} satisfies the following bound:
\begin{align*}
\text{MISE}(\widehat{f}^*) \leq 
M^2\left\{2^{\alpha}\left(\frac{h}{n}\right)^{2\alpha} + \frac{2}{(2n)^{\alpha}} + \frac{1}{M^2 L\overline{\rho_n}h^2}\right\}\left\{1+o(1)\right\},
\end{align*}
where $\overline{\rho_n} = L^{-1}\sum_{\ell=1}^L \rho_n^{(\ell)}$. The right-hand side of the above inequality is minimized at 
\begin{equation}
\label{eq:oracle_bandwidth_MISE_homogeneous}
h^* = n^{\alpha/(\alpha+1)}\left({\alpha 2^{\alpha}M^2 L\overline{\rho_n}}\right)^{-1/(2\alpha+2)}.
\end{equation}
Consequently, it holds that
\begin{equation}
\label{ineq:MISE_oracle_homogeneous}
\begin{aligned}
\text{MISE}(\widehat{f}^*)\Big|_{h=h^*} &= O\left({\left\{\binom{n}{2}L\overline{\rho_n}\right\}^{-\alpha/(\alpha+1)}}\right).
\end{aligned}
\end{equation}
\end{theorem}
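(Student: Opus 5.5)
The plan is to follow the proof of Theorem~\ref{thm:WMISE} (and Theorem 1 of \cite{olhede2014network}). The difference is that the $L$ layers are first pooled into a single aggregated network. Conditional on $\bxi$, the summed adjacency $S_{ij}=\sum_{\ell=1}^L A_{ij}^{(\ell)}$ is a sum of independent Bernoulli variables with mean $L\overline{\rho_n}f(\xi_i,\xi_j)$ and variance at most $L\overline{\rho_n}f(\xi_i,\xi_j)$. The estimator \eqref{eq:oracle_f_homo} is then $(L\overline{\rho_n})^{-1}$ times the oracle block average of $S$. The analysis therefore reduces to that of a single-layer oracle network histogram with effective sparsity $L\overline{\rho_n}$.

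First, I bound the infimum over $\sigma$ by choosing the specific measure-preserving map that sends $x$ to the latent position of the vertex whose rank matches the bin of $x$, exactly as in the single-layer proof. Adding and subtracting the conditional block mean $\bar f_{ab}:=h_{ab}^{-2}\sum_{R_{ab}(\widetilde{\bz})} f(\xi_i,\xi_j)$, the squared error splits into a bias part $\iint|\bar f_{ab}-f|^2$ and a variance part $\E|\widehat f^*-\bar f_{ab}|^2$. The cross term vanishes after conditioning on $\bxi$.

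Second, for the bias I use Hölder$^\alpha(M)$ continuity. Within a bin, the oracle labels \eqref{def:oracle_z} group vertices with consecutive ranks, so the latent positions lie within a spread controlled by the order statistics of uniforms. The deterministic bin width $h/n$ contributes $2^\alpha(h/n)^{2\alpha}M^2$. The fluctuation of order statistics about their means, with variance of order $1/n$ per coordinate, gives the term $2M^2/(2n)^{\alpha}$ after Jensen's inequality on $\E|\xi_{(i)}-i/(n+1)|^{2\alpha}\le (\var)^{\alpha}$. This step does not involve the layers at all, so I can quote it from the proof of Theorem~\ref{thm:WMISE} with $M_\ell=M$. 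I expect this to be the main technical obstacle if it has to be redone: the Hölder exponent $\alpha<1$ must be handled through concavity, and the $\{1+o(1)\}$ factors must be tracked with $h\to\infty$ and $h=o(n)$.

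Third, for the variance I use conditional independence across pairs and layers. This gives $\var(\bar S_{ab}\mid\bxi)\le L\overline{\rho_n}\bar f_{ab}/h_{ab}^2$, and hence
\[
\E|\widehat f^*-\bar f_{ab}|^2\le \frac{\bar f_{ab}}{L\overline{\rho_n}h_{ab}^2}.
\]
Integrating over bins and using $h_{ab}^2\approx h^2$ (or $h^2/2$ on diagonal bins, which have measure $O(h/n)$ and are negligible), together with $\iint f=1$, gives the bound $1/(L\overline{\rho_n}h^2)\{1+o(1)\}$. Factoring out $M^2$ yields the stated inequality.

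Finally, I minimize $g(h)=2^\alpha M^2 h^{2\alpha}n^{-2\alpha}+(L\overline{\rho_n})^{-1}h^{-2}$ over $h$, since the middle term does not depend on $h$. Setting $g'(h)=0$ gives $\alpha2^{\alpha}M^2h^{2\alpha+2}=n^{2\alpha}/(L\overline{\rho_n})$, which is \eqref{eq:oracle_bandwidth_MISE_homogeneous}. Substituting back, both terms are of order $(n^2L\overline{\rho_n})^{-\alpha/(\alpha+1)}$. The term $n^{-\alpha}$ is of no larger order whenever $L\overline{\rho_n}\lesssim 1$, which holds since each $\rho_n^{(\ell)}<1$ and $L$ is fixed. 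This gives \eqref{ineq:MISE_oracle_homogeneous}.
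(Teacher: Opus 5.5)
Your route differs from the paper's, and it works up to the last step. That last step is where the real problem lies: your justification of the final rate is false for $\alpha<1$.

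\textbf{Comparison with the paper.} The paper takes $\sigma=\mathrm{id}$ and splits the error unconditionally, around $\E\widehat f^*_{ab}$ and the deterministic block average $\bar f_{ab}=|\omega_{ab}|^{-1}\iint_{\omega_{ab}}f$. It then expands the variance of the pooled estimator into two pieces. The first is the per-layer variances from Lemma~\ref{lem:moment_A_bar}. The second is the inter-layer covariances $\cov\{(\bar A^{(\ell)}_{ab})^*,(\bar A^{(\ell')}_{ab})^*\}$, which are nonzero unconditionally because the layers share $\bxi$; these are bounded separately in Lemma~\ref{lem:interlayer_covariance}.

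You condition on $\bxi$ instead. Because the oracle weights sum to one, $\E[\widehat f^*_{ab}\mid\bxi]=m_{ab}(\bxi):=h_{ab}^{-2}\sum_{R_{ab}(\widetilde{\bz})}f(\xi_i,\xi_j)$ does not depend on $L$ or on the $\rho_n^{(\ell)}$. So all the latent-position error, including what the paper treats as inter-layer covariance, falls into a layer-free term identical to the single-layer one. The quantity $L\overline{\rho_n}$ enters only through the conditional Bernoulli variance. This removes the need for Lemma~\ref{lem:interlayer_covariance} and makes the reading ``single layer with sparsity $L\overline{\rho_n}$'' exact. The paper's organization has its own advantage: it reuses the per-layer moment lemma that also drives Theorem~\ref{thm:WMISE}.

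\textbf{Three small repairs.}

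First, take $\sigma=\mathrm{id}$. The map you describe is not a measure-preserving bijection, and the integrals you actually write are the identity ones anyway.

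Second, your variance bound has a random right-hand side. Take expectations and use $\E m_{ab}(\bxi)=\bar f_{ab}+O(Mn^{-\alpha/2})$ together with boundedness of $f$. This gives $\sum_{a,b}|\omega_{ab}|h_{ab}^{-2}\E m_{ab}(\bxi)=h^{-2}\{1+o(1)\}$.

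Third, the layer-free term is not literally displayed in the proof of Theorem~\ref{thm:WMISE}. Assemble it through the exact identity
\[
\E\iint_{\omega_{ab}}|m_{ab}(\bxi)-f|^2=\iint_{\omega_{ab}}|\bar f_{ab}-f|^2+|\omega_{ab}|\,\E|m_{ab}(\bxi)-\bar f_{ab}|^2,
\]
whose cross term vanishes because $\bar f_{ab}$ is the block average. This keeps the constant $2^{\alpha}$ in front of $(h/n)^{2\alpha}$ exact, which matters because it determines $h^*$. A cruder split via $(u+v)^2\le 2u^2+2v^2$ would change the stated bandwidth.

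\textbf{The gap: the final comparison.} The term $2M^2(2n)^{-\alpha}$ is $O(\{n^2L\overline{\rho_n}\}^{-\alpha/(\alpha+1)})$ if and only if $L\overline{\rho_n}=O(n^{\alpha-1})$.
\begin{itemize}
\item For $\alpha=1$ this reduces to $L\overline{\rho_n}=O(1)$, as you say.
\item For $\alpha<1$ it forces the average sparsity to vanish at least at rate $n^{-(1-\alpha)}$. The facts ``$\rho_n^{(\ell)}<1$ and $L$ fixed'' do not suffice.
\end{itemize}
For example, with $\alpha=1/2$ and dense layers, the bound at $h^*$ is of order $n^{-1/2}$, while the claimed rate is $n^{-2/3}$. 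So \eqref{ineq:MISE_oracle_homogeneous} follows from the displayed bound only for $\alpha=1$, or for $\alpha<1$ under that extra sparsity condition.

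The paper's proof has the same omission. It keeps $2/(2n)^{\alpha}$ in its last display and then asserts the $O(\cdot)$ rate without comparing orders. You should state the condition explicitly rather than justify the step as you did.
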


Comparing the bandwidths in \eqref{eq:oracle_bandwidth_WMISE} and \eqref{eq:oracle_bandwidth_MISE_homogeneous}, the homogeneous multi-network histogram yields a narrower optimal bandwidth. 
When the homogeneous structure holds, it estimates a single function $f$ using $L$ layers, whereas the multi-network histogram produces $L$ distinct function estimates. 
Pooling information across layers effectively increases the amount of data available for estimating $f$, allowing a more precise estimation through a finer division of the domain.
Furthermore, as the number of layers increases, the upper bound in \eqref{ineq:MISE_oracle_homogeneous} decreases, indicating that the homogeneous approach can capture finer structural details. 

\subsection{Mean Integrated Squared Errors}
\label{subsec:MISE}

We consider an alternative bandwidth selection criterion based on the MISE of the oracle multi-network histogram in \eqref{eq:oracle_f}.
The MISE is defined as
\begin{equation}
\label{eq:oracle_MISE}
\begin{aligned}
&\text{MISE}(\widehat{\bff}^*) := 
\E \left(\inf_{\sigma} \iint_{[0,1]^2}\frac{1}{L}\sum_{\ell=1}^L 
\left|(\widehat{f}^{(\ell)})^*(\sigma(x),\sigma(y);h)-f^{(\ell)}(x,y)\right|^2 dxdy\right).
\end{aligned}
\end{equation}
Proposition~\ref{prop:MISE} establishes an upper bound for the MISE and the optimal bandwidth that minimizes this bound.
\begin{proposition}[MISE of Multi-Network Histogram]
\label{prop:MISE}
Assume that Condition~\ref{cond:multinethist} holds. Then, the MISE of the oracle multi-network histogram in \eqref{eq:oracle_f} satisfies the following upper bound:
\begin{equation*}
\label{ineq:MISE_oracle}
    \begin{aligned}
     &\text{MISE}(\widehat{\bff}^*) 
\leq  
\left\{
\overline{M^2}2^{\alpha}\left(\frac{h}{n}\right)^{2\alpha}
+ \frac{2\overline{M^2}}{(2n)^{\alpha}}
+ \frac{\overline{\rho_n^{-1}} }{h^2}
\right\}\{1+o(1)\},
\end{aligned}
\end{equation*}
where $\overline{M^2} = L^{-1}\sum_{\ell=1}^L M_{\ell}^2$ and $\overline{\rho_n^{-1}} = L^{-1}\sum_{\ell=1}^L(\rho_n^{(\ell)})^{-1}$.
Then, the right-hand side of the above inequality is minimized at 
\begin{equation}
\label{eq:oracle_bandwidth_MISE}
h_{\text{MISE}}^* = n^{\alpha/(\alpha+1)}\left(\frac{\overline{\rho_n^{-1}}}{\alpha 2^{\alpha} \overline{M^2}}\right)^{1/(2\alpha+2)}.
\end{equation}
Lastly, $\text{MISE}(\widehat{\bff}^*)|_{h=h_{\text{MISE}}^*}$ satisfies
\begin{equation}
\label{ineq:MISE_optimal}
\begin{aligned}
\text{MISE}(\widehat{\bff}^*)\Big|_{h=h_{\text{MISE}}^*}
&=  
O\left(\left\{\binom{n}{2}\frac{1}{\overline{\rho_n^{-1}}}\right\}^{-\alpha/(\alpha+1)}\right).
\end{aligned}
\end{equation}
\end{proposition}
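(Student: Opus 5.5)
Because the proposition concerns the unweighted average, the plan is to reuse the per-layer decomposition behind Theorem~\ref{thm:WMISE}, now with every weight equal to $1/L$. First, bound the infimum over $\sigma$ by the value at one fixed measure-preserving map: the map that sends the oracle ordering of the $\xi_i$ to the uniform grid. Since $\widetilde{\bz}$ is common to all layers, a single $\sigma$ works for every $\ell$, and
\[
\text{MISE}(\widehat{\bff}^*)\le \frac{1}{L}\sum_{\ell=1}^L \E\iint_{[0,1]^2}\left|(\widehat{f}^{(\ell)})^*(\sigma(x),\sigma(y);h)-f^{(\ell)}(x,y)\right|^2dxdy .
\]
The problem therefore reduces to a per-layer bound of the form $B_\ell(h)\{1+o(1)\}$, where the $o(1)$ is uniform in $\ell$ because $L$ is fixed.

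For a single layer, follow the argument of Theorem 1 in \cite{olhede2014network}, extended to $\alpha\in(0,1]$ as in the proof of Theorem~\ref{thm:WMISE}. Conditionally on $\bxi$, split the error into bias and variance. For the variance, the block average of $h_{ab}^2\approx h^2$ independent Bernoulli variables, scaled by $(\rho_n^{(\ell)})^{-1}$, has variance at most $\rho_n^{(\ell)}\bar f/(h^2(\rho_n^{(\ell)})^2)$. Integrating over blocks and using $\iint f^{(\ell)}=1$ gives $\{h^2\rho_n^{(\ell)}\}^{-1}\{1+o(1)\}$. The bias splits into two parts. Within-bin smoothness, via Hölder$^\alpha(M_\ell)$ over a square of side $h/n$ with diagonal $\sqrt2 h/n$, gives $M_\ell^2 2^\alpha(h/n)^{2\alpha}$. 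The deviation of $\xi_{(i)}$ from its grid position $i/(n+1)$, whose variance is of order $1/n$, contributes $2M_\ell^2/(2n)^\alpha$ after Jensen's inequality applied to $\E|\xi_{(i)}-i/(n+1)|^{2\alpha}$. These are exactly the terms that appear inside the braces of Theorem~\ref{thm:WMISE} before weighting, so I take that per-layer bound as given. Averaging $B_\ell$ over $\ell$ with weights $1/L$ then produces $\overline{M^2}$ in the two bias terms and $\overline{\rho_n^{-1}}$ in the variance term, which is the stated bound.

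For the optimisation, treat the right-hand side as $g(h)=C_1h^{2\alpha}n^{-2\alpha}+C_2+C_3h^{-2}$ with $C_1=2^\alpha\overline{M^2}$ and $C_3=\overline{\rho_n^{-1}}$. Setting $g'(h)=0$ gives $2\alpha C_1h^{2\alpha-1}n^{-2\alpha}=2C_3h^{-3}$, so $h^{2\alpha+2}=n^{2\alpha}C_3/(\alpha C_1)$, which is \eqref{eq:oracle_bandwidth_MISE}. Substituting back, both leading terms are of order $C_3^{\alpha/(\alpha+1)}n^{-2\alpha/(\alpha+1)}$, i.e.\ $(n^2/\overline{\rho_n^{-1}})^{-\alpha/(\alpha+1)}$. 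The middle term $n^{-\alpha}$ is of no larger order because $\overline{\rho_n^{-1}}\ll n$ in the sparse regime $\rho_n^{(\ell)}=\omega(n^{-1})$. Since $\binom n2\asymp n^2$, this gives \eqref{ineq:MISE_optimal}. Finally, check that $h^*_{\text{MISE}}\to\infty$ and $h^*_{\text{MISE}}=o(n)$, so that Condition~\ref{cond:multinethist} i) is consistent with the choice of bandwidth.

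The main obstacle is making the $o(1)$ remainders uniform across layers with different sparsity. In the variance term, the correction from the last group of size $h+r$ and the $\binom{h_a}{2}$ versus $h^2$ discrepancy must be relative to $1/(h^2\rho_n^{(\ell)})$ for each $\ell$, including the sparsest layer, which dominates $\overline{\rho_n^{-1}}$. I would handle this by bounding each per-layer remainder by $O(1/h+r/n)$ times that layer's own leading term, which is uniform in $\ell$ since $L$ is fixed.
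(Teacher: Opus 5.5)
Your route is the paper's: drop the infimum, average the per-layer bound of Lemma~\ref{lem:layer_MISE} with weights $1/L$, then do the same calculus for $h^*_{\text{MISE}}$. For the infimum, the paper effectively takes $\sigma$ to be the identity. That is all you need, because $(\widehat{f}^{(\ell)})^*$ is already indexed by the grid blocks $\omega_{ab}$, so no data-dependent map is required. The bound and the formula for $h^*_{\text{MISE}}$ are correct as you derive them.

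The one step that does not work as you argue it is the last one. To get from the bound at $h=h^*_{\text{MISE}}$ to \eqref{ineq:MISE_optimal}, you must show that the $h$-free term $2\overline{M^2}(2n)^{-\alpha}$ is dominated by the leading term, which is of order $(\overline{\rho_n^{-1}}/n^2)^{\alpha/(\alpha+1)}$. Up to constants, their ratio is
\[
\frac{n^{-\alpha}}{\bigl(\overline{\rho_n^{-1}}/n^{2}\bigr)^{\alpha/(\alpha+1)}}=\Bigl(\frac{n^{1-\alpha}}{\overline{\rho_n^{-1}}}\Bigr)^{\alpha/(\alpha+1)},
\]
so what is needed is a \emph{lower} bound $\overline{\rho_n^{-1}}\gtrsim n^{1-\alpha}$.

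Your stated reason, $\overline{\rho_n^{-1}}\ll n$, points the wrong way. A smaller $\overline{\rho_n^{-1}}$ shrinks the leading term and makes the $n^{-\alpha}$ term relatively larger. For $\alpha=1$ the requirement holds automatically, because $\rho_n^{(\ell)}<1$ gives $\overline{\rho_n^{-1}}\geq 1$. For $\alpha<1$ it can fail: with all layers dense and $\alpha=1/2$, the bound at $h^*_{\text{MISE}}$ is of order $n^{-1/2}$, not the claimed $n^{-2/3}$. The upper bound $\overline{\rho_n^{-1}}=o(n)$ from $\rho_n^{(\ell)}=\omega(n^{-1})$ is relevant only to your other check, and it is more than enough there: $h^*_{\text{MISE}}=o(n)$ and a vanishing variance term $\overline{\rho_n^{-1}}/(h^*_{\text{MISE}})^2$ both require only $\overline{\rho_n^{-1}}=o(n^2)$.

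The paper's own proof writes the final $O(\cdot)$ without addressing this term at all. So the rate should be read as holding for $\alpha=1$, or for $\alpha<1$ under the extra condition $\overline{\rho_n^{-1}}\gtrsim n^{1-\alpha}$. You should state that condition in place of the one you give.
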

The rate of the optimal bandwidth in \eqref{eq:oracle_bandwidth_MISE} depends on the harmonic mean of the layer-wise sparsity, so it is largely determined by the sparsity level of the sparsest layer. When the sparsity parameters vary, the selected bandwidth may be too wide for the denser layers, potentially obscuring important structural details. 
In contrast, the WMISE-based bandwidth is generally smaller than the MISE-based bandwidth because it downweights the influence of extremely sparse layers in bandwidth selection. Consequently, when sparse layers are present, the WMISE-based bandwidth retains a finer resolution for the denser layers.
In addition, as shown in Section~\ref{subsec:MISE_WMISE_bandwidth_comp}, the dense-layer estimation error under the WMISE-based bandwidth changes only mildly as sparse layers are added, whereas the MISE-based bandwidth produces a much larger and less predictable effect on dense-layer estimation across graphons.

\section{Simulation study}
\label{sec:sim_study}

We investigate the finite sample performance of the proposed multi-network histogram methods. Specifically, we examine the effect of the number of layers $L$ and the number of vertices $n$ on the estimation accuracy of the proposed methods, and compare them with existing single-layer graphon estimation methods.

\subsection{Setup}
\label{subsec:data_generation}

We begin by describing evaluation criterion and simulation settings. We then explain the procedure for generating synthetic multiplex networks, including the choice of functions, layer specification scenarios, and sparsity levels. 

To evaluate estimation performance, we use the weighted mean squared error (WMSE) of $\widehat{\bff} = \{\widehat{f}^{(1)},\ldots, \widehat{f}^{(L)}\}$, defined as
\begin{equation}
\label{eq:WMSE_global}
    \text{WMSE}(\widehat{\bff}) = \sum_{\ell=1}^L \frac{\rho_n^{(\ell)}}{\sum_{l=1}^L \rho_n^{(l)}} \binom{n}{2}^{-1}\sum_{i<j}\left\{\widehat{f}_{ij}^{(\ell)}-f^{(\ell)}(\xi_i,\xi_j)\right\}^2,
\end{equation}
where $\widehat{f}_{ij}^{(\ell)}$ is the estimated function value for the vertex pair $(i,j)$ in the $\ell$th layer, and $f^{(\ell)}(\xi_i,\xi_j)$ is the true function value for that pair. This error serves as the empirical counterpart of WMISE defined in \eqref{eq:oracle_WMISE}.

We investigate how the number of layers $L$ and the number of vertices $n$ affect the WMSE under the following settings. 
\begin{enumerate}
\item Effect of $L$: We change the number of layers $L\in\{5,7,10\}$ while keeping $n=400$ across all scenarios and sparsity levels. 
\item Effect of $n$: We vary the number of vertices $n\in\{200, 400,800\}$ while fixing $L=7$ across all scenarios and sparsity levels. 
\end{enumerate}
In the first setting, the proposed methods are expected to perform better with larger $L$, which demonstrates their advantage over single-layer graphon estimators. In the second setting, the WMSE is expected to decrease as the network size $n$ increases, as seen in the theoretical results in \Cref{sec:theory}.
\begin{table}[t]
\caption{The choice of function $f_k(x,y)$ for $(x,y)\in[0,1]^2$ and the ranges of sparsity parameters $\rho_n^{(\ell)}$ under three settings: i) layers can be either sparse or dense (mixed); ii) all layers are dense (all dense); and iii) all layers are sparse (all sparse). The function $sig(x) = 1/(1+e^{-x})$ is the sigmoid function.}
\label{tab:sparsity_level_setup}
\centering
\begin{tabular}{ll|lll}
& & \multicolumn{3}{c}{Ranges of $\rho_n^{(\ell)}$} \\
 ID ($k$) & $f_k(x,y)$   & mixed  & all dense      & all sparse            \\ \hline
1 & $0.75+2.25x^2 y^2$ & $[0.5n^{-1/2}, 0.275]$ & $[0.2, 0.275]$ & $[0.5n^{-1/2}, 2n^{-1/2}]$  \\
2 & $sig(x+y)/0.7238$ & $[0.5n^{-1/2}, 0.7]$   & $[0.4, 0.7]$   & $[0.5n^{-1/2}, 2n^{-1/2}]$  \\
3 & $\exp({-0.5|x-y|)/0.8522}$ & $[0.5n^{-1/2}, 0.7]$  & $[0.4, 0.7]$   & $[0.5n^{-1/2}, 2n^{-1/2}]$   \\
4 & $1 + 4(x-0.5)(y-0.5)$  & $[0.5n^{-1/2}, 0.45]$ & $[0.2, 0.45]$  & $[0.5n^{-1/2}, 2n^{-1/2}]$\\ \hline
\end{tabular}
\end{table}
\begin{figure}[t]
    \centering
     \begin{subfigure}[t]{0.225\linewidth}
         \centering
\includegraphics[width=\textwidth,trim={0.5cm 1cm 2cm 1cm},clip]{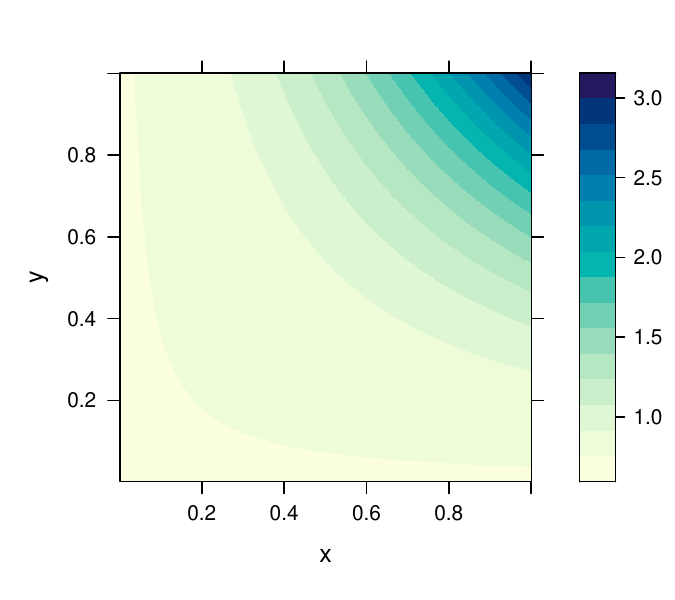}
    \caption{$f_1(x,y)$}
    \label{fig:graphon1}
    \end{subfigure}
    \hfill
     \begin{subfigure}[t]{0.225\linewidth}
         \centering
    \includegraphics[width=\textwidth, trim = {0.5cm 1cm 2cm 1cm},clip]{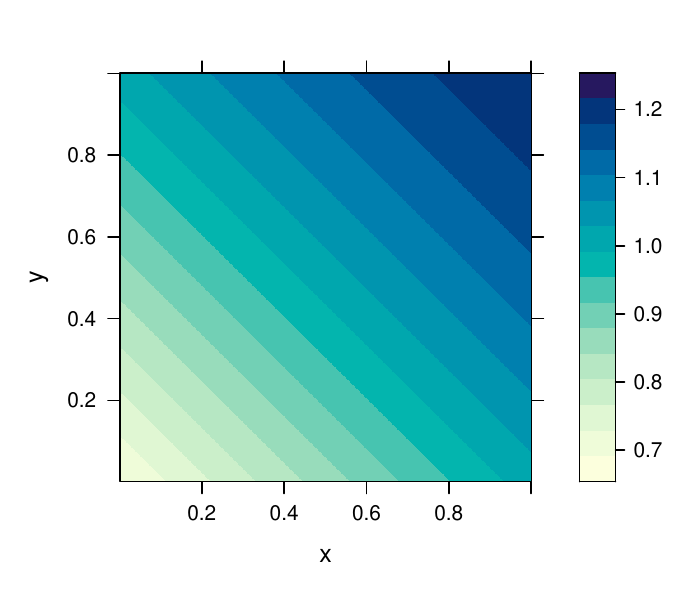}
   \caption{$f_2(x,y)$}  \label{fig:graphon2}
    \end{subfigure}
    % \newline
    \hfill
     \begin{subfigure}[t]{0.225\linewidth}
         \centering
   \includegraphics[width=\textwidth,trim={0.5cm 1cm 2cm 1cm},clip]{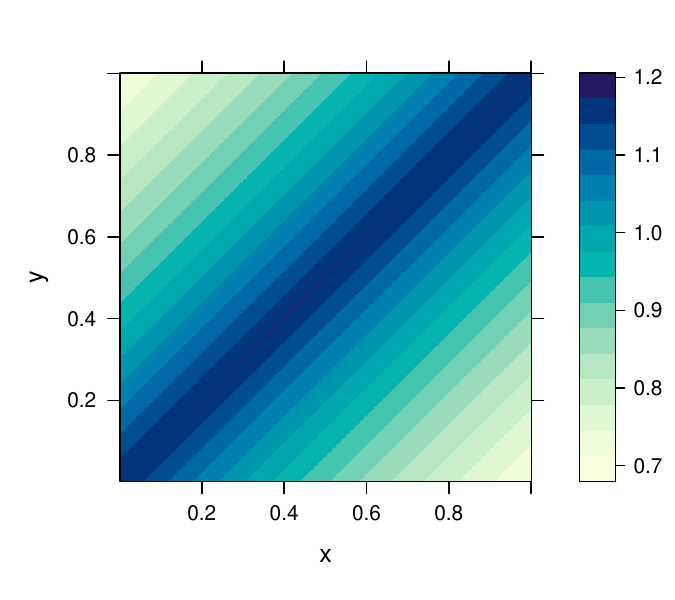}
    \caption{$f_3(x,y)$}
    \label{fig:graphon3}
    \end{subfigure}
    \hfill
    \begin{subfigure}[t]{0.225\linewidth}
         \centering
\includegraphics[width=\textwidth,trim={0.5cm 1cm 2cm 1cm},clip]{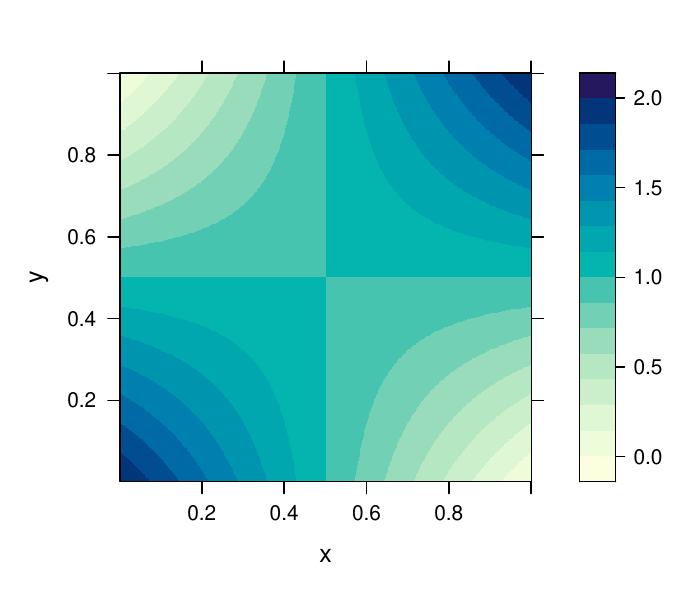}   
    \caption{$f_4(x,y)$ }
    \label{fig:graphon4}
    \end{subfigure}
    \caption{Base functions used to construct scaled graphons in simulation studies. Darker colour indicates larger function values. }
    \label{fig:graphons_sim}
\end{figure}

To generate synthetic multiplex networks with common latent variables, we follow the procedure described in Example~\ref{ex:scaled_multigraphon} under three  layer-structure scenarios: Homogeneous, Perturbation, and Heterogeneous. For each scenario, a scaled set of graphons is determined by the choice of $\rho_n^{(\ell)}$ and $f^{(\ell)}$. 
The function term $f^{(\ell)}$ is constructed from four base functions $\{f_k: \iint_{[0,1]^2} f_k(x,y)dxdy \approx 1\}_{k=1}^4$ as summarized in Table~\ref{tab:sparsity_level_setup} and Figure~\ref{fig:graphons_sim}. The layer-wise sparsity parameters $\rho_n^{(\ell)}$ are chosen under three sparsity settings as given in Table~\ref{tab:sparsity_level_setup}: mixed (some dense, some sparse), all layers dense, or all layers sparse. The base functions $f_1$, $f_2$, and $f_3$ are considered in \cite{chandna2022Local}. The function $f_4$ has identical degrees across all values of the latent variable, since $\int_{0}^1 f_4(x,y)dy=1$ for all $x\in[0,1]$. 
Similarly, $f_3$ has only minor variation in degree across $x$. With those functions and sparsity parameters, each scenario is constructed as follows:
\vspace{0.1cm}
\begin{enumerate}
    \item (Scenario 1: Homogeneous) All layers share a common function $f(x,y)$ selected from $\{f_k\}_{k=1}^4$. The sparsity parameters $\rho_n^{(1)}$ and $\rho_n^{(L)}$ are set to be the minimum and maximum values of the corresponding range in Table~\ref{tab:sparsity_level_setup} for the given sparsity setting, respectively. The sparsity parameters of intermediate layers are interpolated linearly, $\rho_n^{(\ell)} := \left(1-\frac{\ell-1}{L-1}\right) \rho_n^{(1)} + \frac{\ell-1}{L-1}\rho_n^{(L)}$, for $\ell=2,\ldots, L-1.$
    \item (Scenario 2: Perturbation) Two functions, $f_{k_1}$ and $f_{k_2}$, are chosen such that $(k_1,k_2)\in\{(1,2), (1,4), (3,4)\}$. The first and last layers use $f^{(1)} = f_{k_1}$ and $f^{(L)} = f_{k_2}$ with sparsity parameters at the minimum and maximum of the respective ranges: $\rho_n^{(1)}$ is the minimum value of the range for $f_{k_1}$, and $\rho_n^{(L)}$ is the maximum of the range for $f_{k_2}$, as given in Table~\ref{tab:sparsity_level_setup}. Intermediate layers are obtained by a linear interpolation,
    \begin{align*}
    \rho_n^{(\ell)}f^{(\ell)} = \left(1-\frac{\ell-1}{L-1}\right) \rho_n^{(1)}f^{(1)} + \frac{\ell-1}{L-1}\rho_n^{(L)}f^{(L)},
    \end{align*}
    for $\ell=2,\ldots, L-1.$ 
    \item (Scenario 3: Heterogeneous) Two functions, $f_{k_1}$ and $f_{k_2}$, are chosen such that $(k_1,k_2)\in\{(1,2), (1,4), (3,4)\}$. Layers are divided into two groups: $f^{(\ell)} = f_{k_1}$ for $\ell=1,\ldots, \lceil L/2\rceil$, and $f^{(\ell)} = f_{k_2}$ for $\ell = \lceil L/2\rceil +1,\ldots,L$.
    Each group uses its own sparsity range specified in Table~\ref{tab:sparsity_level_setup}. In the mixed setting, the first and second groups use the all sparse and all dense ranges, respectively. The sparsity parameters are obtained by linearly interpolating each group's range over $L$ layers and taking the values corresponding to the layers in that group.
    
\end{enumerate}
For each layer, the edge probability between vertices $i$ and $j$ is defined as $p_{ij}^{(\ell)} = \rho_n^{(\ell)}f^{(\ell)}(\xi_i,\xi_j) \in [0,1]$, where the latent variables are $\xi_1,\ldots, \xi_n \sim U(0,1)$. Then, the adjacency matrix entries are sampled from Bernoulli distributions,  $A_{ij}^{(\ell)}\sim \text{Bernoulli}(p_{ij}^{(\ell)})$ for $i<j$, and are symmetrized by setting $A_{ji}^{(\ell)} = A_{ij}^{(\ell)}$. 

For all scenarios, we compare WMSE in \eqref{eq:WMSE_global} of multi-network histogram (mnhist) to that of the following single-layer graphon estimation methods: network histogram (nethist) \cite{olhede2014network}, sort-and-smooth (SAS) \cite{chan2014Consistent}, universal singular value thresholding (USVT) \cite{chatterjee2015Matrix}, and neighbourhood smoothing (NBS) \cite{zhang2017Estimating}. The homogeneous multi-network histogram (h-mnhist) is additionally evaluated in the Homogeneous scenario.

\subsection{Simulation results}
\label{subsec:sim_res}

%Effect of L
\begin{table}[t]
\caption{Comparisons of WMSE ($\times 100$ with the standard deviation in parentheses) averaged over 100 replications across different number of layers $L\in\{5,7,10\}$ for multiplex networks with 400 vertices under Scenario 1 (Homogeneous) in the mixed sparsity case. Function IDs correspond to those in Table~\ref{tab:sparsity_level_setup}. Underlined methods indicate the proposed methods, and bolded values denote the smallest average WMSE in each case.}
\label{tab:homogeneous_mixed_by_L}
\centering
\begin{tabular}{ll|rrr}
\hline
ID & Method & 5 & 7 & 10\\
\hline
1 & \underline{mnhist} & 3.537 (0.227) & 2.888 (0.189) & 2.412 (0.159)\\
1 & \underline{h-mnhist} & \textbf{2.798 (0.150)} & \textbf{2.100 (0.109)} & \textbf{1.523 (0.094)}\\
1 & nethist & 11.439 (0.273) & 11.465 (0.279) & 11.523 (0.207)\\
1 & SAS & 3.197 (0.140) & 3.238 (0.123) & 3.273 (0.104)\\
1 & USVT & 10.398 (1.175) & 9.147 (1.032) & 8.573 (0.836)\\
1 & NBS & 17.988 (0.878) & 18.517 (1.085) & 19.246 (0.973)\\
\hline
2 & \underline{mnhist} & 0.830 (0.058) & 0.676 (0.045) & 0.558 (0.041)\\
2 & \underline{h-mnhist} & 0.702 (0.037) & \textbf{0.526 (0.028)} & \textbf{0.397 (0.024)}\\
2 & nethist & 3.043 (0.113) & 3.100 (0.080) & 3.143 (0.074)\\
2 & SAS & \textbf{0.553 (0.026)} & 0.579 (0.022) & 0.603 (0.021)\\
2 & USVT & 1.787 (0.399) & 1.502 (0.287) & 1.351 (0.227)\\
2 & NBS & 5.093 (0.267) & 4.473 (0.221) & 4.741 (0.168)\\
\hline
3 & \underline{mnhist} & \textbf{0.656 (0.223)} & 0.491 (0.156) & 0.370 (0.025)\\
3 & \underline{h-mnhist} & 0.669 (0.042) & \textbf{0.484 (0.029)} & \textbf{0.342 (0.022)}\\
3 & nethist & 3.246 (0.129) & 3.353 (0.104) & 3.402 (0.084)\\
3 & SAS & 1.544 (0.068) & 1.583 (0.062) & 1.589 (0.058)\\
3 & USVT & 2.666 (0.450) & 2.486 (0.299) & 2.313 (0.251)\\
3 & NBS & 5.628 (0.345) & 5.114 (0.208) & 5.385 (0.172)\\
\hline
4 & \underline{mnhist} & \textbf{0.881 (0.060)} & 0.754 (0.049) & 0.664 (0.042)\\
4 & \underline{h-mnhist} & 0.989 (0.087) & \textbf{0.706 (0.053)} & \textbf{0.502 (0.033)}\\
4 & nethist & 4.636 (0.199) & 4.818 (0.266) & 4.871 (0.180)\\
4 & SAS & 11.876 (0.960) & 11.859 (1.056) & 11.703 (1.023)\\
4 & USVT & 4.194 (0.580) & 3.875 (0.443) & 3.811 (0.311)\\
4 & NBS & 9.400 (0.551) & 9.611 (0.353) & 10.725 (0.423)\\
\hline
\end{tabular}
\end{table}

\begin{table}[!ht]
\caption{Comparisons of WMSE ($\times 100$ with the standard deviation in parentheses) averaged over 100 replications by the number of layers, $L\in \{5,7,10\}$, for multiplex networks with 400 vertices under Scenario 2 (Perturbation) in the mixed sparsity case. Function IDs correspond to those in Table~\ref{tab:sparsity_level_setup}. Underlined methods indicate the proposed methods, and bolded values denote the smallest average WMSE in each case.}
\label{tab:Perturbed_mixed_by_L}
\centering
\begin{tabular}{lll|rrr}
\hline
ID1 & ID2 & Method & 5 & 7 & 10\\
\hline
1 & 2 & \underline{mnhist} &0.872 (0.052) & 0.724 (0.041) & \textbf{0.589 (0.039)}\\
1 & 2 & nethist & 3.212 (0.099) & 3.246 (0.086) & 3.258 (0.070)\\
1 & 2 & SAS & \textbf{0.659 (0.025)} & \textbf{0.661 (0.022)} & 0.664 (0.020)\\
1 & 2 & USVT & 3.013 (0.517) & 2.466 (0.361) & 1.949 (0.265)\\
1 & 2 & NBS & 5.718 (0.336) & 4.982 (0.271) & 5.018 (0.184)\\
\hline
1 & 4 & \underline{mnhist} &\textbf{0.919 (0.076)} & \textbf{0.774 (0.054)} & \textbf{0.667 (0.038)}\\
1 & 4 & nethist & 4.725 (0.254) & 4.846 (0.209) & 4.821 (0.202)\\
1 & 4 & SAS & 11.389 (1.078) & 11.333 (1.061) & 11.451 (0.952)\\
1 & 4 & USVT & 5.997 (0.682) & 5.229 (0.526) & 4.776 (0.360)\\
1 & 4 & NBS & 10.330 (0.594) & 10.269 (0.416) & 11.003 (0.517)\\
\hline
3 & 4 & \underline{mnhist} &\textbf{0.876 (0.066)} & \textbf{0.743 (0.049)} & \textbf{0.649 (0.038)}\\
3 & 4 & nethist & 4.614 (0.312) & 4.790 (0.181) & 4.811 (0.204)\\
3 & 4 & SAS & 11.086 (1.060) & 11.031 (0.990) & 11.156 (1.020)\\
3 & 4 & USVT & 3.946 (0.643) & 3.803 (0.418) & 3.678 (0.323)\\
3 & 4 & NBS & 9.256 (0.535) & 9.545 (0.420) & 11.016 (0.390)\\
\hline
\end{tabular}
\end{table}

\begin{table}[!ht]
\caption{Comparisons of WMSE ($\times 100$ with the standard deviation in parentheses) averaged over 100 replications by the number of layers, $L\in \{5,7,10\}$, for multiplex networks with 400 vertices under Scenario 3 (Heterogeneous) in the mixed sparsity case. Function IDs correspond to those in Table~\ref{tab:sparsity_level_setup}. Underlined methods indicate the proposed methods, and bolded values denote the smallest average WMSE in each case.}
\label{tab:Heterogeneous_mixed_by_L}
\centering
\begin{tabular}{lll|rrr}
\hline
ID1 & ID2 & Method & 5 & 7 & 10\\
\hline
1 & 2 & \underline{mnhist} &1.241 (0.082) & \textbf{0.958 (0.060)} & \textbf{0.671 (0.039)}\\
1 & 2 & nethist & 4.444 (0.140) & 4.203 (0.129) & 3.596 (0.101)\\
1 & 2 & SAS & \textbf{1.012 (0.045)} & 0.968 (0.043) & 0.819 (0.033)\\
1 & 2 & USVT & 6.023 (0.962) & 5.391 (0.760) & 4.439 (0.556)\\
1 & 2 & NBS & 15.240 (0.952) & 14.745 (0.678) & 12.605 (0.530)\\
\hline
1 & 4 & \underline{mnhist} &\textbf{1.275 (0.110)} & \textbf{0.992 (0.070)} & \textbf{0.750 (0.039)}\\
1 & 4 & nethist & 6.549 (0.238) & 6.216 (0.183) & 5.391 (0.138)\\
1 & 4 & SAS & 11.041 (0.950) & 11.053 (0.879) & 11.109 (0.866)\\
1 & 4 & USVT & 9.644 (1.501) & 8.878 (1.178) & 7.751 (0.823)\\
1 & 4 & NBS & 24.038 (1.381) & 23.401 (1.164) & 20.316 (0.784)\\
\hline
3 & 4 & \underline{mnhist} &\textbf{1.085 (0.093)} & \textbf{0.846 (0.058)} & \textbf{0.674 (0.042)}\\
3 & 4 & nethist & 5.667 (0.196) & 5.390 (0.185) & 4.768 (0.129)\\
3 & 4 & SAS & 10.286 (0.856) & 10.668 (0.862) & 10.697 (0.888)\\
3 & 4 & USVT & 4.381 (0.926) & 4.030 (0.614) & 3.505 (0.437)\\
3 & 4 & NBS & 21.421 (1.162) & 23.602 (1.030) & 20.710 (0.883)\\
\hline
\end{tabular}
\end{table}

We present the averages and standard deviations of WMSE from 100 replications of two proposed methods (mnhist and h-mnhist) and four competing methods (nethist, SAS, USVT, and NBS). Tables~\ref{tab:homogeneous_mixed_by_L} to \ref{tab:Heterogeneous_mixed_by_L} display the effect of $L$, and Tables~\ref{tab:homogeneous_mixed_by_n} to~\ref{tab:Heterogeneous_mixed_by_n} report the effect of $n$ under the mixed sparsity case. For other sparsity cases, the results are summarized in Tables~\ref{tab:homogeneous_all_dense_by_L} to~\ref{tab:Heterogeneous_all_dense_by_n} for all dense case and Tables~\ref{tab:homogeneous_all_sparse_by_L} to~\ref{tab:Heterogeneous_all_sparse_by_n} for all sparse case in Supplementary Material. In addition, Tables~\ref{tab:layerwise_homogeneous}--\ref{tab:layerwise_heterogeneous} provide layer-wise MSE results obtained from Tables~\ref{tab:homogeneous_mixed_by_L} to \ref{tab:Heterogeneous_mixed_by_L}.

As shown in these tables, the proposed methods generally achieve lower WMSE values across all scenarios and sparsity levels compared with the competing methods.  This is observed regardless of the number of layers $L$, the number of vertices $n$, and the choice of base functions. 
In particular, even in settings involving $f_3$ or $f_4$, where the degree of $f_3$ does not vary much, and $f_4$ has the identical degrees, both mnhist and h-mnhist maintain low WMSE values consistently. This indicates that our profile likelihood approach leads to better estimation performance by allowing vertex grouping based on connection patterns, rather than relying solely on vertex degrees. 
Additionally, h-mnhist, which is designed for homogeneous layers, outperforms mnhist in Scenario 1 where the underlying layer-wise functions are identical across all layers, under all sparsity level settings. 
Lastly, the layer-wise results show that the proposed methods have smaller MSE in the sparsest layer relative to the competing methods. This suggests that sharing information across layers is particularly effective in graphon estimation when individual layers contain fewer edges.

The estimation errors of the proposed methods decrease as the number of layers $L$ increases, whereas the competing methods do not show this behaviour.
This is consistently observed under all scenarios in the mixed sparsity setting, as shown in Tables~\ref{tab:homogeneous_mixed_by_L} to \ref{tab:Heterogeneous_mixed_by_L}, in the all dense setting, as reported in Tables~\ref{tab:homogeneous_all_dense_by_L} to \ref{tab:Heterogeneous_all_dense_by_L}, and in the all sparse setting, as in Tables~\ref{tab:homogeneous_all_sparse_by_L} to \ref{tab:Heterogeneous_all_sparse_by_L}.
This is because, as $L$ increases, the proposed joint estimation methods with common latent variables improve group assignments for blockmodel approximations and consequently lead to more accurate estimates. 
However, the competing methods rely only on the marginal structure of each layer and do not use interaction information from other layers, which explains why their WMSEs do not decrease as $L$ increases.
% 서플먼터리 매터리얼은 레이어가 커짐에 따라 추정개선되는 소스가 레이블 추정인지 밴드위스 선택인지 분석하였다. 그 결과, 추정 오차의 감소는 bandwidth selection보다는 label estimation 개선의 영향이 큰 것으로 나타났다.
Supplementary Material Section~\ref{subsec:bandwidth_select} investigates the source of decrease in estimation error as the number of layers increases by separating the effects of label estimation and bandwidth selection. The results suggest that the reduction in estimation error is mainly due to improved label estimation.

We also observe that mnhist and h-mnhist exhibit smaller WMSE values as the number of vertices $n$ increases across all scenarios and sparsity cases. In particular, the decay rate of WMSE appears to depend on both $n$ and $\rho_n^{(\ell)}$, which is consistent with Theorems~\ref{thm:WMISE} and \ref{thm:MISE_homogeneous}. For example, in the mixed and all dense sparsity cases, the errors are expected to decrease at the rate of $n^{-1}$, as the average sparsity $\overline{\rho_n}$ is of order $O(1)$. This theoretical rate is observed in our results in Tables~\ref{tab:homogeneous_mixed_by_n} to \ref{tab:Heterogeneous_mixed_by_n} for the mixed sparsity case and Tables~\ref{tab:homogeneous_all_dense_by_n} to \ref{tab:Heterogeneous_all_dense_by_n} for the all dense sparsity case.
For the sparse case, the WMSE appears to decrease at a rate close to $n^{-1/2}$, as reported in Tables~\ref{tab:homogeneous_all_sparse_by_n} to \ref{tab:Heterogeneous_all_sparse_by_n}, which is slower than the theoretical decay rate $n^{-3/4}$ given that $\overline{\rho_n}$ is of order $O({n}^{-1/2})$ as seen in Table~\ref{tab:sparsity_level_setup}.

\section{Data Analysis: Indian village networks}
\label{sec:real_data}

We explore household networks from the Indian village data \cite{banerjee2013Diffusion} which consists of 12 social relationships across 75 villages. Our analysis focuses on Village 40, which was selected for its sufficient number of vertices for nonparametric estimation. For this analysis, we treat all layers as undirected simple networks (see Figure~\ref{fig:indian_village_vil_40}). After removing isolated households with zero degrees in all layers, the resulting multiplex network contains 231 households with layer-wise edge densities ranging from 0.0021 to 0.0198. For detailed descriptions of the layers and additional network summary statistics, see \cite{banerjee2013Diffusion} or Supplementary Material Section~\ref{sec_append:indian_vils}.
\begin{figure}[!ht]
    \centering
   \begin{subfigure}[b]{0.225\linewidth}
    \centering
    \includegraphics[width=\textwidth, trim = {1cm, 1.5cm, 1cm, 1cm}]{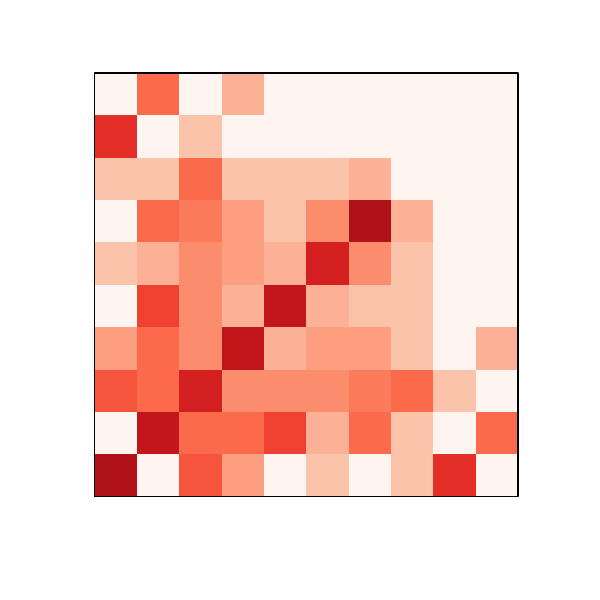}
    \caption{Borrow money}
\end{subfigure}
\hfill
\begin{subfigure}[b]{0.225\linewidth}
    \centering
    \includegraphics[width=\textwidth, trim = {1cm, 1.5cm, 1cm, 1cm}]{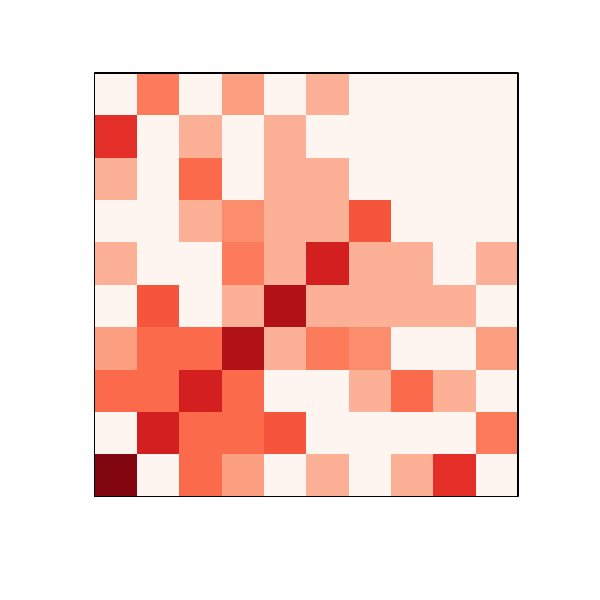}
    \caption{Give advice}
\end{subfigure}
\hfill
\begin{subfigure}[b]{0.225\linewidth}
    \centering
    \includegraphics[width=\textwidth, trim = {1cm, 1.5cm, 1cm, 1cm}]{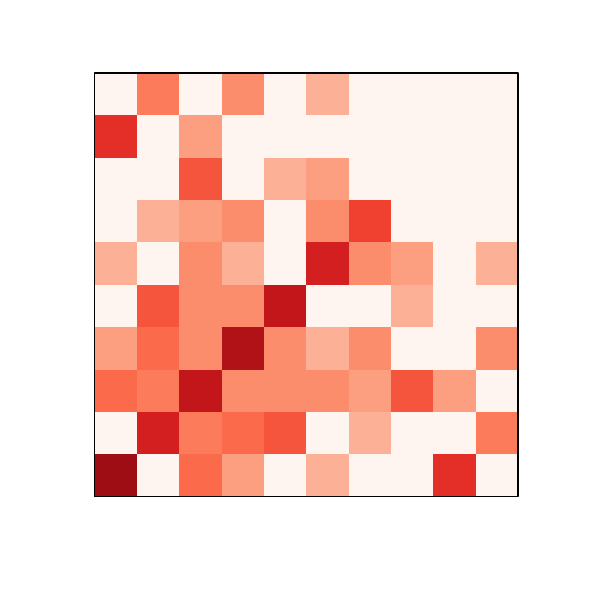}
    \caption{Help decision}
\end{subfigure}
\hfill
\begin{subfigure}[b]{0.225\linewidth}
    \centering
    \includegraphics[width=\textwidth, trim = {1cm, 1.5cm, 1cm, 1cm}]{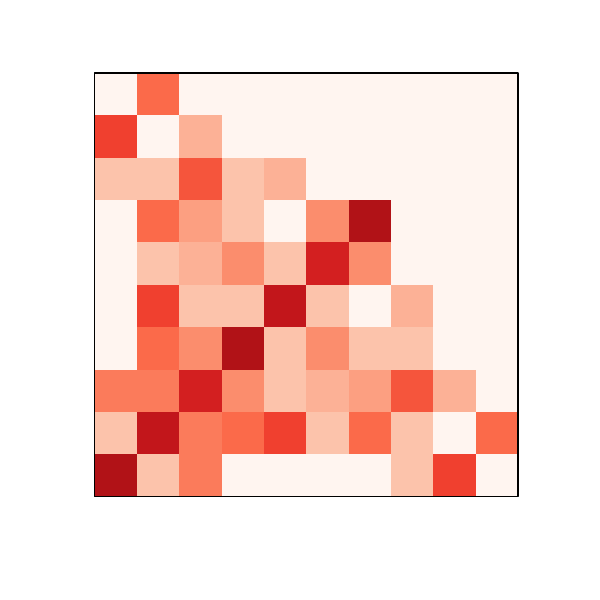}
    \caption{Kero rice come}
\end{subfigure}
\\
\begin{subfigure}[b]{0.225\linewidth}
    \centering
    \includegraphics[width=\textwidth, trim = {1cm, 1.5cm, 1cm, 1cm}]{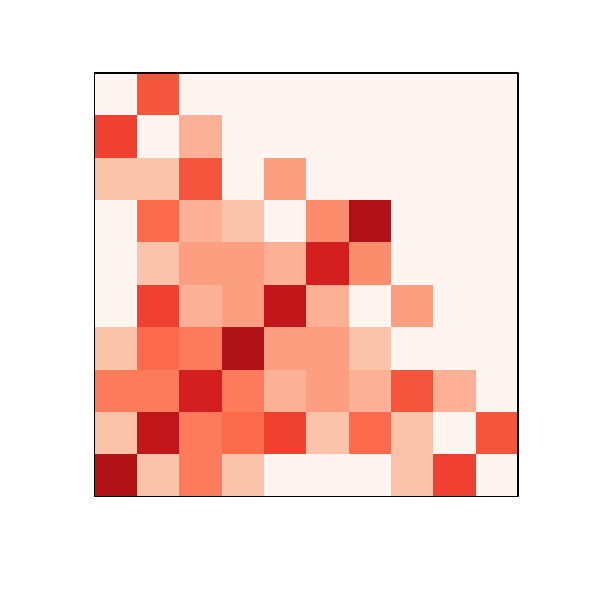}
    \caption{Kero rice go}
\end{subfigure}
\hfill
\begin{subfigure}[b]{0.225\linewidth}
    \centering
    \includegraphics[width=\textwidth, trim = {1cm, 1.5cm, 1cm, 1cm}]{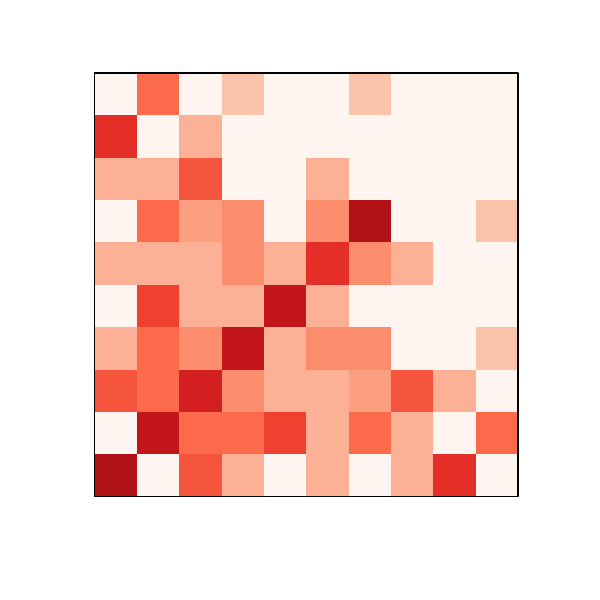}
    \caption{Lend money}
\end{subfigure}
\hfill
\begin{subfigure}[b]{0.225\linewidth}
    \centering
    \includegraphics[width=\textwidth, trim = {1cm, 1.5cm, 1cm, 1cm}]{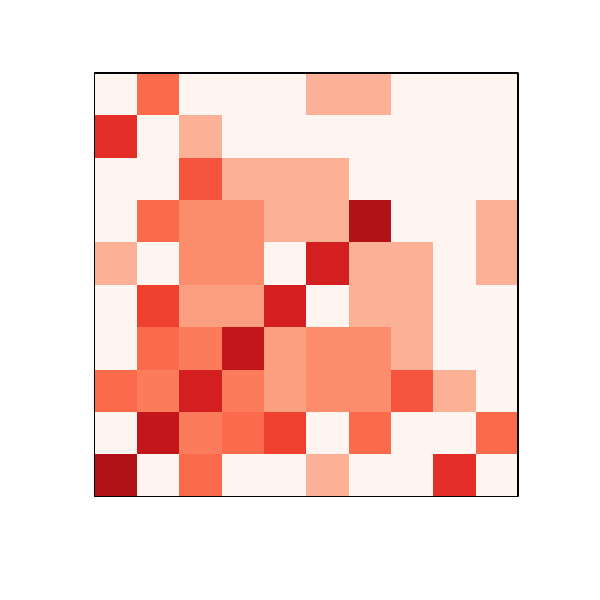}
    \caption{Medic}
\end{subfigure}
\hfill
\begin{subfigure}[b]{0.225\linewidth}
    \centering
    \includegraphics[width=\textwidth, trim = {1cm, 1.5cm, 1cm, 1cm}]{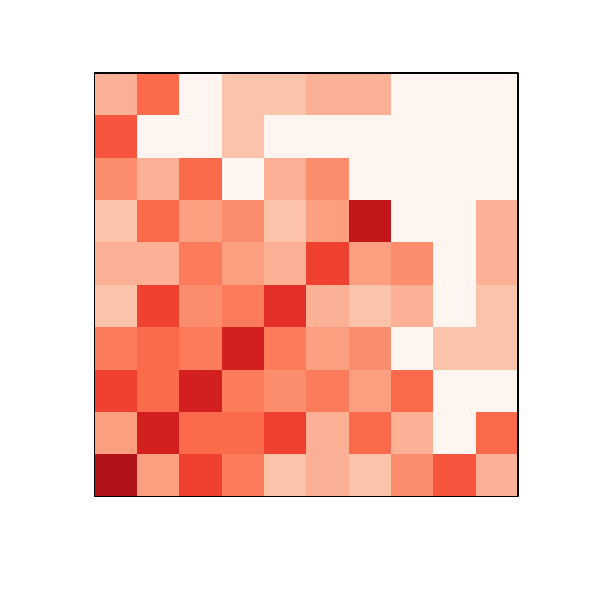}
    \caption{Nonrel}
\end{subfigure}
\\
\begin{subfigure}[b]{0.225\linewidth}
    \centering
    \includegraphics[width=\textwidth, trim = {1cm, 1.5cm, 1cm, 1cm}]{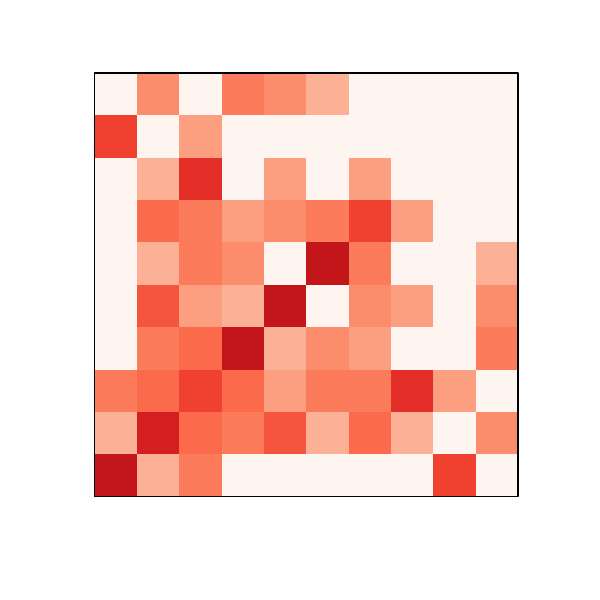}
    \caption{Rel}
\end{subfigure}
\hfill
\begin{subfigure}[b]{0.225\linewidth}
    \centering
    \includegraphics[width=\textwidth, trim = {1cm, 1.5cm, 1cm, 1cm}]{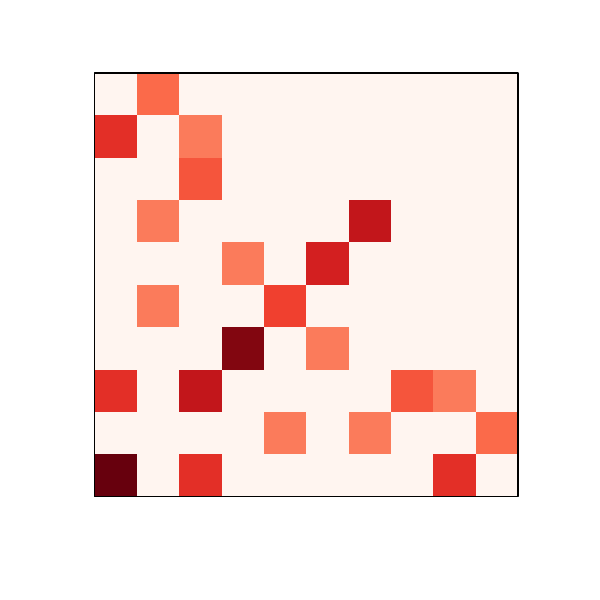}
    \caption{Temple company}
    \label{subfig:mnhist_temple}
\end{subfigure}
\hfill
\begin{subfigure}[b]{0.225\linewidth}
    \centering
    \includegraphics[width=\textwidth, trim = {1cm, 1.5cm, 1cm, 1cm}]{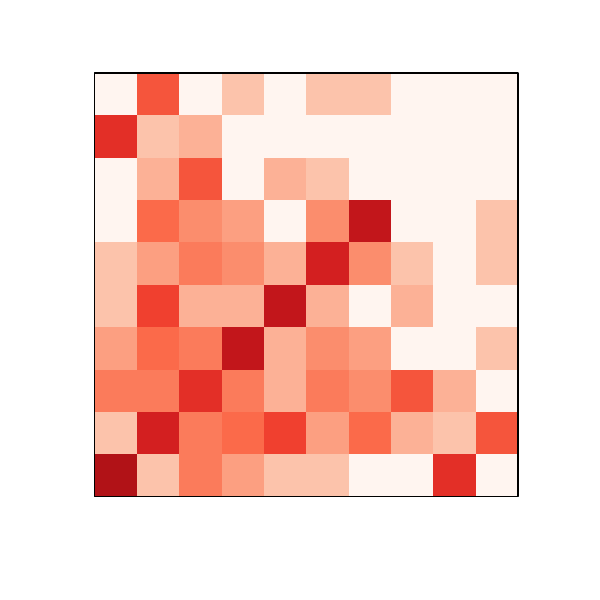}
    \caption{Visit come}
    \label{subfig:mnhist_visitcome}
\end{subfigure}
\hfill
\begin{subfigure}[b]{0.225\linewidth}
    \centering
    \includegraphics[width=\textwidth, trim = {1cm, 1.5cm, 1cm, 1cm}]{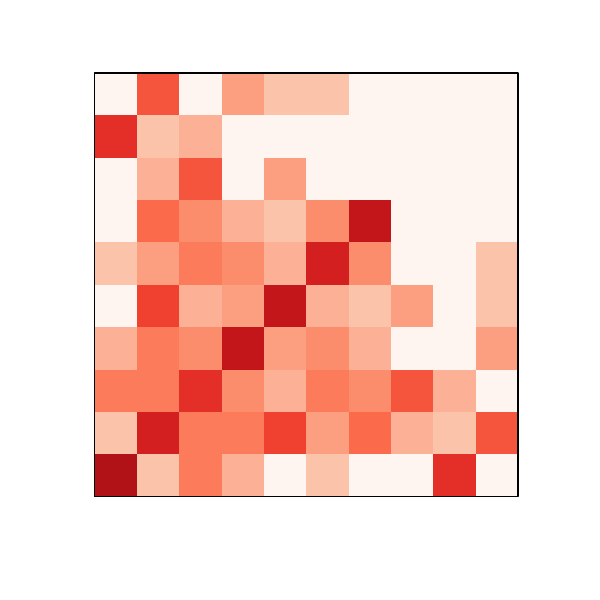}
    \caption{Visit go}
    \label{subfig:mnhist_visitgo}
\end{subfigure}
    \caption{Heatmaps of $\{\widehat{f}^{(\ell)}(x,y)\}^{1/4}$ estimated by multi-network histogram. In each heat map, the bottom left bin corresponds to $(1,1)$, and the top right one corresponds to $(10,10)$. From left to right, the first nine groups have size $23$, and the last group has size $24$.}
    \label{fig:indian_vil_40_multinethist}
\end{figure}
\begin{figure}[!ht]
        \centering
     \begin{subfigure}[b]{0.475\linewidth}
         \centering
    \includegraphics[width=\textwidth]{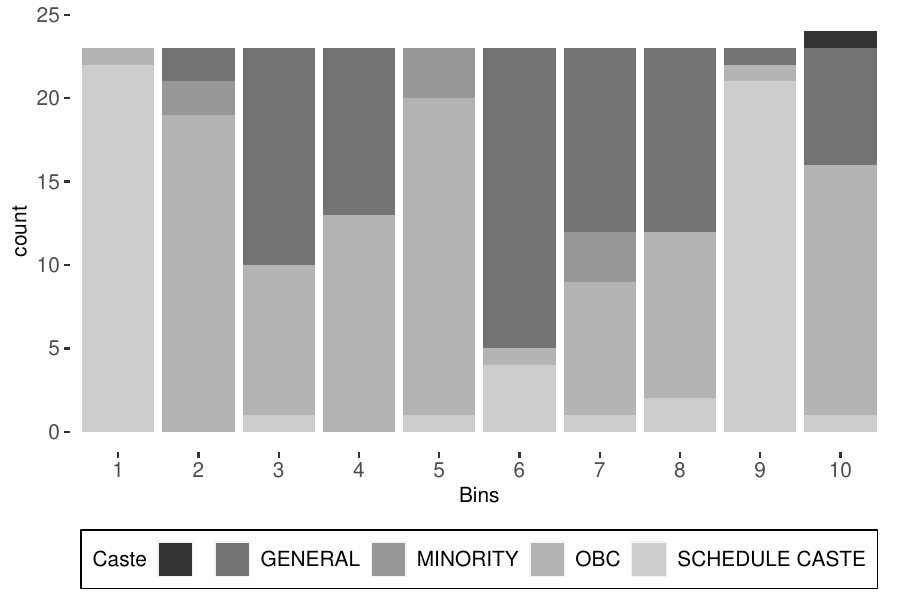}
    \caption{Caste}
    \label{fig:vil_40_caste}
    \end{subfigure}
    % \newline
    %    \centering
     \begin{subfigure}[b]{0.475\linewidth}
         \centering
    \includegraphics[width=\textwidth]{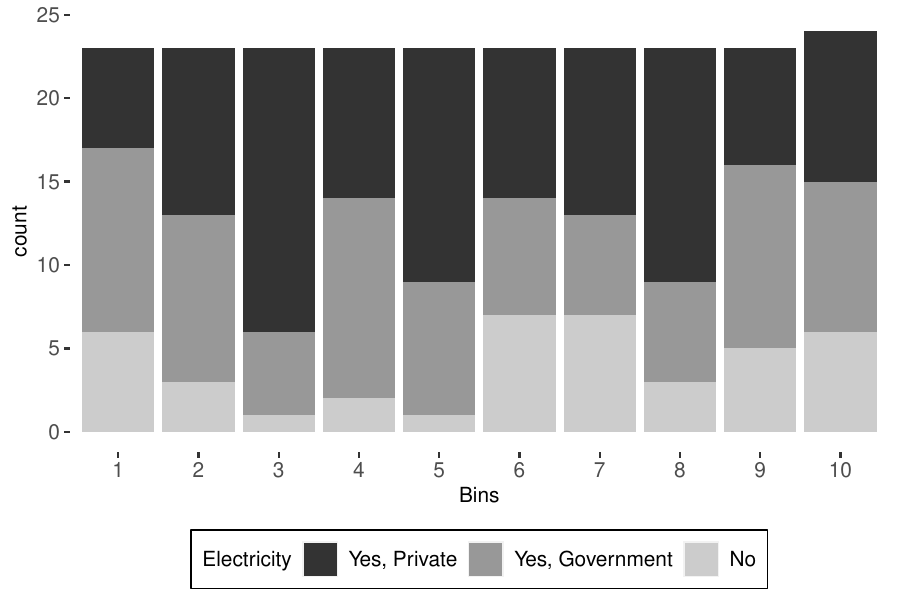}
    \caption{Electricity}
    \label{fig:vil_40_electricity}
    \end{subfigure}
\caption{Summary of covariates by bins (groups) from the multi-network histogram. Bin IDs correspond to their left-to-right positions in Figure~\ref{fig:indian_vil_40_multinethist}.}
\label{fig:indian_vil_40_covariate}
\end{figure}

We fit the proposed multi-network histograms to this dataset. Using Algorithm~\ref{alg:bw_sel}, we obtain the data-driven bandwidth, $\widehat{h}=23$, that is applied to all layers. Applying~Algorithm~\ref{alg:multi_nethist_combined_short} with this bandwidth splits the 231 households in the village into ten groups, including nine groups consisting of 23 households and one group of 24 households. With these groups, we estimate the functions $\{f^{(\ell)}\}_{\ell=1}^{12}$ through blockmodel approximation, as presented in Figure~\ref{fig:indian_vil_40_multinethist}. The position of each group in the heatmaps is identical across the layers, and each block corresponds to a pair of groups. From left to right and bottom to top, the blocks correspond to groups 1 through 10. For example, the bottom-left block maps to $(1,1)$, and the top-right block represents $(10,10)$. Darker blocks imply larger values of $\widehat{f}^{(\ell)}$.

Figure~\ref{fig:indian_vil_40_multinethist} illustrates both global and local structural features of the networks. First, the estimated functions $\{\widehat{f}^{(\ell)}\}_{\ell=1}^{12}$ exhibit distinct global structures from layer to layer. For instance, in relatively sparse layers such as Temple Company, few blocks have large function values, while most blocks are zero. However, in denser layers such as Visit Come, many blocks show positive values, indicating connections among multiple groups for this socioeconomic activity.
Next, blocks in the estimated graphons have different interaction patterns, highlighting variations in their local structures. For example, the bottom-left block $(1,1)$ shows darker colours across all layers, implying that the households in this group tend to participate together in various social activities. In contrast, the adjacent block $(2,1)$ has non-zero values only in a subset of layers such as Kero Rice Come/Go, Visit Come/Go, and Nonrel/Rel, while showing little or no connections in other layers. This suggests that their interactions primarily occur in activities with lower socioeconomic costs.

This data analysis further demonstrates that the proposed method can estimate graphons more accurately even in sparser layers. As shown in Theorem~\ref{thm:WMISE}, even when a layer has few edges, the bandwidth of the multi-network histogram can be reduced by borrowing vertex information from other layers  having a larger number of edges. In this example, the selected bandwidth is $23$, allowing the structure of the Temple Company layer to be visualized at a higher resolution. This cannot be achieved using a single-layer approach, which results in a single large block with a bandwidth of 231, as shown in Figure~\ref{subfig:nethist_templecompany}.

We also find that our method captures characteristics of each group to some extent. Figures~\ref{fig:vil_40_caste} and \ref{fig:vil_40_electricity} display counts by castes and electricity access across vertex groups, respectively, highlighting differences among groups. In Figure~\ref{fig:vil_40_caste}, groups 1 and 9 are predominantly scheduled caste, whereas group 6 contains mostly general caste. Figure~\ref{fig:vil_40_electricity} presents that the majority of groups 3, 5, and 8 rely on electricity from private companies, while groups 1 and 9 rely on it relatively less. It is worth noting that our method partitions the groups using only the network connections, without incorporating any additional covariates. 
Nevertheless, the estimated groups reflect household characteristics, which may be explained by the effect of these features on interaction patterns.
\begin{figure}[!ht]
    \centering
\includegraphics[width=0.5\linewidth 
%, trim={1cm, 1.5cm, 0cm, 1.5cm}
]{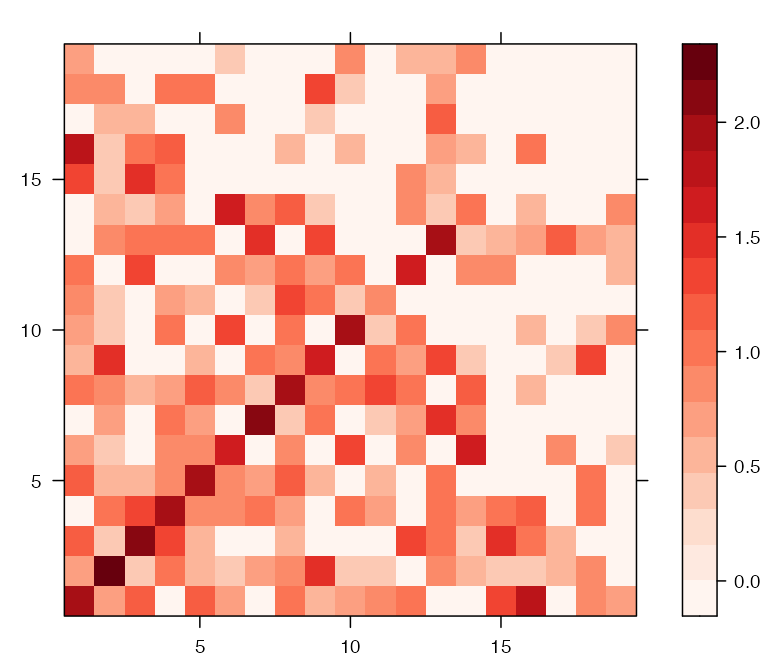}
\caption{Heatmap of $\{\widehat{f}(x,y)\}^{1/4}$ from the homogeneous multi-network histogram fitted to the nine selected layers: borrow/lend money, give advice, help decision, kero rice come/go, medic, visit come/go. }
\label{fig:indian_vil_40_homomultinethist}
\end{figure}

Lastly, we fit a homogeneous multi-network histogram to estimate a common graphon for layers with a similar structure. To identify these layers, we employ a two-sample network test proposed by \cite{shao2022Higherorder} with multiple testing. This procedure selects nine layers, including borrow/lend money, give advice, help decision, kero rice come/go, medic, visit come/go (see Supplementary Material Section~\ref{sec_append:indian_vils} for details). With the data-driven bandwidth of 12 obtained from \eqref{eq:oracle_bandwidth_MISE_homogeneous}, we employ Algorithm~\ref{alg:multi_nethist_combined_short} for the homogeneous case to fit a homogeneous multi-network histogram. The result is presented in Figure~\ref{fig:indian_vil_40_homomultinethist}. As outlined in Theorem~\ref{thm:MISE_homogeneous}, the homogeneous multi-network histogram produces a higher-resolution heatmap than the multi-network histogram in Figure~\ref{fig:indian_vil_40_multinethist}, providing a more precise approximation to the underlying graphon.

\section{Discussion}
\label{sec:discussion}

In this paper, we study the estimation of multiple graphons from a multiplex network realization. The birds-eye view of this problem considers 
overall heterogeneity and commonality of the vertex set across network layers. 
Overall heterogeneity accounts for layer-specific interaction structures and levels of edge sparsity. 
Considering different sparsity levels
is especially important when layers have varying sparsity orders, not just different rate constants, since the relative amount of information provided to each layer in the limit diverges as the number of vertices increases. 
Whether a network is considered sparse depends on the choice of generating mechanism~\cite{bianconi2018Multilayer,bianconi2014multiple,boccaletti2014Structure}. We account for varying sparsity across layers using the scaling operation defined in Definition~\ref{def:scaled_multigraphon}.
Even when each layer is represented by a distinct graphon function, each vertex shares a common latent variable across layers, which provides shared information. 
Under these settings, 
if each graphon is smooth, our blockmodel approximation approach can estimate each layer's success probabilities nonparametrically.

Network sparsity has generated considerable controversy in the field over the past few years, since sparse networks are empty in the limit.
Several generative mechanisms have been proposed to avoid empty limits and allow for power-law degrees and sparse realized networks, see for example \cite{caron2017Sparse}. However, it still remains important to study how networks behave under finite but large sampling schemes within the scaled graphon framework.
This point is illustrated in the Indian villages example in \Cref{sec:real_data}, where interactions in the temple company  layer are socially costly for a household, making it unlikely for them to connect with multiple households. Such sparsity corresponds to small parameters in that layer, highlighting the need to identify underlying structures when some layers provide limited information.

If the vertex sets across layers are identical, it is beneficial to group vertices for blockmodel approximation even when the layers share no other features. An area of future work might be to perturb that assumption and assess which methods continue to work. Multilayer networks in general also concern edges between vertices in different layers, so grouping vertices using the supra-adjacency matrix may capture these inter-layer relationships. 

The example of the household networks of Indian villages provides many types of insights. Firstly, it is clear that similar interaction patterns can be found across different types of interactions, but that each interaction comes with an associated cost. 
By combining information across layers to estimate a shared vertex partition, we obtain a higher-resolution estimator of a graph limit, which in turn allows a smaller bandwidth. 
For very sparse layers, such as the temple company layer, this makes a considerable difference.

There are many outstanding questions on the benefits of multiple layers; perhaps the most promising is the instance of privacy retracted layers, where using multiple layers allows for inference under limited information. 
There is also potential to further explore the vector nature of multilayer sparsity. Potential constraints on edges might be to couple degrees between layers. This means the sparsity could be coupled with an edge budget used between layers. As we have observed in the Indian household network data, the combined information across layers provides a higher-resolution understanding of interactions, which gives a more detailed view.

Our theoretical guarantees are established under the oracle label and do not extend to the label estimated by Algorithm~\ref{alg:multi_nethist_combined_short}. In the single-layer network
case, the minimax risk of graphon estimation separates into a nonparametric term
and a clustering term of order $\log k/n$ from assigning $n$ vertices to $k$
groups \cite{gao2015rate}. Within the graphon estimation framework, consistency of joint latent position estimation across multiple networks has only recently begun to be studied. \cite{sogan2026LowComplexity} establish consistency in a dense setting with a single common graphon and without a shared node set, and \cite{chandna2026ordinal} allow for a shared node set via ordinal embedding but likewise assume an aggregated graphon. Our setting is more challenging than theirs, as it allows layers to share a node set while having distinct graphons and sparsity levels that may decay at different orders. Establishing consistency guarantees for the estimated label under this more general setting is an important direction for future work.

\section*{Acknowledgments}

This work was supported by the European Research Council [CoG 2015-682172NETS], within the Seventh European Union Framework Program. 
We thank the anonymous reviewers for their careful reading of the manuscript and constructive comments that helped improve the presentation and clarity of this work.

% can use a bibliography generated by BibTeX as a .bbl file
% BibTeX documentation can be easily obtained at:
% http://www.ctan.org/tex-archive/biblio/bibtex/contrib/doc/

\bibliographystyle{comnet}
\bibliography{ref}

\clearpage
\begin{center}
{\bfseries Supplementary Material: Joint Estimation of Sparse Multilayer Networks via Graph Limits}
\end{center}

\begin{appendix}
\numberwithin{equation}{section}
\numberwithin{lemma}{section}
\numberwithin{theorem}{section}
\numberwithin{corollary}{section}
\numberwithin{figure}{section}
\numberwithin{table}{section}

In the Supplementary Materials, we provide proofs for the theoretical results, technical lemmas, a justification for the weighting scheme in the homogeneous multi-network histogram, additional theory and simulation results, and further details on the real data analysis.

\section{Notation}
\label{sec_append:notation}

We first summarize the notation used throughout the Supplementary Material. Although some terms appear in the main article, they are restated here to ensure this document is self-contained.

A function $f(x,y): [0,1]^2 \to \mathbb{R}_0^+$ is H\"{o}lder-$\alpha$ continuous for $\alpha\in(0,1]$ with notation $f\in\text{H\"{o}lder}^{\alpha}(M)$ if there exists a constant $M$ such that
\begin{align*}
\sup_{(x,y)\neq (x',y')}\frac{|f(x,y)-f(x',y')|}{\|(x,y)-(x',y')\|_2^{\alpha}} \leq M < \infty.
\end{align*}
We assume $n = hk + r$ with the bandwidth $h$ and $0\leq r < h$. Denote an index set by $[n] = \{1,2,\ldots, n\}$. Let $\bz = (z_1,\ldots, z_n) \in \mathcal{Z}_k$ be a group label vector, where $\mathcal{Z}_k \subset [k]^n$ is the collection of possible network-histogram type group assignments.
We denote the set of index pairs in the $(a,b)$-th bin by
\begin{align*}
R_{ab}(\bz) = \begin{cases}
    \{(i,j): i<j, z_i = z_j = a\}, & a = b,\\
    \{(i,j): z_i = a, z_j = b\}, & a \neq b.
\end{cases}
\end{align*}
The group and bin sizes of the multi-network histogram satisfy $h_a = h + r\I(a=k)$ and 
\begin{align*}
h_{ab}^2 := |R_{ab}(\bz)|= \begin{cases}
    \binom{h_a}{2}& a = b\\
    h_a h_b & a \neq b
\end{cases},
\end{align*}
respectively. 

We also restate the definition of the oracle group label vector \eqref{def:oracle_z} in the main article. Let the latent variables $\xi_1,\ldots, \xi_n$ be an \iid sample from the uniform distribution $U(0,1)$ on $[0,1]$. The oracle group label vector $\widetilde{\bz} \in \mathcal{Z}_k$ is defined as
\begin{align*}
\widetilde{z}_i = \min\left\{\left\lceil\frac{\text{rank}(\xi_i)}{h}\right\rceil, k\right\},
\end{align*}
where $\text{rank}(\xi_i)$ is the rank of $\xi_i$ in increasing order, $\xi_{(1)}\leq \xi_{(2)}\leq \cdots \leq \xi_{(n)}$. 

We further introduce notation based on the oracle group labels.
We define a set of vertices pairs under the oracle group labels $\widetilde{\bz}$,
\begin{align}
\label{def:R_ab_oracle}
R_{ab}^* := R_{ab}(\widetilde{\bz}) = 
\begin{cases}
    \{(i,j): i<j, \widetilde{z}_i = \widetilde{z}_j = a\}, & a = b,\\
    \{(i,j): \widetilde{z}_i = a, \widetilde{z}_j = b\}, & a \neq b.
\end{cases}
\end{align}
for $a,b\in[k]$. With the oracle labels, multi-network histograms partition a vertex pair set $\{(i,j): i,j\in[n]\}$ into $k^2$ subsets, which corresponds to a partition of the domain $[0,1]^2$ into $k^2$ blocks. Let $\omega_{ab}$ denote the region in $[0,1]^2$ corresponding to the $(a,b)$-th bin:
\begin{align}
\label{def:omega_ab}
\omega_{ab}=\begin{cases}
\left[{(a-1)h}/{n}, {ah}/{n}\right]\times \left[{(b-1)h}/{n}, {bh}/{n}\right]  & a<k \text{ and } b<k,\\
\left[{(k-1)h}/{n},1\right]\times \left[{(b-1)h}/{n}, {bh}/{n}\right] & a=k \text{ and } b<k,\\
\left[{(a-1)h}/{n}, {ah}/{n}\right]\times \left[{(k-1)h}/{n}, 1\right] & a<k \text{ and } b=k,\\
\left[{(k-1)h}/{n}, 1\right]\times \left[{(k-1)h}/{n}, 1\right] & a=k \text{ and } b=k.
\end{cases}
\end{align}
Consequently, the area of the $(a,b)$-th region $\omega_{ab}$ is 
\begin{align}
\label{def:area_omega_ab}
|\omega_{ab}| = \frac{h_{a}h_{b}}{n^2},
\end{align}
and $\sum_{a,b=1}^k |\omega_{ab}| = 1$.
Within each $\omega_{ab}$, the averages of $f^{(\ell)}$ and $(f^{(\ell)})^2$ are denoted by
\begin{align}
\label{def:avg_f}
\Bar{f}_{ab}^{(\ell)}:=\frac{1}{|\omega_{ab}|}\iint_{\omega_{ab}}f^{(\ell)}(x,y)dxdy,
\ \ 
\overline{(f^{(\ell)})^2}_{ab}:=\frac{1}{|\omega_{ab}|}\iint_{\omega_{ab}}\{f^{(\ell)}(x,y)\}^2dxdy,
\end{align}
respectively.

\section{Proofs of the main results}
\label{sec_append:proofs_main}

In this section, we provide the proofs of Theorem~\ref{thm:WMISE}, and Theorem~\ref{thm:MISE_homogeneous}. The arguments for the former rely on the upper bound for the layer-wise mean integrated squared error (MISE) established in Lemmas~\ref{lem:moment_A_bar} and \ref{lem:layer_MISE}. The proof of the latter builds on Lemmas~\ref{lem:interlayer_covariance} and \ref{lem:moment_A_bar_homo}.

\subsection{Proof of Theorem~\ref{thm:WMISE}}

% \begin{proof}
From \eqref{eq:oracle_WMISE}, the oracle weighted mean integrated square error (WMISE) is naturally bounded by weighted average of layer-wise MISEs, 
\begin{align*}
\text{WMISE}(\widehat{\bff}^*) 
\leq \sum_{\ell=1}^L \frac{\rho_n^{(\ell)}}{\sum_{l=1}^L \rho_n^{(l)}}\E \left( \iint_{[0,1]^2}\left|(\widehat{f}^{(\ell)})^*(x,y;h)-f^{(\ell)}(x,y)\right|^2 dxdy\right).\\
\end{align*}
Applying Lemma~\ref{lem:layer_MISE} yields the upper bound,
\begin{equation}
\begin{aligned}
\label{ineq:WMISE_derivation}
\text{WMISE}(\widehat{\bff}^*) 
&\leq \frac{1}{\sum_{\ell=1}^L \rho_n^{(\ell)}}
\left[\sum_{\ell=1}^L M_{\ell}^2 
\left\{
2^{\alpha}\rho_n^{(\ell)}\left(\frac{h}{n}\right)^{2\alpha}
+ \frac{2\rho_n^{(\ell)}}{(2n)^{\alpha}}
+ \frac{1}{M_{\ell}^2 h^2 }
\right\}\right]\{1+o(1)\}\\
&=\frac{1}{\overline{\rho_n}}
\left\{
2^{\alpha}\left(\frac{h}{n}\right)^{2\alpha}
\overline{M^2 \rho_n}
+ \frac{2\overline{M^2\rho_n}}{(2n)^{\alpha}}
% + \frac{L}{h^2 }
+ \frac{1}{h^2 } %typo-fix
\right\}\{1+o(1)\},
\end{aligned}
\end{equation}
where $\overline{\rho_n} = L^{-1}\sum_{\ell=1}^L\rho_n^{(\ell)}$
and $\overline{M^2\rho_n} = L^{-1}\sum_{\ell=1}^L M_{\ell}^2\rho_n^{(\ell)}$.
The right-hand side of \eqref{ineq:WMISE_derivation} is minimized at
\begin{align*}
(h^*)^{2\alpha+2} &= \frac{n^{2\alpha} L}{\alpha2^{\alpha}\sum_{\ell=1}^L M_{\ell}^{2}\rho_n^{(\ell)}}
= \frac{n^{2\alpha}}{\alpha2^{\alpha}\overline{M^2\rho_n}},
\end{align*}
which corresponds to \eqref{eq:oracle_bandwidth_WMISE}.
By substituting \eqref{eq:oracle_bandwidth_WMISE} into the upper bound of $\text{WMISE}(\widehat{\bff}^*)$ \eqref{ineq:WMISE_derivation}, we can simplify the right-hand side of \eqref{eq:oracle_bandwidth_WMISE} as
\begin{align*}
\text{WMISE}(\widehat{\bff}^*)\Big|_{h=h^*}
&\leq
\frac{1}{\overline{\rho_n}}
\left\{
2^{\alpha}\left(\frac{h^*}{n}\right)^{2\alpha}
\overline{M^2 \rho_n}
+ \frac{2\overline{M^2\rho_n}}{(2n)^{\alpha}}
+ \frac{1}{(h^*)^2 }
\right\}\{1+o(1)\}\\
&= \frac{1}{\overline{\rho_n}}
\left\{\left(\alpha^{\frac{1}{\alpha+1}} + \alpha^{-\frac{\alpha}{\alpha+1}}\right)\left(\frac{2^\alpha \overline{M^2\rho_n}}{n^{2\alpha}}\right)^{\frac{1}{\alpha+1}} + \frac{2\overline{M^2\rho_n}}{(2n)^{\alpha}}\right\}\{1+o(1)\}\\
&= O\left(\left\{\binom{n}{2}\overline{\rho_n}\right\}^{-\frac{\alpha}{\alpha+1}}\right),
% \left[
% 2\left\{\frac{\sum_{\ell=1}^L w_{\ell}^{*}M_{\ell}^2}{\binom{n}{2}L^{-1}\sum_{\ell=1}^L \rho_n^{(\ell)}}\right\}^{1/2} + \frac{1}{n}\sum_{\ell=1}^L w_{\ell}^{*}M_{\ell}^2
% \right]\{1+o(1)\}\\
% =& O\left[\sqrt{\left\{\binom{n}{2}\frac{1}{L}\sum_{\ell=1}^L \rho_n^{(\ell)}\right\}^{-1}}\right].
\end{align*} 
which yields \eqref{ineq:WMISE_oracle}.
% \end{proof}

\subsection{Proof of Theorem~\ref{thm:MISE_homogeneous}}

% \begin{proof}
From \eqref{eq:oracle_MISE_homo}, the MISE of the oracle estimator is bounded by bias-variance decomposition, following the same argument of Lemma~\ref{lem:layer_MISE},
\begin{align*}
\text{MISE}(\widehat{\bff}^*) 
&\leq \E \left( \iint_{[0,1]^2}\left|\widehat{f}^*(x,y;h)-f(x,y)\right|^2 dxdy\right)\\
&=\sum_{a,b=1}^k \iint_{\omega_{ab}}\left[\left\{\text{bias}\left(\widehat{f}^{*}(x,y;h)\right)\right\}^2 + \var\left(\widehat{f}^{*}(x,y;h)\right)\right] dxdy.
\end{align*}
where $(\widehat{f}^{(\ell)})^{*}(x,y;h)$ denotes the oracle homogeneous multi-network histogram
defined in \eqref{eq:oracle_f_homo}. We focus on the terms for each block $\omega_{ab}$ in the last line of the inequality above. By Lemma~\ref{lem:moment_A_bar_homo}, we derive that
\begin{align*}
& \iint_{\omega_{ab}}\left[\left[\text{bias}\left\{\widehat{f}^{*}(x,y;h)\right\}\right]^2
+ \var\left\{\widehat{f}^{*}(x,y;h)\right\}\right]dxdy\\
\leq&
\iint_{\omega_{ab}}
\left\{
\left|\E\widehat{f}_{ab}^* - \Bar{f}_{ab}\right|^2 
+ 
\left|\Bar{f}_{ab}-f(x,y)\right|^2 \right\} dxdy
+
\iint_{\omega_{ab}}
\left\{
 \frac{1}{h_{ab}^2}\sum_{\ell=1}^L (\tau_{\ell}^*)^2\left(\frac{\Bar{f}_{ab}}{\rho_n^{(\ell)}} - \overline{f^2}_{ab}\right)\right\}dxdy\\
&+\iint_{\omega_{ab}}
\left[
\frac{1}{\left(\sum_{\ell=1}^L \rho_n^{(\ell)}\right)^2}
\sum_{\ell=1}^L \frac{M\rho_n^{(\ell)}}{h_{ab}^2(2n)^{\alpha/2}}\{1+o(1)\}
+ \frac{M^2}{(2n)^{\alpha}}
\right]
dxdy\\
%%%%%%%%%%%%%%
\leq& 
\iint_{\omega_{ab}}
\left|\Bar{f}_{ab}
-f(x,y)\right|^2  dxdy
+ |\omega_{ab}|
\left\{
\frac{M^2\{1+o(1)\}}{(2n)^{\alpha}}
+
 \frac{1}{h_{ab}^2}\sum_{\ell=1}^L(\tau_{\ell}^*)^2\left(\frac{\Bar{f}_{ab}}{\rho_n^{(\ell)}} - \overline{f^2}_{ab}\right)\right\}\\
&+ |\omega_{ab}|
\left[
\frac{1}{\left(\sum_{\ell=1}^L \rho_n^{(\ell)}\right)^2}
\sum_{\ell=1}^L \frac{M\rho_n^{(\ell)}}{h_{ab}^2(2n)^{\alpha/2}}\{1+o(1)\}
+ \frac{M^2}{(2n)^{\alpha}}
\right].
\end{align*}
Using similar arguments in the proof of Lemma~\ref{lem:layer_MISE}, we obtain
\begin{align*}
\frac{1}{|\omega_{ab}|}\iint_{\omega_{ab}}
\left|\Bar{f}_{ab}
-f(x,y)\right|^2  dxdy 
\leq     
2^{\alpha}M^2\left(\frac{h}{n}\right)^{2\alpha}\left\{1+ O\left(\frac{h}{n}\right)\right\}.
\end{align*}
Recall $\tau_{\ell}^* = \rho_n^{(\ell)}/\sum_{\ell=1}^L \rho_n^{(l)}$. Moreover, it holds that
\begin{align*}
\sum_{\ell=1}^L \sum_{a,b=1}^k\frac{|\omega_{ab}|}{h_{ab}^2}\frac{(\tau_{\ell}^*)^2\Bar{f}_{ab}}{\rho_n^{(\ell)}}
&=\sum_{\ell=1}^L\sum_{a,b=1}^k \frac{(\tau_{\ell}^*)^2}{\rho_n^{(\ell)}h_{ab}^2}\frac{|\omega_{ab}|}{|\omega_{ab}|}\iint_{\omega_{ab}}f(x,y)dxdy\\
&= \sum_{\ell=1}^L \frac{(\tau_{\ell}^*)^2}{\rho_n^{(\ell)}h^2}\{1+o(1)\}
= \frac{1}{(\sum_{\ell=1}^L \rho_n^{(\ell)})h^2}\{1+o(1)\} ,\\
\sum_{\ell=1}^L \sum_{a,b=1}^k|\omega_{ab}|\frac{(\tau_{\ell}^*)^2\overline{f^2}_{ab}}{h_{ab}^2}
&=\sum_{\ell=1}^L \sum_{a,b=1}^k \frac{(\tau_{\ell}^*)^2}{h_{ab}^2}\frac{|\omega_{ab}|}{|\omega_{ab}|}\iint_{\omega_{ab}}\{f(x,y)\}^2dxdy\\
&= \sum_{\ell=1}^L\frac{(\tau_{\ell}^*)^2}{h^2}\{O(1)+o(1)\}.
\end{align*}
These results follow from $|\omega_{ab}| = h_{ab,r}^2/n^2$ as in \eqref{def:area_omega_ab}, $\iint_{[0,1]^2} f(x,y)dxdy = 1$ by Condition~\ref{cond:multinethist}, and 
\begin{align*}
\sum_{\ell=1}^L \frac{(\tau_{\ell}^*)^2}{\rho_n^{(\ell)}}
= \sum_{\ell=1}^L \frac{1}{\rho_n^{(\ell)}}\frac{(\rho_n^{(\ell)})^2}{(\sum_{l=1}^L \rho_n^{(l)})^2}
= \frac{\sum_{\ell=1}^L \rho_n^{(\ell)}}{(\sum_{\ell=1}^L \rho_n^{(\ell)})^2}
= \frac{1}{\sum_{\ell=1}^L \rho_n^{(\ell)}}.
\end{align*}
Combining all results above with $\sum_{a,b=1}^k |\omega_{ab}|=1$, we have 
\begin{align*}
 &\E \left( \iint_{[0,1]^2}\left|
\widehat{f}^{*}(x,y;h)-f(x,y)\right|^2 dxdy\right)\\
\leq & 
2^{\alpha}M^2\left(\frac{h}{n}\right)^{2\alpha}\left\{1+ O\left(\frac{h}{n}\right)\right\}
+ \left\{\frac{2M^2}{(2n)^{\alpha}}
+ \frac{1}{(\sum_{\ell=1}^L \rho_n^{(\ell)})h^2}\right\}\{1+o(1)\}\\
&+ \sum_{\ell=1}^L\frac{(\tau_{\ell}^*)^2}{h^2}\{O(1)+o(1)\} \\
=& M^2\left\{2^{\alpha}\left(\frac{h}{n}\right)^{2\alpha} + \frac{2}{(2n)^{\alpha}} + \frac{1}{M^2 L \overline{\rho_n} h^2}\right\}\left\{1+o(1)\right\},
\end{align*}
where $\overline{\rho_n} = L^{-1}\sum_{\ell=1}^L\rho_n^{(\ell)}$,
which leads to the claimed upper bound of $\text{MISE}(\widehat{f}^*)$.

Next, we derive the optimal bandwidth $h^*$ by minimizing the upper bound above, 
\begin{align*}
(h^*)^{2\alpha+2} &= \frac{n^{2\alpha}}{\alpha2^{\alpha}M^2 L \overline{\rho_n}},
\end{align*}
which corresponds to \eqref{eq:oracle_bandwidth_MISE_homogeneous}.
The MISE of $\widehat{f}^*$ at $h^*$ is bounded by 
\begin{align*}
 \text{MISE}(\widehat{f}^*)|_{h=h^*} 
 \leq    &
 M^2\left\{2^{\alpha}\left(\frac{h^*}{n}\right)^{2\alpha} + \frac{2}{(2n)^{\alpha}} + \frac{1}{M^2 L \overline{\rho_n}(h^*)^2}\right\}\left\{1+o(1)\right\}\\
 \leq & M^2\left\{\left(\alpha^{\frac{1}{\alpha+1}} + \alpha^{-\frac{\alpha}{\alpha+1}}\right)\left(\frac{2}{n^{2}M^2 L \overline{\rho}_n}\right)^{\frac{\alpha}{\alpha+1}} + \frac{2}{(2n)^{\alpha}}\right\}\{1+o(1)\}\\
=& O\left[\left\{\binom{n}{2}L \overline{\rho_n}\right\}^{-\alpha/(\alpha+1)}\right],
\end{align*}
which is \eqref{ineq:MISE_oracle_homogeneous}.
This completes the proof.
% \end{proof}

\section{Lemmas for Proofs of Main Results}
\label{sec_appendix:aux_lemma}

In this section, we present technical lemmas used to prove the main results. Recall that we observe $L$ adjacency matrices $\bA^{(\ell)}:=\{A_{ij}^{(\ell)}\}$, modeled conditionally on \(\bxi=(\xi_1,\dots,\xi_n)\) by
\begin{equation}
\label{eq:cond_moments}
\begin{aligned}
\E\{A_{ij}^{(\ell)}\,|\,\bxi \}=& \rho_n^{(\ell)}f^{(\ell)}(\xi_i,\xi_j), \\
\var\{A_{ij}^{(\ell)}\,|\,\bxi \}=& \rho_n^{(\ell)}f^{(\ell)}(\xi_i,\xi_j)\{1-\rho_n^{(\ell)}f^{(\ell)}(\xi_i,\xi_j)\},
\end{aligned}
\end{equation}
where $\xi_1,\ldots, \xi_n$ are \iid sample from $U(0,1)$. These are used in the proof of the technical lemma.

\subsection{Technical Lemma for Theorem~\ref{thm:WMISE}}
\label{subsec_append:tech_lem}

We present lemmas to prove the theorems in the main article, providing the moment bounds and layer-wise MISEs, which are essential for theoretical studies for the multi-network histogram estimator.

We consider the multi-network histogram under the oracle label $\widetilde{\bz}$ in the theoretical studies. Restating the bin height of the $(a,b)$-th block in the $\ell$th layer, defined in \eqref{def:oracle_Theta}, using $R_{ab}^*$ in \eqref{def:R_ab_oracle}, we obtain
\begin{equation}
\label{eq:oracle_A_ab_bar}
\begin{aligned}
(\bar{A}_{ab}^{(\ell)})^* &= \frac{1}{h_{ab}^2}\sum_{(i,j)\in R_{ab}^*} A_{ij}^{(\ell)},
\end{aligned}
\end{equation}
for $a,b\in[k]$ and all $\ell \in [L].$
With this definition, Lemma~\ref{lem:moment_A_bar}, adapted from \cite{olhede2014network}, provides bounds for the moments of $(\Bar{A}_{ab}^{(\ell)})^*$. In the following, we restate the lemma and its proof below using the multi-layer notation. 

\begin{lemma}[Moments of $(\Bar{A}_{ab}^{(\ell)})^*$]
\label{lem:moment_A_bar}
% Let $f^{(\ell)}\in\text{H\"{o}lder}^{\alpha}(M_{\ell})$ be a symmetric function on $[0,1]^2 \to \mbR_0^+$.
Assuming that Condition~\ref{cond:multinethist} holds,
the following results apply to the oracle estimator in \eqref{eq:oracle_A_ab_bar}:
\begin{equation}
\label{eq:moment_A_ab_bar}
\begin{aligned}
\left|\E (\bar{A}_{ab}^{(\ell)})^* - \rho_n^{(\ell)}\Bar{f}_{ab}^{(\ell)}\right|
&\leq \frac{M_{\ell}\rho_n^{(\ell)}}{(2n)^{\alpha/2}}\{1+o(1)\},
\end{aligned}
\end{equation}
and
\begin{equation}
\label{eq:moment_A_ab_bar_var}
\begin{aligned}
\left|\var (\bar{A}_{ab}^{(\ell)})^* -\frac{\rho_n^{(\ell)}\Bar{f}_{ab}^{(\ell)} - (\rho_n^{(\ell)})^2\overline{(f^{(\ell)})^2}_{ab}}{h_{ab}^2}\right|
\leq& \frac{M_{\ell}\rho_n^{(\ell)}}{h_{ab}^2(2n)^{\alpha/2}}\{1+o(1)\}
+ \frac{(M_{\ell}\rho_n^{(\ell)})^2}{(2n)^{\alpha}},
\end{aligned}
\end{equation}
where $\Bar{f}_{ab}^{(\ell)}$ and $\overline{(f^{(\ell)})^2}_{ab}$ are defined as \eqref{def:avg_f}.
\end{lemma}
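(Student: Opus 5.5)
The plan follows the single-layer argument of \cite{olhede2014network}, carried out layer by layer. The oracle labels $\widetilde{\bz}$ depend only on the ranks of $\bxi$, so I would condition on the order statistics $\xi_{(1)}\le\cdots\le\xi_{(n)}$. Then $(i,j)\in R_{ab}^*$ means $\text{rank}(\xi_i)$ lies in the $a$th index block and $\text{rank}(\xi_j)$ in the $b$th. By the tower property and \eqref{eq:cond_moments},
\[
\E(\bar A_{ab}^{(\ell)})^*=\frac{\rho_n^{(\ell)}}{h_{ab}^2}\sum_{(i,j)\in R_{ab}^*}\E f^{(\ell)}(\xi_{(i)},\xi_{(j)}),
\]
with a slight abuse of indexing by rank.

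For the mean bound, I would compare $f^{(\ell)}(\xi_{(i)},\xi_{(j)})$ with $f^{(\ell)}(i/(n+1),j/(n+1))$ via the H\"older condition. This gives the pointwise bound
\[
M_\ell\,\E\bigl\{(\xi_{(i)}-\tfrac{i}{n+1})^2+(\xi_{(j)}-\tfrac{j}{n+1})^2\bigr\}^{\alpha/2}.
\]
Jensen's inequality (concavity of $t\mapsto t^{\alpha/2}$) bounds this by $M_\ell\{\var\xi_{(i)}+\var\xi_{(j)}\}^{\alpha/2}$, since $\E\xi_{(i)}=i/(n+1)$. Because $\var\xi_{(i)}=i(n+1-i)/\{(n+1)^2(n+2)\}\le 1/\{4(n+2)\}$, this gives at most $M_\ell\{1/(2n)\}^{\alpha/2}\{1+o(1)\}$. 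Next I would compare the Riemann-type average $h_{ab}^{-2}\sum f^{(\ell)}(i/(n+1),j/(n+1))$ over the block grid with the integral average $\bar f_{ab}^{(\ell)}$ over $\omega_{ab}$. The grid points sit at distance $O(1/n)$ from their cells, so H\"older continuity makes this discrepancy $O(M_\ell n^{-\alpha})$, which is $o(n^{-\alpha/2})$. The diagonal blocks, where pairs with $i<j$ cover only half of $\omega_{aa}$ plus boundary terms, are handled by symmetry of $f^{(\ell)}$ with an $O(1/h)$ relative correction. Multiplying by $\rho_n^{(\ell)}$ gives \eqref{eq:moment_A_ab_bar}.

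For the variance I would use the law of total variance,
\[
\var(\bar A_{ab}^{(\ell)})^*=\E\var\{(\bar A_{ab}^{(\ell)})^*\mid\bxi\}+\var\E\{(\bar A_{ab}^{(\ell)})^*\mid\bxi\}.
\]
Edges are conditionally independent, so the first term equals
\[
h_{ab}^{-4}\sum_{(i,j)\in R_{ab}^*}\E\bigl[\rho_n^{(\ell)}f^{(\ell)}-(\rho_n^{(\ell)}f^{(\ell)})^2\bigr].
\]
The linear piece $\E\rho_n^{(\ell)}f^{(\ell)}$ is handled exactly as in the mean step, with error $M_\ell\rho_n^{(\ell)}/\{h_{ab}^2(2n)^{\alpha/2}\}$. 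The quadratic piece uses $|g^2-g'^2|\le|g-g'|(g+g')$ with $f^{(\ell)}$ bounded; its error is of order $(\rho_n^{(\ell)})^2 M_\ell n^{-\alpha/2}/h_{ab}^2$, which is absorbed into the $\{1+o(1)\}$ factor because $\rho_n^{(\ell)}\le1$ and $f^{(\ell)}$ is bounded. For the second term, I would write the conditional mean as $\rho_n^{(\ell)}\bar f_{ab}^{(\ell)}$ plus a fluctuation $D$ and use $\var(D)\le\E D^2$. Since the pointwise bound from the mean step holds in $L^2$ by the same Jensen argument, $(\E D^2)^{1/2}\le M_\ell\rho_n^{(\ell)}(2n)^{-\alpha/2}$, which gives the $(M_\ell\rho_n^{(\ell)})^2/(2n)^\alpha$ term.

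The main obstacle is the variance step, namely bounding the variance of the conditional mean without correlations among order statistics inflating the constant. The mean-square bound $\E D^2\le \max_{i,j}\E\{f(\xi_{(i)},\xi_{(j)})-f(\tfrac{i}{n+1},\tfrac{j}{n+1})\}^2$, obtained by Jensen over the block average, sidesteps this; it needs H\"older continuity squared, $|x|^{2\alpha}$, with $\E|\cdot|^{2\alpha}\le(\E|\cdot|^2)^{\alpha}$ for $\alpha\le1$. Tracking the constant $2n$ exactly is routine once this is in place.
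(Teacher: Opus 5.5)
Your route is essentially the paper's. The paper also bridges through the grid values $f^{(\ell)}(i_n,j_n)$, $i_n=i/(n+1)$, and applies Lemmas 2--5 of \cite{olhede2014network} layer by layer. Those lemmas are the order-statistic Jensen bound, the variance computation, and the Riemann-sum comparisons that you re-derive. Your mean bound and your bound on $\var\E\{(\bar A_{ab}^{(\ell)})^*\mid\bxi\}$ are fine.

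\textbf{The gap.} One step fails as justified: absorbing the quadratic piece of $\E\var\{(\bar A_{ab}^{(\ell)})^*\mid\bxi\}$ into $\{1+o(1)\}$. Your bound on that piece is $2\rho_n^{(\ell)}\|f^{(\ell)}\|_\infty$ times your bound on the linear piece. Boundedness of $f^{(\ell)}$ and $\rho_n^{(\ell)}\le1$ only make this factor $O(1)$; since $\rho_n^{(\ell)}f^{(\ell)}\le1$ it can be as large as $2$. It is $o(1)$ only if $\rho_n^{(\ell)}\to0$, which Condition~\ref{cond:multinethist} does not require, and dense layers with $\rho_n^{(\ell)}\asymp1$ are allowed. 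So as written you get leading constant $1+2\rho_n^{(\ell)}\|f^{(\ell)}\|_\infty$ (up to $3$) in front of $M_\ell\rho_n^{(\ell)}/\{h_{ab}^2(2n)^{\alpha/2}\}$, not $1+o(1)$.

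\textbf{The fix.} Keep the two pieces together at the stochastic step. Set $t=\rho_n^{(\ell)}f^{(\ell)}(\xi_{(i)},\xi_{(j)})$ and $s=\rho_n^{(\ell)}f^{(\ell)}(i_n,j_n)$, both in $[0,1]$. Then
\[
|t(1-t)-s(1-s)|=|t-s|\,|1-t-s|\le|t-s|,
\]
so the per-pair error is at most $\rho_n^{(\ell)}M_\ell\{2(n+2)\}^{-\alpha/2}$, which is exactly the stated constant. Splitting into $f$ and $f^2$ is harmless only at the deterministic Riemann-sum stage. There both errors are $O(\rho_n^{(\ell)}M_\ell n^{-\alpha}/h_{ab}^2)=o(\rho_n^{(\ell)}M_\ell n^{-\alpha/2}/h_{ab}^2)$, using $\rho_n^{(\ell)}\|f^{(\ell)}\|_\infty\le1$. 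This is how the paper arranges it. It compares the variance first with the combined grid quantity, i.e.\ the block average of $s(1-s)$ divided by $h_{ab}^2$. Only afterwards does it separate the $f$ and $f^2$ Riemann errors.

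\textbf{Two smaller points.}

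First, on diagonal blocks, dropping the $h_a$ diagonal terms changes the grid average by $h_a^{-1}$ times a difference of $f^{(\ell)}$-values inside $\omega_{aa}$. By H\"older continuity this is $O(M_\ell h^{\alpha-1}n^{-\alpha})$, and your ``$O(1/h)$ relative correction'' must be read this way. A correction of size $O(\bar f_{aa}^{(\ell)}/h)$ is not $o(n^{-\alpha/2})$ under Condition~\ref{cond:multinethist}~i). For example, take $\alpha=1$ and $h\asymp n^{1/2}$, or any slowly growing $h$.

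Second, in the $\var\E(\cdot\mid\bxi)$ term, center $D$ at the grid average $\rho_n^{(\ell)}h_{ab}^{-2}\sum f^{(\ell)}(i_n,j_n)$ rather than at $\rho_n^{(\ell)}\bar f_{ab}^{(\ell)}$; your displayed max bound implicitly does this. Variance is shift invariant, so this costs nothing and yields $(M_\ell\rho_n^{(\ell)})^2(2n)^{-\alpha}$ with no extra $\{1+o(1)\}$, as the statement has. Centering at $\rho_n^{(\ell)}\bar f_{ab}^{(\ell)}$ adds a deterministic $O(\rho_n^{(\ell)}M_\ell n^{-\alpha})$ bias. The slack between $n$ and $n+2$ does not absorb it.
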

\begin{proof}
To derive layer-wise moment bounds, we apply the argument of Proposition 1 of \cite{olhede2014network} to each layer. First, we rewrite the oracle estimator \eqref{eq:oracle_A_ab_bar}. Following the notation from \cite{wolfe2013Nonparametric}, we denote by $A_{(i)(j)}^{(\ell)}$, the adjacency matrix entry corresponding to the order statistic of latent variables $\xi_{(i)}$ and $\xi_{(j)}$, where $(i)$ is the index of the $i$th smallest ordered statistic $\xi_{(i)}$ from $\bxi$. 
With this notation, \eqref{eq:oracle_A_ab_bar} can be rewritten as
\begin{align}
\label{eq:A_bar_oracle_ordered}
(\bar{A}_{ab}^{(\ell)})^* = \frac{1}{h_{ab}^2} \sum_{((i),(j))\in R_{ab}^*} A_{(i)(j)}^{(\ell)},
\end{align}
where $R_{ab}^*$ is defined as \eqref{def:R_ab_oracle}. Then, it holds that 
\begin{align*}
\E (\bar{A}_{ab}^{(\ell)})^* &= \frac{1}{h_{ab}^2} \sum_{(i,j)\in R_{ab}^*}\E A_{(i)(j)}^{(\ell)}.
\end{align*}
We also follow the notation in \cite{wolfe2013Nonparametric, olhede2014network} for the next step, defining
\begin{equation}
    \widetilde{f}_{ab}^{(\ell)} := \frac{1}{h_{ab}^2}\sum_{((i),(j))\in R_{ab}^*} f^{(\ell)}(i_n,j_n),
\end{equation}
where $i_n = i/(n+1)$ and $j_n = j/(n+1)$. This quantity will serve as a bridge for deriving bounds using the triangle inequality.

We first derive the bound for the expected value of $(\bar{A}_{ab}^{(\ell)})^*$. By the triangle inequality and Jensen's inequality, we obtain
\begin{align*}
\left|\E (\bar{A}_{ab}^{(\ell)})^* - \rho_n^{(\ell)}\Bar{f}_{ab}^{(\ell)}\right|
&\leq \left|\E (\bar{A}_{ab}^{(\ell)})^* - \rho_n^{(\ell)}\widetilde{f}_{ab}^{(\ell)}\right|
+ \left|\rho_n^{(\ell)}\widetilde {f}_{ab}^{(\ell)}- \rho_n^{(\ell)}\Bar{f}_{ab}^{(\ell)}\right|\\
&= \left|\E \{\rho_n^{(\ell)} f^{(\ell)}(\xi_{(i)},\xi_{(j)})\} - \rho_n^{(\ell)}\widetilde{f}_{ab}^{(\ell)}\right|
+ \left|\rho_n^{(\ell)}\widetilde {f}_{ab}^{(\ell)}- \rho_n^{(\ell)}\Bar{f}_{ab}^{(\ell)}\right|\\
&\leq \rho_n^{(\ell)}
\left(
\E\left|f^{(\ell)}(\xi_{(i)},\xi_{(j)}) - \widetilde{f}_{ab}^{(\ell)}\right|
+ \left|\widetilde {f}_{ab}^{(\ell)}- \Bar{f}_{ab}^{(\ell)}\right| 
\right),
\end{align*}
for each $\ell\in [L]$, where the second equality follows from $\E A_{(i)(j)}^{(\ell)} =\E\{\E(A_{(i)(j)}^{(\ell)}|\xi_{(i)},\xi_{(j)})\} = \E\{\rho_n^{(\ell)}f^{(\ell)}(\xi_{(i)},\xi_{(j)})\}$ by \eqref{eq:cond_moments}.
In the last line of the inequality above, the first and second terms are bounded by Lemma 3 and Lemma 4 of \cite{olhede2014network}, respectively. This establishes the bound in \eqref{eq:moment_A_ab_bar}.

Lastly, we derive a bound for the variance of $(\bar{A}_{ab}^{(\ell)})^*$. It holds that
\begin{align*}
\var (\Bar{A}_{ab}^{(\ell)})^* &= 
\frac{1}{h_{ab}^4}\sum_{((i),(j))\in R_{ab}^*}\sum_{((i'),(j'))\in R_{ab}^*}\cov\{A_{(i)(j)}^{(\ell)},A_{(i')(j')}^{(\ell)}\}.
\end{align*}
By Lemma 2 of \cite{olhede2014network} and triangle inequality, it holds that
\begin{align*}
&\left|\var (\bar{A}_{ab}^{(\ell)})^* -\frac{\rho_n^{(\ell)}\Bar{f}_{ab}^{(\ell)} - (\rho_n^{(\ell)})^2\overline{(f^{(\ell)})^2}_{ab}}{h_{ab}^2}\right|\\
\leq& \left|\var (\bar{A}_{ab}^{(\ell)})^* 
- \frac{\rho_n^{(\ell)}\widetilde{f}_{ab}^{(\ell)} - (\rho_n^{(\ell)})^2\widetilde{(f^{(\ell)})^2}_{ab}}{h_{ab}^2}\right| \\
&+ \left|\frac{\rho_n^{(\ell)}\widetilde{f}_{ab}^{(\ell)} - (\rho_n^{(\ell)})^2\widetilde{(f^{(\ell)})^2}_{ab}}{h_{ab}^2} -\frac{\rho_n^{(\ell)}\Bar{f}_{ab}^{(\ell)} - (\rho_n^{(\ell)})^2\overline{(f^{(\ell)})^2}_{ab}}{h_{ab}^2}\right|,
\end{align*}
where $\Bar{f}_{ab}^{(\ell)}$ and $\overline{(f^{(\ell)})^2}_{ab}$ are defined as \eqref{def:avg_f}, and
\begin{align*}
\widetilde{(f^{(\ell)})^2}_{ab} := \frac{1}{h_{ab}^2}\sum_{((i),(j))\in R_{ab}^*}\{f^{(\ell)}(i_n,j_n)\}^2.
\end{align*}
Then, the first term on the right-hand side is bounded by Lemma 2 of \cite{olhede2014network}. The second term can be further decomposed using the triangle inequality,
\begin{align*}
&\left|\frac{\rho_n^{(\ell)}\widetilde{f}_{ab}^{(\ell)} - (\rho_n^{(\ell)})^2\widetilde{(f^{(\ell)})^2}_{ab}}{h_{ab}^2} -\frac{\rho_n^{(\ell)}\Bar{f}_{ab}^{(\ell)} - (\rho_n^{(\ell)})^2\overline{(f^{(\ell)})^2}_{ab}}{h_{ab}^2}\right|\\
\leq& 
\frac{\rho_n^{(\ell)}}{h_{ab}^2}
\left|\widetilde{f}_{ab}^{(\ell)} -\Bar{f}_{ab}^{(\ell)}\right|
+
\frac{(\rho_n^{(\ell)})^2}{h_{ab}^2}
\left|\widetilde{(f^{(\ell)})^2}_{ab} - \overline{(f^{(\ell)})^2}_{ab}\right|.
\end{align*}
where the two terms on the right-hand side are bounded by Lemma 4 and 5 of \cite{olhede2014network}, respectively. This proves the desired result.
\end{proof}

Lemma~\ref{lem:layer_MISE}, a generalization of Theorem 1 of \cite{olhede2014network} under the H\"{o}lder-$\alpha$ smoothness condition, is the key tool for deriving the upper bound of the WMISE for $\widehat{\bff}^*$ in Theorem~\ref{thm:WMISE}.

\begin{lemma}
\label{lem:layer_MISE}
Assume that Condition~\ref{cond:multinethist} holds.  Then, \eqref{def:oracle_z} is bounded by 
\begin{align*}
&\E \left(\iint_{[0,1]^2} \left|(\widehat{f}^{(\ell)})^*(x,y;h)-f^{(\ell)}(x,y)\right|^2 dxdy\right)\\
&\leq 
M_{\ell}^2 
\left\{2^{\alpha}\left(
\frac{h}{n}\right)^{2\alpha}
+ \frac{2}{(2n)^{\alpha}}
+ \frac{1}{M_{\ell}^2\rho_n^{(\ell)}h^2}\right\}
\{1+o(1)\}.
\end{align*}
\end{lemma}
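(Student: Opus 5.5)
We bound the layer-wise MISE by a bias--variance decomposition on the histogram blocks $\omega_{ab}$ in \eqref{def:omega_ab}. On $\omega_{ab}$ the oracle estimator \eqref{eq:oracle_f} is constant, equal to $\widehat f^{(\ell)*}_{ab}:=(\rho_n^{(\ell)})^{-1}(\bar A^{(\ell)}_{ab})^*$. By Fubini,
\begin{align*}
\E\iint_{[0,1]^2}\bigl|(\widehat f^{(\ell)})^*-f^{(\ell)}\bigr|^2
=\sum_{a,b=1}^k\iint_{\omega_{ab}}\Bigl\{\var\widehat f^{(\ell)*}_{ab}+\bigl(\E\widehat f^{(\ell)*}_{ab}-f^{(\ell)}(x,y)\bigr)^2\Bigr\}dxdy .
\end{align*}
We then insert $\bar f^{(\ell)}_{ab}$ in the bias. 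Since $\iint_{\omega_{ab}}(\bar f^{(\ell)}_{ab}-f^{(\ell)})\,dxdy=0$, the cross term vanishes, and the squared bias splits exactly into $|\omega_{ab}|\,|\E\widehat f^{(\ell)*}_{ab}-\bar f^{(\ell)}_{ab}|^2+\iint_{\omega_{ab}}|\bar f^{(\ell)}_{ab}-f^{(\ell)}|^2$.

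Next we control each piece. By \eqref{eq:moment_A_ab_bar} of Lemma~\ref{lem:moment_A_bar}, after dividing by $\rho_n^{(\ell)}$, the first bias piece is at most $|\omega_{ab}|M_\ell^2(2n)^{-\alpha}\{1+o(1)\}$. The variance bound \eqref{eq:moment_A_ab_bar_var}, divided by $(\rho_n^{(\ell)})^2$, gives
\begin{align*}
\var\widehat f^{(\ell)*}_{ab}\le \frac{\bar f^{(\ell)}_{ab}}{\rho_n^{(\ell)}h_{ab}^2}+\frac{M_\ell}{\rho_n^{(\ell)}h_{ab}^2(2n)^{\alpha/2}}\{1+o(1)\}+\frac{M_\ell^2}{(2n)^\alpha}.
\end{align*}
Here we drop the negative term $-\overline{(f^{(\ell)})^2}_{ab}/h_{ab}^2$. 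Together, the two $M_\ell^2(2n)^{-\alpha}$ contributions summed against $\sum|\omega_{ab}|=1$ produce the $2M_\ell^2/(2n)^\alpha$ term.

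For the leading variance term we use $|\omega_{ab}|=h_ah_b/n^2$ and $h_{ab}^2=h_ah_b\{1+O(h^{-1})\}$; the latter holds including the diagonal $\binom{h_a}{2}$ case, since $h\to\infty$. Then
\begin{align*}
\sum_{a,b}|\omega_{ab}|\frac{\bar f^{(\ell)}_{ab}}{\rho_n^{(\ell)}h_{ab}^2}=\frac{1}{\rho_n^{(\ell)}}\sum_{a,b}\frac{1}{h_{ab}^2}\iint_{\omega_{ab}}f^{(\ell)} .
\end{align*}
The factor $1/h_{ab}^2$ equals $h^{-2}\{1+o(1)\}$ for all blocks except those touching the last group, where $h_k=h+r<2h$. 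Those blocks carry total area $O(h/n)=o(1)$, so by integrability of $f^{(\ell)}$ and $\iint f^{(\ell)}=1$ the sum is $\{\rho_n^{(\ell)}h^2\}^{-1}\{1+o(1)\}$. The remaining variance term, of order $M_\ell/(\rho_n^{(\ell)}h^2(2n)^{\alpha/2})$, is $o$ of the leading one and is absorbed into $\{1+o(1)\}$.

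The approximation bias $\iint_{\omega_{ab}}|\bar f^{(\ell)}_{ab}-f^{(\ell)}|^2$ comes from the Hölder condition in Condition~\ref{cond:multinethist} ii). Writing $\bar f^{(\ell)}_{ab}$ as an average over $(x',y')\in\omega_{ab}$ and applying Jensen gives $|\bar f_{ab}-f(x,y)|^2\le M_\ell^2\sup\|(x,y)-(x',y')\|_2^{2\alpha}$. For a square block of side $h/n$ the diameter is $\sqrt2\,h/n$, giving $2^\alpha M_\ell^2(h/n)^{2\alpha}$. The last row and column of blocks have side up to $(h+r)/n$, but their total area is $O(h/n)$, so their extra contribution is $O(h/n)$ relative and goes into $\{1+o(1)\}$. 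Summing all pieces yields the claimed bound.

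The main obstacle is bookkeeping the $o(1)$ factors uniformly over blocks, especially the uneven last group and the diagonal blocks. A second delicate point is that the $\{1+o(1)\}$ in the variance bound multiplies the variance term; this requires $h\to\infty$ and $n^{-\alpha/2}\to0$ from Condition~\ref{cond:multinethist} i). We will also check that the $(M_\ell\rho_n^{(\ell)})^2(2n)^{-\alpha}$ term in \eqref{eq:moment_A_ab_bar_var}, after dividing by $(\rho_n^{(\ell)})^2$, is indeed the second $M_\ell^2/(2n)^\alpha$ term rather than something larger.
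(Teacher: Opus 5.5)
Your proposal follows essentially the same route as the paper's proof. It uses a blockwise bias--variance decomposition, the moment bounds of Lemma~\ref{lem:moment_A_bar}, the H\"{o}lder approximation error, and the computation of $\sum_{a,b}|\omega_{ab}|\bar f^{(\ell)}_{ab}/(\rho_n^{(\ell)}h_{ab}^2)$. You derive the approximation error directly via Jensen and the block diameter, where the paper cites Lemma 1 of \cite{olhede2014network}. Your explicit vanishing cross-term argument and your dropping of the negative $-\overline{(f^{(\ell)})^2}_{ab}/h_{ab}^2$ term are cleaner justifications than the paper gives for the same steps.

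One factual slip needs correcting. On diagonal blocks, $h_{aa}^2=\binom{h_a}{2}=\frac{1}{2}h_a^2\{1+O(h^{-1})\}$, not $h_a^2\{1+O(h^{-1})\}$, so $1/h_{aa}^2\approx 2/h^2$ there. This is harmless for the result. Like the last row and column, the diagonal blocks have total area $kh^2/n^2=O(h/n)$. Their contribution to the leading variance sum is therefore at most $2\{\rho_n^{(\ell)}h^2\}^{-1}\{1+o(1)\}\iint_{\cup_a\omega_{aa}}f^{(\ell)}=o[\{\rho_n^{(\ell)}h^2\}^{-1}]$. Put them in your exceptional set together with the blocks touching the last group.
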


\begin{proof}
The proof uses the argument of Theorem 1 of \cite{olhede2014network} to each layer. For the $\ell$th layer, we obtain
\begin{align*}
&\ \E \left( \iint_{[0,1]^2}\left|(\widehat{f}^{(\ell)})^*(x,y;h)-f^{(\ell)}(x,y)\right|^2 dxdy\right)\\
=& \sum_{a,b=1}^k \iint_{\omega_{ab}}
\E\left| \left(\rho_n^{(\ell)}\right)^{-1}\left(\bar{A}_{ab}^{(\ell)}\right)^*-f^{(\ell)}(x,y)\right|^2 dxdy\\
=& \sum_{a,b=1}^k \iint_{\omega_{ab}}
\left\{\left|\left(\rho_n^{(\ell)}\right)^{-1}\E\left(\bar{A}_{ab}^{(\ell)}\right)^*-f^{(\ell)}(x,y)\right|^2 
+ \left(\rho_n^{(\ell)}\right)^{-2}\var \left(\bar{A}_{ab}^{(\ell)}\right)^*
\right\}dxdy\\
\end{align*}
from the bias-variance decomposition where $\omega_{ab}$ is the region of the $(a,b)$-th block defined in \eqref{def:omega_ab}. By Lemma~\ref{lem:moment_A_bar}, it holds that
\begin{align*}
&\iint_{\omega_{ab}}
\left\{\left|\left(\rho_n^{(\ell)}\right)^{-1}\E\left(\bar{A}_{ab}^{(\ell)}\right)^*-f^{(\ell)}(x,y)\right|^2 
+ \left(\rho_n^{(\ell)}\right)^{-2}\var \left(\bar{A}_{ab}^{(\ell)}\right)^*
\right\}dxdy\\
\leq&
\iint_{\omega_{ab}}
\left\{
\left|\left(\rho_n^{(\ell)}\right)^{-1}\E\left(\bar{A}_{ab}^{(\ell)}\right)^* - \Bar{f}_{ab}^{(\ell)}\right|^2 
+ 
\left|\Bar{f}_{ab}^{(\ell)}
-f^{(\ell)}(x,y)\right|^2 
\right\} dxdy\\
&+
\iint_{\omega_{ab}}
\left\{
\frac{\Bar{f}_{ab}^{(\ell)} - \rho_n^{(\ell)}\overline{(f^{(\ell)})^2}_{ab}}{\rho_n^{(\ell)}h_{ab}^2}
+ \frac{M_{\ell}\{1+o(1)\}}{\rho_n^{(\ell)}h_{ab}^2(2n)^{\alpha/2}}
+ \frac{M_{\ell}^2}{(2n)^{\alpha}}
\right\}dxdy\\
\leq& 
\iint_{\omega_{ab}}
\left|\Bar{f}_{ab}^{(\ell)}
-f^{(\ell)}(x,y)\right|^2  dxdy\\
&+ |\omega_{ab}|
\left\{
\frac{M_{\ell}^2\{1+o(1)\}}{(2n)^{\alpha}}
+
\frac{\Bar{f}_{ab}^{(\ell)} - \rho_n^{(\ell)}\overline{(f^{(\ell)})^2}_{ab}}{\rho_n^{(\ell)}h_{ab}^2}
+ \frac{M_{\ell}\{1+o(1)\}}{\rho_n^{(\ell)}h_{ab}^2(2n)^{\alpha/2}}
+ \frac{M_{\ell}^2}{(2n)^{\alpha}}
\right\},
\end{align*}
where $|\omega_{ab}|= h_a h_b/n^2$ as in \eqref{def:area_omega_ab}. By Lemma 1 of \cite{olhede2014network}, we have
\begin{align*}
\frac{1}{|\omega_{ab}|}\iint_{\omega_{ab}}
\left|\Bar{f}_{ab}^{(\ell)}
-f^{(\ell)}(x,y)\right|^2  dxdy 
\leq     
2^{\alpha}M_{\ell}^2\left(\frac{h}{n}\right)^{2\alpha}\left\{1+ O\left(\frac{h}{n}\right)\right\}.
\end{align*}
Moreover, it holds that
\begin{align}
&\sum_{a,b=1}^k|\omega_{ab}|\frac{\Bar{f}_{ab}^{(\ell)}}{\rho_n^{(\ell)}h_{ab}^2}
=\sum_{a,b=1}^k \frac{1}{\rho_n^{(\ell)}h_{ab}^2}\frac{|\omega_{ab}|}{|\omega_{ab}|}\iint_{\omega_{ab}}f^{(\ell)}(x,y)dxdy
= \frac{1}{\rho_n^{(\ell)}h^2}\{1+o(1)\},\label{eq:1st_lem3.2}\\
&\sum_{a,b=1}^k|\omega_{ab}|\frac{\rho_n^{(\ell)}\overline{(f^{(\ell)})^2}_{ab}}{\rho_n^{(\ell)}h_{ab}^2}
=\sum_{a,b=1}^k \frac{1}{h_{ab}^2}
% \frac{h_{ab,r}^2}{n^2}\frac{1}{|\omega_{ab}|}
\iint_{\omega_{ab}}\{f^{(\ell)}(x,y)\}^2dxdy
= \frac{1}{h^2}O(1), \label{eq:2nd_lem3.2}
\end{align}
since $\iint_{[0,1]^2} f^{(\ell)}(x,y)dxdy = 1$ for all $\ell\in [L]$. The last equality of the latter term is from the H\"{o}lder continuity of $f^{(\ell)}(x,y)$. 
Since $h= \omega(1)$, the term in \eqref{eq:2nd_lem3.2} is negligible compared to \eqref{eq:1st_lem3.2}.
Combining these results and applying $\sum_{a,b}^k |\omega_{ab}|=1$ gives the desired result.
\end{proof}

\subsection{Technical Lemma for Theorem~\ref{thm:MISE_homogeneous}}

Lemmas~\ref{lem:interlayer_covariance} and \ref{lem:moment_A_bar_homo} are used in the proof of Theorem~\ref{thm:MISE_homogeneous}, which is about homogeneous multi-network histograms. Lemma~\ref{lem:interlayer_covariance} shows that the covariance between entries from different layers becomes negligible as the number of vertices increases when all layers share the  identical $f(x,y)$.

\begin{lemma}[inter-layer covariances of $\{A_{(i)(j)}^{(\ell)}\}_{\ell=1}^L$ and $\{(\bar{A}_{ab}^{(\ell)})^*\}_{\ell=1}^L$ for homogeneous case]
\label{lem:interlayer_covariance}
Let $f \in \text{H\"{o}lder}^{\alpha}(M)$ be a symmetric function on $[0,1]^2\to \mbR_0^+$, such that $A_{ij}^{(\ell)}|\bxi \sim \textnormal{Bernoulli}\{\rho_n^{(\ell)}f(\xi_i,\xi_j)\}$ for $\ell\in [L].$ Assume that any two layers are conditionally independent given the latent variables $\bxi = (\xi_1,\ldots, \xi_n).$ Then, the inter-layer covariance between adjacency entries corresponding to possibly different vertex pairs satisfies
\begin{align}
\label{ineq:interlayer_cov_A}
\left|\cov\{A_{(i)(j)}^{(\ell)}, A_{(i')(j')}^{(\ell')}\}\right|
\leq 
\frac{\rho_n^{(\ell)}\rho_n^{(\ell')}M^2}{\{2(n+2)\}^{\alpha}}.
\end{align}
Moreover, the inter-layer covariance between histogram blocks, possibly corresponding to different blocks, satisfies
\begin{align}
\label{ineq:interlayer_cov_A_bar}
\left|\cov\{(\bar{A}_{ab}^{(\ell)})^{*}, (\bar{A}_{a'b'}^{(\ell')})^{*}\}\right|
\leq 
\frac{\rho_n^{(\ell)}\rho_n^{(\ell')}M^2}{\{2(n+2)\}^{\alpha}},
\end{align}
where $(\bar{A}_{ab}^{(\ell)})^*$ is defined in \eqref{eq:oracle_A_ab_bar}.
\end{lemma}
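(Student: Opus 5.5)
The plan is to use the law of total covariance, conditioning on the latent vector $\bxi$. For $\ell\neq\ell'$ the layers are conditionally independent given $\bxi$, so
\begin{align*}
\cov\{A_{(i)(j)}^{(\ell)}, A_{(i')(j')}^{(\ell')}\}
= \E\left[\cov\{A_{(i)(j)}^{(\ell)}, A_{(i')(j')}^{(\ell')}\,|\,\bxi\}\right]
+ \cov\left[\E\{A_{(i)(j)}^{(\ell)}|\bxi\}, \E\{A_{(i')(j')}^{(\ell')}|\bxi\}\right].
\end{align*}
The first term vanishes. By \eqref{eq:cond_moments}, the second term equals $\rho_n^{(\ell)}\rho_n^{(\ell')}\cov\{f(\xi_{(i)},\xi_{(j)}), f(\xi_{(i')},\xi_{(j')})\}$. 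The whole problem therefore reduces to bounding a covariance of $f$ evaluated at order statistics, with the sparsity factors already pulled out.

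To bound that covariance, I center each factor at its deterministic proxy $f(i_n,j_n)$, with $i_n=i/(n+1)$, as in the proof of Lemma~\ref{lem:moment_A_bar}. Write $X=f(\xi_{(i)},\xi_{(j)})-f(i_n,j_n)$ and $Y=f(\xi_{(i')},\xi_{(j')})-f(i'_n,j'_n)$. Subtracting constants does not change the covariance, so Cauchy--Schwarz gives $|\cov(X,Y)|\le\{\var(X)\var(Y)\}^{1/2}\le\{\E X^2\,\E Y^2\}^{1/2}$. The H\"older condition then yields
\begin{align*}
X^2\le M^2\{(\xi_{(i)}-i_n)^2+(\xi_{(j)}-j_n)^2\}^{\alpha}.
\end{align*}
Since $\alpha\le1$, Jensen's inequality gives $\E X^2\le M^2\{\E(\xi_{(i)}-i_n)^2+\E(\xi_{(j)}-j_n)^2\}^{\alpha}$. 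For the uniform order statistic we have $\E\xi_{(i)}=i_n$ and $\var\xi_{(i)}=i_n(1-i_n)/(n+2)\le 1/\{4(n+2)\}$. Hence the sum inside the braces is at most $1/\{2(n+2)\}$, so $\E X^2\le M^2/\{2(n+2)\}^{\alpha}$, and the same bound holds for $\E Y^2$. Multiplying back by $\rho_n^{(\ell)}\rho_n^{(\ell')}$ gives \eqref{ineq:interlayer_cov_A}.

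For \eqref{ineq:interlayer_cov_A_bar} I would use bilinearity of covariance:
\begin{align*}
\cov\{(\bar{A}_{ab}^{(\ell)})^*,(\bar{A}_{a'b'}^{(\ell')})^*\}=\frac{1}{h_{ab}^2h_{a'b'}^2}\sum_{R_{ab}^*}\sum_{R_{a'b'}^*}\cov\{A_{(i)(j)}^{(\ell)},A_{(i')(j')}^{(\ell')}\}.
\end{align*}
Applying the entrywise bound uniformly to each term and using $|R_{ab}^*|=h_{ab}^2$, the average of the bounds is the bound itself.

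The delicate point is the use of the ordered labelling. Under the oracle labels, $R_{ab}^*$ is a deterministic set of rank pairs, so the double sum has deterministic index sets and the argument goes through. The other point to check is that the order-statistic variance bound is uniform in $i$; this holds because $i_n(1-i_n)\le1/4$. If $\ell=\ell'$ were allowed, the conditional-covariance term would not vanish, so the lemma must be read for distinct layers. That is the case the theorem needs.
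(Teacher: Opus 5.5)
Your proof is correct and follows the paper's route. Both use the law of total covariance with conditional independence across layers, centering at $f(i_n,j_n)$, the H\"older bound with Jensen and the uniform order-statistic variance $i_n(1-i_n)/(n+2)\le 1/\{4(n+2)\}$, and bilinearity to pass to the block averages. Your Cauchy--Schwarz step, $|\cov(X,Y)|\le\{\var(X)\var(Y)\}^{1/2}\le\{\E X^2\,\E Y^2\}^{1/2}$, is cleaner than the paper's: the paper first bounds the covariance by $|\E(XY)|$ with $X,Y$ centered at $f(i_n,j_n)$ rather than at their means, which is not valid in general, so your version is what actually secures the stated bound.
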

\begin{proof}
Using the notation in \eqref{eq:A_bar_oracle_ordered}, recall the conditional moments of $A_{(i)(i)}^{(\ell)}$ given $\bxi$ are
\begin{align*}
    \E\{A_{(i)(j)}^{(\ell)} |\bxi\} &= \rho_n^{(\ell)}f(\xi_{(i)},\xi_{(j)}),\\
    \var\{A_{(i)(j)}^{(\ell)} |\bxi\} &= \rho_n^{(\ell)}f(\xi_{(i)},\xi_{(j)})\left\{1-\rho_n^{(\ell)}f(\xi_{(i)},\xi_{(j)})\right\},
\end{align*}
due to $A_{(i)(j)}^{(\ell)}|\xi_{(i)},\xi_{(j)} \sim \textnormal{Bernoulli}\{\rho_n^{(\ell)}f(\xi_{(i)},\xi_{(j)})\}$. 
Moreover, for $\ell\neq \ell'$, it holds that
$\cov\{A_{(i)(j)}^{(\ell)}, A_{(i')(j')}^{(\ell')}|\bxi\} = 0$ by the conditional inter-layer independence assumption.
Using these conditional moments and the law of total covariance, we have 
\begin{equation}
\label{eq:cov_interlayer_same}
\begin{aligned}
\cov\{A_{(i)(j)}^{(\ell)}, A_{(i)(j)}^{(\ell')}\}   
&= \E\left[\cov\{A_{(i)(j)}^{(\ell)}, A_{(i)(j)}^{(\ell')}|\bxi\}\right]
+  \cov\left[\E\{A_{(i)(j)}^{(\ell)}|\bxi\}, \E\{A_{(i)(j)}^{(\ell')}|\bxi\}\right]\\
&= 0 + \rho_n^{(\ell)}\rho_n^{(\ell')}\var\{f(\xi_{(i)},\xi_{(j)})\},
\end{aligned}
\end{equation}
for the same pair $((i),(j))$
and
\begin{equation}
\label{eq:cov_interlayer_diff}
\begin{aligned}
\cov\{A_{(i)(j)}^{(\ell)}, A_{(i')(j')}^{(\ell')}\}   
&= \E\left[\cov\{A_{(i)(j)}^{(\ell)}, A_{(i')(j')}^{(\ell')}|\bxi\}\right]
+  \cov\left[\E\{A_{(i)(j)}^{(\ell)}|\bxi\}, \E\{A_{(i')(j')}^{(\ell')}|\bxi\}\right]\\
&= 0 + \rho_n^{(\ell)}\rho_n^{(\ell')}\cov\{f(\xi_{(i)},\xi_{(j)}),f(\xi_{(i')},\xi_{(j')}) \},
\end{aligned}
\end{equation}
for distinct pairs $((i),(j))$ and $((i'),(j')).$
Next, we follow the argument of Lemma 2 of \cite{olhede2014network} to derive upper bounds of $\var\{f(\xi_{(i)},\xi_{(j)})\}$ and $\cov\{f(\xi_{(i)},\xi_{(j)}),f(\xi_{(i')},\xi_{(j')}) \}$.
Let $i_n = i/(n+1), j_n = j/(n+1), i'_n = i'/(n+1),$ and $j'_n = j'/(n+1)$.
By the H\"{o}lder$^{\alpha}(M)$ condition on $f(x,y)$, Jensen's inequality, and Cauchy-Schwarz inequality, it holds that
\begin{align*}
\var\{f(\xi_{(i)},\xi_{(j)})\}
&\leq 
\E\{f(\xi_{(i)},\xi_{(j)}) - f(i_n, j_n)\}^2 \leq 
M^2\E\|(\xi_{(i)},\xi_{(j)}) - (i_n, j_n)\|_2^{2\alpha}
\leq \frac{M^2}{\{2(n+2)\}^{\alpha}},
\end{align*}
and
\begin{align*}
\left|\cov\{f(\xi_{(i)},\xi_{(j)}),f(\xi_{(i')},\xi_{(j')})\right|
&\leq 
\left|\E[\{f(\xi_{(i)},\xi_{(j)}) - f(i_n, j_n)\}\{f(\xi_{(i')},\xi_{(j')}) - f(i'_n, j'_n)\}]\right|\\
&\leq 
\E\left|\{f(\xi_{(i)},\xi_{(j)}) - f(i_n, j_n)\}\{f(\xi_{(i')},\xi_{(j')}) - f(i'_n, j'_n)\}\right|\\
&\leq 
M^2\E\|(\xi_{(i)},\xi_{(j)}) - (i_n, j_n)\|_2^{\alpha}
\|(\xi_{(i')},\xi_{(j')}) - (i'_n, j'_n)\|_2^{\alpha}\\
&\leq M^2 \sqrt{\E\|(\xi_{(i)},\xi_{(j)}) - (i_n, j_n)\|_2^{2\alpha}}
\sqrt{\E\|(\xi_{(i')},\xi_{(j')}) - (i'_n, j'_n)\|_2^{2\alpha}}\\
&\leq \frac{M^2}{\{2(n+2)\}^{\alpha}},
\end{align*}
which proves \eqref{ineq:interlayer_cov_A}.

It follows that 
\begin{align*}
\cov\{(\bar{A}_{ab}^{(\ell)})^{*}, (\bar{A}_{a'b'}^{(\ell')})^{*}\}
&= 
\cov\left\{
\frac{1}{h_{ab}^2} \sum_{((i),(j))\in R_{ab}^*} A_{(i)(j)}^{(\ell)},
\frac{1}{h_{a'b'}^2} \sum_{((i'),(j'))\in R_{a'b'}^*} A_{(i')(j')}^{(\ell')}
\right\}\\
&= \frac{1}{h_{ab}^2 h_{a'b'}^2} 
\sum_{((i),(j))\in R_{ab}^*}
\sum_{((i'),(j'))\in R_{a'b'}^*} 
\cov\left\{
A_{(i)(j)}^{(\ell)},
A_{(i')(j')}^{(\ell')}
\right\}.
\end{align*}
Combining this with \eqref{ineq:interlayer_cov_A} gives
\begin{align*}
\left|\cov\{(\bar{A}_{ab}^{(\ell)})^{*}, (\bar{A}_{a'b'}^{(\ell')})^{*}\}\right|
&\leq 
\frac{1}{h_{ab}^2 h_{a'b'}^2} 
\sum_{((i),(j))\in R_{ab}^*}
\sum_{((i'),(j'))\in R_{a'b'}^*} 
\left|\cov\left\{
A_{(i)(j)}^{(\ell)},
A_{(i')(j')}^{(\ell')}
\right\}\right|\\
&\leq 
\frac{\rho_n^{(\ell)}\rho_n^{(\ell')}M^2}{\{2(n+2)\}^{\alpha}},
\end{align*}
which proves \eqref{ineq:interlayer_cov_A_bar}.
\end{proof}

Lemma~\ref{lem:moment_A_bar_homo} provides bounds on the moments of homogeneous multi-network histogram using the ``oracle" weight $\tau_{\ell}^*$ derived from \Cref{sec_append:more_algorithm}.
\begin{lemma}[Moments of homogeneous multi-network histogram]
\label{lem:moment_A_bar_homo}
Assume the same condition as in Lemma~\ref{lem:interlayer_covariance}.
For the oracle homogeneous multi-network histogram $\widehat{f}_{ab}^*$ is defined by 
\begin{equation}
\label{eq:oracle_homo_multi_nethist_moments}
\begin{aligned}
\widehat{f}_{ab}^* 
&= \sum_{\ell=1}^L \tau_{\ell}^{*}\frac{(\bar{A}_{ab}^{(\ell)})^*}{\rho_n^{(\ell)}}, \quad a,b\in[k]
\end{aligned}
\end{equation}
with the oracle weights  
$
\tau_{\ell}^{*} = \frac{\rho_n^{(\ell)}}{\sum_{l=1}^L \rho_n^{(l)}},
$
and the block average $(\bar{A}_{ab}^{(\ell)})^*$ as defined in \eqref{eq:oracle_A_ab_bar}, the moments of $\widehat{f}_{ab}^*$ satisfy
\begin{equation}
\label{eq:mean_homo_multinethist}
\begin{aligned}
\left|\E \widehat{f}_{ab}^*  - \Bar{f}_{ab}\right|
&\leq \frac{M}{(2n)^{\alpha/2}}\{1+o(1)\},
\end{aligned}
\end{equation}
and
\begin{equation}
\label{eq:var_homo_multinethist}
\begin{aligned}
&\left|\var \widehat{f}_{ab}^*  
- \frac{1}{h_{ab}^2}\sum_{\ell=1}^L (\tau_{\ell}^{*})^2\left(\frac{\Bar{f}_{ab}}{\rho_n^{(\ell)}} - \overline{f^2}_{ab}\right)\right|\\
\leq&
\frac{1}{\left(\sum_{\ell=1}^L \rho_n^{(\ell)}\right)^2}
\sum_{\ell=1}^L \frac{M\rho_n^{(\ell)}}{h_{ab}^2(2n)^{\alpha/2}}\{1+o(1)\}
+ \frac{M^2}{(2n)^{\alpha}}
\end{aligned}
\end{equation}
where $\Bar{f}_{ab}$ and $\overline{f^2}_{ab}$ are defined in \eqref{def:avg_f}.
\end{lemma}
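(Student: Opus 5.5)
The plan is to reduce both bounds to the single-layer moment bounds of Lemma~\ref{lem:moment_A_bar}, applied with $f^{(\ell)}=f$ and $M_\ell=M$, plus the inter-layer covariance bound of Lemma~\ref{lem:interlayer_covariance}. The key observation is that the oracle weights cancel the scaling. Indeed, $\tau_\ell^*/\rho_n^{(\ell)} = 1/\sum_{l}\rho_n^{(l)}$, so
\[
\widehat{f}_{ab}^* = \Big(\sum_{l=1}^L\rho_n^{(l)}\Big)^{-1}\sum_{\ell=1}^L(\bar{A}_{ab}^{(\ell)})^* .
\]
Hence $\widehat{f}_{ab}^*$ is a deterministic linear combination of the layer-wise block averages.

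\emph{Mean bound.} By linearity, and since $\sum_\ell\tau_\ell^*=1$,
\[
\E\widehat{f}_{ab}^*-\bar{f}_{ab} = \sum_{\ell=1}^L\frac{\tau_\ell^*}{\rho_n^{(\ell)}}\Big\{\E(\bar{A}_{ab}^{(\ell)})^*-\rho_n^{(\ell)}\bar{f}_{ab}\Big\}.
\]
Applying \eqref{eq:moment_A_ab_bar} to each term bounds the $\ell$th summand in absolute value by $\tau_\ell^*M(2n)^{-\alpha/2}\{1+o(1)\}$. Summing over $\ell$ with $\sum_\ell\tau_\ell^*=1$ gives \eqref{eq:mean_homo_multinethist} directly. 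One point needs checking: the $o(1)$ terms in Lemma~\ref{lem:moment_A_bar} are uniform in $\ell$. They come from order-statistic approximations that do not depend on $\rho_n^{(\ell)}$, and $L$ is fixed, so this holds.

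\emph{Variance bound.} Write
\[
\var\widehat{f}_{ab}^* = \sum_\ell\frac{(\tau_\ell^*)^2}{(\rho_n^{(\ell)})^2}\var(\bar{A}_{ab}^{(\ell)})^* + \sum_{\ell\neq\ell'}\frac{\tau_\ell^*\tau_{\ell'}^*}{\rho_n^{(\ell)}\rho_n^{(\ell')}}\cov\{(\bar{A}_{ab}^{(\ell)})^*,(\bar{A}_{ab}^{(\ell')})^*\}.
\]

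For the diagonal part, insert \eqref{eq:moment_A_ab_bar_var}. The main term $(\tau_\ell^*)^2(\rho_n^{(\ell)})^{-2}\{\rho_n^{(\ell)}\bar f_{ab}-(\rho_n^{(\ell)})^2\overline{f^2}_{ab}\}/h_{ab}^2$ equals $(\tau_\ell^*)^2(\bar f_{ab}/\rho_n^{(\ell)}-\overline{f^2}_{ab})/h_{ab}^2$, which is exactly the centring term in \eqref{eq:var_homo_multinethist}. For the error, $(\tau_\ell^*)^2(\rho_n^{(\ell)})^{-2}M\rho_n^{(\ell)}/\{h_{ab}^2(2n)^{\alpha/2}\}$ simplifies to $M\rho_n^{(\ell)}/\{(\sum_l\rho_n^{(l)})^2h_{ab}^2(2n)^{\alpha/2}\}$, which produces the first remainder term. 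The remaining error $(\tau_\ell^*)^2M^2(2n)^{-\alpha}$ contributes $\sum_\ell(\tau_\ell^*)^2M^2(2n)^{-\alpha}$.

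For the off-diagonal part, apply \eqref{ineq:interlayer_cov_A_bar}. Each cross term is bounded by $\tau_\ell^*\tau_{\ell'}^*M^2\{2(n+2)\}^{-\alpha}\le\tau_\ell^*\tau_{\ell'}^*M^2(2n)^{-\alpha}$. Adding this to the diagonal remainder gives $M^2(2n)^{-\alpha}\big(\sum_\ell\tau_\ell^*\big)^2=M^2(2n)^{-\alpha}$, which is the second remainder term.

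The only delicate step is this last bookkeeping. The $M^2(2n)^{-\alpha}$ pieces from the within-layer variances must combine with the inter-layer covariances into the perfect square $(\sum_\ell\tau_\ell^*)^2$; this is what prevents a factor of $L$ from appearing. Everything else is linear algebra on top of the two earlier lemmas.
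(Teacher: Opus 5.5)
Your proposal is correct and follows essentially the same route as the paper. In the variance step you insert \eqref{eq:moment_A_ab_bar_var} for the diagonal terms and \eqref{ineq:interlayer_cov_A_bar} for the cross terms, then collapse the $M^2(2n)^{-\alpha}$ pieces into a perfect square; this is exactly the paper's use of $\sum_\ell(\rho_n^{(\ell)})^2+2\sum_{\ell<\ell'}\rho_n^{(\ell)}\rho_n^{(\ell')}=(\sum_\ell\rho_n^{(\ell)})^2$, i.e.\ your $(\sum_\ell\tau_\ell^*)^2=1$. The only cosmetic difference is in the mean step: the paper first collapses the layer sum to $h_{ab}^{-2}\sum\E f(\xi_{(i)},\xi_{(j)})$ and cites Lemmas 3 and 4 of \cite{olhede2014network} directly, whereas you apply Lemma~\ref{lem:moment_A_bar} layer by layer and average with the convex weights $\tau_\ell^*$, which is equivalent here because all $\E(\bar{A}_{ab}^{(\ell)})^*/\rho_n^{(\ell)}$ coincide in the homogeneous model.
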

\begin{proof}
By substituting \eqref{eq:A_bar_oracle_ordered} into \eqref{eq:oracle_homo_multi_nethist_moments}, we can rewrite the oracle estimator as
\begin{equation}
\label{eq:f_star}
\widehat{f}_{ab}^* 
= \frac{1}{h_{ab}^2 \left(\sum_{\ell=1}^L\rho_n^{(\ell)}\right)} \sum_{((i),(j))\in R_{ab}^*} { \sum_{\ell=1}^L A_{(i)(j)}^{(\ell)}},
\end{equation}
where $(i)$ is the index of the $i$th smallest ordered statistic $\xi_{(i)}$ from $\bxi$.

We first consider the expected value of $\widehat{f}_{ab}^*$. It follows that 
\begin{align*}
\E \widehat{f}_{ab}^* &= \frac{1}{h_{ab}^2 \left(\sum_{\ell=1}^L\rho_n^{(\ell)}\right)} \sum_{((i),(j))\in R_{ab}^*} { \sum_{\ell=1}^L \E A_{(i)(j)}^{(\ell)}}
=\frac{1}{h_{ab}^2} \sum_{((i),(j))\in R_{ab}^*} \E{f(\xi_{(i)},\xi_{(j)})}.
\end{align*}
We also define
\begin{equation*}
    \widetilde{f}_{ab} := \frac{1}{h_{ab}^2}\sum_{((i),(j))\in R_{ab}^*}f(i_n,j_n),
\end{equation*}
where $i_n = i/(n+1)$ and $j_n = j/(n+1)$. This term serves as a bridge for deriving the upper bound via the triangle inequality.
By the triangle inequality and Jensen's inequality, we have
\begin{align*}
\left|\E \widehat{f}_{ab}^* - \Bar{f}_{ab}\right|
&\leq 
\left|\E \widehat{f}_{ab}^* - \widetilde{f}_{ab}\right|
+ 
\left|\widetilde{f}_{ab} - \Bar{f}_{ab}\right|\\
&= \left| \frac{1}{h_{ab}^2}\sum_{((i),(j))\in R_{ab}^*}\left\{\E f(\xi_{(i)},\xi_{(j)})-f(i_n,j_n)\right\}\right|
+ \left|\widetilde {f}_{ab}- \Bar{f}_{ab}\right|\\
&\leq 
\E\left| \frac{1}{h_{ab}^2}\sum_{((i),(j))\in R_{ab}^*}f(\xi_{(i)},\xi_{(j)}) - \widetilde{f}_{ab}\right|
+ \left|\widetilde {f}_{ab}- \Bar{f}_{ab}\right|.
\end{align*}
Using a similar argument as in Lemma~\ref{lem:moment_A_bar}, in the final line of the inequality above, the first and second terms in the final inequality are bounded by Lemma 3 and Lemma 4 of \cite{olhede2014network}, respectively. This establishes the bound in \eqref{eq:mean_homo_multinethist}.

Next, we derive an upper bound of the variance of $\widehat{f}_{ab}^*$. It holds that 
\begin{align*}
\var \widehat{f}^*_{ab} &= \frac{1}{\left(\sum_{\ell=1}^L \rho_n^{(\ell)}\right)^2}\var\left\{\sum_{\ell=1}^L (\bar{A}_{ab}^{(\ell)})^*\right\}.
\end{align*}
and expanding the variance of the sum,
\begin{align*}
\var\left\{\sum_{\ell=1}^L (\bar{A}_{ab}^{(\ell)})^*\right\}
&= \sum_{\ell=1}^L\var (\bar{A}_{ab}^{(\ell)})^*
+ 2\sum_{\ell < \ell'}^L
\cov\{ (\bar{A}_{ab}^{(\ell)})^*,(\bar{A}_{ab}^{(\ell')})^*\}.
\end{align*}
Using Definition of \eqref{eq:f_star}, it holds that
\begin{align*}
&\left|\var \widehat{f}_{ab}^*  
- \frac{1}{h_{ab}^2}\sum_{\ell=1}^L (\tau_{\ell}^{*})^2\left(\frac{\Bar{f}_{ab}}{\rho_n^{(\ell)}} - \overline{f^2}_{ab}\right)\right|\\
 =&\left|\frac{1}{\left(\sum_{\ell=1}^L \rho_n^{(\ell)}\right)^2}\left[\sum_{\ell=1}^L\left\{\var (\bar{A}_{ab}^{(\ell)})^*
 - \frac{\rho_n^{(\ell)}\Bar{f}_{ab} - (\rho_n^{(\ell)})^2\overline{f^2}_{ab}}{h_{ab}^2}\right\}
 + 2\sum_{\ell < \ell'}^L
\cov\{ (\bar{A}_{ab}^{(\ell)})^*,(\bar{A}_{ab}^{(\ell')})^*\}
\right]
 \right|\\
\leq& \frac{1}{\left(\sum_{\ell=1}^L \rho_n^{(\ell)}\right)^2}
\left[\sum_{\ell=1}^L\left\{\frac{M\rho_n^{(\ell)}}{h_{ab}^2(2n)^{\alpha/2}}\{1+o(1)\}+\frac{(M\rho_n^{(\ell)})^2}{(2n)^{\alpha}}\right\}
+ 2\sum_{\ell < \ell'}\frac{M^2\rho_n^{(\ell)}\rho_n^{(\ell')}}{\{2(n+2)\}^{\alpha}}
\right]\\
\leq& \frac{1}{\left(\sum_{\ell=1}^L \rho_n^{(\ell)}\right)^2}
\sum_{\ell=1}^L \frac{M\rho_n^{(\ell)}}{h_{ab}^2(2n)^{\alpha/2}}\{1+o(1)\}
+ \frac{M^2}{(2n)^{\alpha}},
\end{align*}
by \eqref{eq:moment_A_ab_bar_var} in Lemma~\ref{lem:moment_A_bar} and \eqref{ineq:interlayer_cov_A_bar} in Lemma~\ref{lem:interlayer_covariance}. 
This completes the proof.
\end{proof}

\section{Supplementary Discussion on Algorithm and Theory}
\label{sec_append:more_algorithm}

In this section, we provide the proof of Proposition~\ref{prop:MISE} and justify the weight selection strategy employed in the main text.

\subsection{Proof of Proposition~\ref{prop:MISE}}

The MISE in \eqref{eq:oracle_MISE} is bounded by the arithmetic average of layer-wise MISEs, which is established in Lemma~\ref{lem:layer_MISE},
\begin{align*}
\text{MISE}(\widehat{\bff}^*) 
&\leq \ \frac{1}{L}\sum_{\ell=1}^L\E \left( \iint_{[0,1]^2}\left|(\widehat{f}^{(\ell)})^*(x,y;h)-f^{(\ell)}(x,y)\right|^2 dxdy\right)\\
&\leq \frac{1}{L}\sum_{\ell=1}^L M_{\ell}^2 
\left\{
2^{\alpha}\left(\frac{h}{n}\right)^{2\alpha}
+ \frac{2}{(2n)^{\alpha}}
+ \frac{1}{M_{\ell}^2 \rho_n^{(\ell)}h^2}
\right\}\{1+o(1)\}\\
&= \left\{
\overline{M^2}2^{\alpha}\left(\frac{h}{n}\right)^{2\alpha}
+ \frac{2\overline{M^2}}{(2n)^{\alpha}}
+ \frac{\overline{\rho_n^{-1}} }{h^2}
\right\}\{1+o(1)\}.
\end{align*}
This establishes the upper bound in \eqref{ineq:MISE_oracle}.
Moreover, we can show that the right-hand side of the inequality above is minimized at
\begin{align*}
% 0 &= \sum_{\ell=1}^{L} \left(\frac{4M_{\ell}^2h^*}{n^2} -\frac{2}{\rho_n^{(\ell)}(h^*)^3}\right)\\
(h_{\text{MISE}}^*)^{2\alpha+2} &= \frac{n^{2\alpha}\overline{\rho_n^{-1}}}{\alpha 2^{\alpha} \overline{M^2}},
\end{align*}
which corresponds to \eqref{eq:oracle_bandwidth_MISE}. The upper bound at $h=h_{\text{MISE}}^*$ satisfies
\begin{align*}
\text{MISE}(\widehat{\bff}^*)\Big|_{h=h_{\text{MISE}}^*}
&\leq \left\{\overline{M^2}2^{\alpha}\left(\frac{h_{\text{MISE}}^*}{n}\right)^{2\alpha}
+ \frac{2\overline{M^2}}{(2n)^{\alpha}}
+ \frac{\overline{\rho_n^{-1}}}{(h_{\text{MISE}}^*)^2}
\right\}\{1+o(1)\}\\
&= \left\{\frac{\overline{M^2}2^{\alpha}}{n^{2\alpha}}n^{\frac{2\alpha^2}{\alpha+1}}\left(\frac{\overline{\rho_n^{-1}}}{\alpha 2^{\alpha} \overline{M^2}}\right)^{\frac{\alpha}{\alpha+1}}
+ \frac{2\overline{M^2}}{(2n)^{\alpha}}
+ \frac{\overline{\rho_n^{-1}}\left({\alpha 2^{\alpha} \overline{M^2}}\right)^{\frac{1}{\alpha+1}}}{n^{\frac{2\alpha}{\alpha+1}}\left({\overline{\rho_n^{-1}}}\right)^{\frac{1}{\alpha+1}}}
\right\}\{1+o(1)\}\\
&= 
\left\{\left(\alpha^{-\frac{\alpha}{\alpha+1}}+ \alpha^{\frac{1}{\alpha+1}}\right)(\overline{M^2})^{\frac{1}{\alpha+1}}
\left(\frac{2\overline{\rho_n^{-1}}}{ n^2}\right)^{\frac{\alpha}{\alpha+1}}
+ \frac{2\overline{M^2}}{(2n)^{\alpha}}
\right\}\{1+o(1)\}\\
&= O\left(\left\{\binom{n}{2}\frac{1}{\overline{\rho_n^{-1}}}\right\}^{-\alpha/(\alpha+1)}\right),
\end{align*}
which is \eqref{ineq:MISE_optimal}.

\subsection{Weights of homogeneous multi-network histograms}

This section provides the rationale for the weight selection in \eqref{eq:sol_co} for homogeneous multi-network histograms, which minimizes a proxy for the bin-wise variance. Specifically, we seek to determine weights that minimize the integrated point-wise variance of the weighted average of the layer-wise oracle estimators:
\begin{equation}
\label{eq:weights_MG}
\begin{aligned}
    {\btau}&:= \argmin_{\btau} 
    \iint_{[0.1]^2} \var\left[\sum_{\ell=1}^L \tau_{\ell}\left\{(\widehat{f}^{(\ell)})^*(x,y;h)\right\}\right]dx dy,\\
    &\text{subject to } \btau^{\mathrm{T}}\bone = 1, \tau_{\ell}\geq 0.
\end{aligned}    
\end{equation}
We decompose the right-hand side of \eqref{eq:weights_MG},
\begin{align*}
\iint_{[0.1]^2} \var\left[\sum_{\ell=1}^L \tau_{\ell}\left\{(\widehat{f}^{(\ell)})^*(x,y;h)\right\}\right]dx dy
&= \sum_{a,b=1}^k \iint_{\omega_{ab}} \var\left[\sum_{\ell=1}^L \tau_{\ell}\left\{(\widehat{f}^{(\ell)})^*(x,y;h)\right\}\right]dx dy.
\end{align*}
When $(x,y)\in \omega_{ab}$, $(\widehat{f}^{(\ell)})^*(x,y;h)= (\widehat{f}_{ab}^{(\ell)})^*= (\bar{A}_{ab}^{(\ell)})^*/\rho_n^{(\ell)}$,  where
$(\bar{A}_{ab}^{(\ell)})^*$ is defined in \eqref{eq:oracle_A_ab_bar}.
By \eqref{eq:moment_A_ab_bar_var} and \eqref{ineq:interlayer_cov_A_bar} with $\alpha = 1$, we have the variance 
\begin{align*}
\var\left(\sum_{\ell=1}^L \tau_{\ell}(\widehat{f}_{ab}^{(\ell)})^*\right)
=&
\sum_{\ell=1}^L  \tau_{\ell}^2\var \{(\widehat{f}_{ab}^{(\ell)})^*\}
+ 
\sum_{\ell\neq \ell'}^L  \tau_{\ell}\tau_{\ell'}\cov \{(\widehat{f}_{ab}^{(\ell)})^*,(\widehat{f}_{ab}^{(\ell')})^*\}
\\
=&
\sum_{\ell=1}^L  \frac{\tau_{\ell}^2}{(\rho_n^{(\ell)})^2}\var \{(\bar{A}_{ab}^{(\ell)})^*\}
+ 
\sum_{\ell\neq \ell'}^L  \frac{\tau_{\ell}\tau_{\ell'}}{\rho_n^{(\ell)}\rho_n^{(\ell')}}\cov \{(\bar{A}_{ab}^{(\ell)})^*,(\bar{A}_{ab}^{(\ell')})^*\}
\\
\leq& \frac{1}{h_{ab}^2}\sum_{\ell=1}^L \tau_{\ell}^2\left\{\frac{\Bar{f}_{ab}^{(\ell)}}{\rho_n^{(\ell)}} - \overline{(f^{(\ell)})^2}_{ab}\right\}
+ \sum_{\ell=1}^L \tau_{\ell}^2\left\{\frac{M}{h_{ab}^2(2n)^{1/2}\rho_n^{(\ell)}}\{1+o(1)\}+ \frac{M^2}{n}\right\}\\
&+ \sum_{\ell\neq \ell'} \tau_{\ell} \tau_{\ell'} \frac{M^2}{2(n+2)}\\
\leq& \frac{1}{h_{ab}^2}\sum_{\ell=1}^L \tau_{\ell}^2\frac{\Bar{f}_{ab}^{(\ell)}}{\rho_n^{(\ell)}}
+ O\left(n^{-1}\right),
\end{align*}
as long as $\rho_n^{(\ell)} = \omega(n^{-1/2})$.
Given that many real-world networks exhibit sparsity, we focus on the sparse regime where $\rho_n^{(\ell)} = o(1)$ and $\rho_n^{(\ell)} = \omega(n^{-1})$. 
Then, we could assume that $h_{ab}^2=o\{n(\rho_{n}^{(\ell)})^{-1}\}$ since $h_{a} \approx \sqrt{n}$.
The term ${\Bar{f}_{ab}^{(\ell)}}/{\rho_n^{(\ell)}}$ dominates the remaining terms within the first inequality. Hence, we simplify the objective by considering only the leading term in the second inequality. Then, we obtain an upper bound for the expression,
\begin{align*}
   \sum_{a,b=1}^k \frac{1}{h_{ab}^2}\sum_{\ell=1}^L \tau_{\ell}^2 {\frac{\Bar{f}_{ab}^{(\ell)}}{\rho_n^{(\ell)}}} 
&\leq 
\sum_{\ell=1}^L \frac{\tau_{\ell}^2}{\rho_n^{(\ell)}(\min_{a,b}h_{ab}^2)}
\sum_{a,b=1}^k   
{\Bar{f}_{ab}^{(\ell)}}\\
&\leq 
\sum_{\ell=1}^L \frac{\tau_{\ell}^2}{\rho_n^{(\ell)}(\min_{a,b}h_{ab}^2)(\min_{a,b}|\omega_{ab}|)}
\sum_{a,b=1}^k   
\iint_{\omega_{ab}}{{f}^{(\ell)}}(x,y)dx dy\\
&= O\left(\sum_{\ell=1}^L \frac{\tau_{\ell}^2}{\rho_n^{(\ell)}}\right).
\end{align*}
The second inequality above holds by the definition of $\bar{f}_{ab}^{(\ell)}$ given in Lemma~\ref{lem:interlayer_covariance}, and the last equality holds by $\iint_{[0,1]^2} f^{(\ell)}(x,y) dx dy = 1$.
Combining this result with \eqref{eq:weights_MG} and omitting constant terms independent of the layer index, we obtain the following convex optimization problem:
\begin{equation}
\label{eq:weights_approx_MG}
\begin{aligned}
    {\btau}^*&:= \argmin_{\btau} \sum_{\ell=1}^L  \frac{\tau_{\ell}^2}{\rho_n^{(\ell)}} ,\quad
    \text{subject to } \btau^{\mathrm{T}}\bone = 1, \tau_{\ell}\geq 0.
\end{aligned}    
\end{equation}
Then, \eqref{eq:weights_approx_MG} has the solution
$\tau_{\ell}^* = {\rho_n^{(\ell)}}/{(\sum_{l=1}^L \rho_n^{(l)})}.$
Motivated by this, we use the following plug-in {version} of $\tau_{\ell}^{*}$, 
\begin{align}
{\tau}_{\ell} = \frac{\widehat{\rho}_n^{(\ell)}}{\sum_{l=1}^L \widehat{\rho}_n^{(l)}},
\end{align}
for the weights of homogeneous multi-network histograms as in \eqref{eq:sol_co}.

\section{Additional Simulation Results}
\label{sec_append:sim_study}

\subsection{Additional Simulation Settings}

We present additional simulation results  corresponding to Section~\ref{sec:sim_study} of the main article. The comparisons include the multi-network histogram (mnhist) and homogeneous multi-network histogram (h-mnhist) against applying the single-layer graphon estimation methods, network histogram (nethist), sort-and-smoothing (SAS), universal singular value thresholding (USVT), and neighborhood smoothing (NBS), each applied layer-wise.

Tables~\ref{tab:homogeneous_mixed_by_n} to \ref{tab:Heterogeneous_all_sparse_by_n} present the averages and the standard deviations of the weighted mean squared errors (WMSE), defined in \eqref{eq:WMSE_global}, over 100 replications. Base functions $\{f_{k}\}_{k=1}^4$ are defined in Table~\ref{tab:sparsity_level_setup} and Figure~\ref{fig:graphons_sim}. 
For the mixed sparsity cases, Tables~\ref{tab:homogeneous_mixed_by_n} to \ref{tab:Heterogeneous_mixed_by_n} show the effect of $n$ on WMSE across three scenarios---Homogeneous, Perturbation, and Heterogeneous---defined as in Section~\ref{subsec:data_generation}. Tables~\ref{tab:homogeneous_all_dense_by_L} to \ref{tab:Heterogeneous_all_dense_by_n} display the WMSE comparisons for dense sparsity cases.  Tables~\ref{tab:homogeneous_all_sparse_by_L} to \ref{tab:Heterogeneous_all_sparse_by_n} show the corresponding results for sparse settings.
We observe a pattern consistent with that reported in the main article. The proposed methods generally outperform competing methods. 
Across all sparsity scenarios, WMSE values decrease as the number of layers $L$ or the number of vertices $n$ increases.

%%%%%%
% Tables
%%%
% Scenario 1, mixed, effect of n
\begin{table}[!h]
\caption{Comparisons of weighted mean squared errors ($\times 100$ with the standard deviation in parentheses) averaged over 100 replications across different number of vertices, $n\in\{200,400,800\}$, for multiplex networks with seven layers under Scenario 1 (Homogeneous) in the mixed sparsity case. 
Function IDs correspond to those in Table~\ref{tab:sparsity_level_setup}. Underlined methods indicate the proposed methods, and bolded values denote the smallest average WMSE in each case.}
\label{tab:homogeneous_mixed_by_n}
\centering

\begin{tabular}{ll|rrr}
\hline
ID & Method & 200 & 400 & 800\\
\hline
1 & mnhist & 5.680 (0.437) & 2.888 (0.189) & 1.524 (0.068)\\
1 & h-mnhist & \textbf{3.954 (0.271)} & \textbf{2.100 (0.109)} & \textbf{1.086 (0.056)}\\
1 & nethist & 20.131 (0.658) & 11.465 (0.279) & 6.321 (0.114)\\
1 & SAS & 5.565 (0.297) & 3.238 (0.123) & 1.836 (0.053)\\
1 & USVT & 12.253 (1.868) & 9.147 (1.032) & 6.883 (0.588)\\
1 & NBS & 31.024 (2.022) & 18.517 (1.085) & 13.073 (0.465)\\
\hline
2 & mnhist & 1.371 (0.125) & 0.676 (0.045) & 0.347 (0.015)\\
2 & h-mnhist & \textbf{1.043 (0.085)} & \textbf{0.526 (0.028)} & \textbf{0.260 (0.011)}\\
2 & nethist & 5.496 (0.221) & 3.100 (0.080) & 1.712 (0.026)\\
2 & SAS & 1.071 (0.057) & 0.579 (0.022) & 0.313 (0.008)\\
2 & USVT & 2.581 (0.555) & 1.502 (0.287) & 0.938 (0.145)\\
2 & NBS & 7.903 (0.572) & 4.473 (0.221) & 3.814 (0.186)\\
\hline
3 & mnhist & 1.312 (0.360) & 0.491 (0.156) & \textbf{0.200 (0.011)}\\
3 & h-mnhist & \textbf{0.999 (0.084)} & \textbf{0.484 (0.029)} & 0.222 (0.009)\\
3 & nethist & 5.794 (0.227) & 3.353 (0.104) & 1.831 (0.058)\\
3 & SAS & 2.016 (0.114) & 1.583 (0.062) & 1.320 (0.049)\\
3 & USVT & 3.555 (0.605) & 2.486 (0.299) & 1.285 (0.161)\\
3 & NBS & 8.699 (0.485) & 5.114 (0.208) & 4.263 (0.189)\\
\hline
4 & mnhist & 1.730 (0.177) & 0.754 (0.049) & 0.380 (0.030)\\
4 & h-mnhist & \textbf{1.481 (0.143)} & \textbf{0.706 (0.053)} & \textbf{0.318 (0.018)}\\
4 & nethist & 9.722 (0.459) & 4.818 (0.266) & 2.216 (0.064)\\
4 & SAS & 12.473 (1.301) & 11.859 (1.056) & 11.420 (0.662)\\
4 & USVT & 7.118 (0.808) & 3.875 (0.443) & 2.130 (0.198)\\
4 & NBS & 16.500 (0.942) & 9.611 (0.353) & 7.001 (0.291)\\
\hline
\end{tabular}
\end{table}
% mixed + Scenario 2
\begin{table}[!h]
\caption{Comparisons of weighted mean squared errors ($\times 100$ with the standard deviation in parentheses) averaged over 100 replications across different number of vertices, $n\in\{200,400,800\}$, for multiplex networks with seven layers under Scenario 2 (Perturbation) in the mixed sparsity case. 
Function IDs correspond to those in Table~\ref{tab:sparsity_level_setup}. Underlined methods indicate the proposed methods, and bolded values denote the smallest average WMSE in each case.}
\label{tab:Perturbed_mixed_by_n}
\centering
\begin{tabular}{lll|rrr}
\hline
ID1 & ID2 & Method & 200 & 400 & 800\\
\hline
1 & 2 & \underline{mnhist} & 1.460 (0.121) & 0.724 (0.041) & \textbf{0.362 (0.016)}\\
1 & 2 & nethist & 5.780 (0.233) & 3.246 (0.086) & 1.786 (0.031)\\
1 & 2 & SAS & \textbf{1.200 (0.065)} & \textbf{0.661 (0.022)} & 0.362 (0.010)\\
1 & 2 & USVT & 3.423 (0.635) & 2.466 (0.361) & 1.918 (0.182)\\
1 & 2 & NBS & 7.985 (0.450) & 4.982 (0.271) & 3.863 (0.157)\\
\hline
1 & 4 & \underline{mnhist} & \textbf{1.819 (0.188)} & \textbf{0.774 (0.054)} & \textbf{0.379 (0.022)}\\
1 & 4 & nethist & 9.822 (0.517) & 4.846 (0.209) & 2.264 (0.067)\\
1 & 4 & SAS & 11.883 (1.337) & 11.333 (1.061) & 11.143 (0.606)\\
1 & 4 & USVT & 8.165 (0.922) & 5.229 (0.526) & 3.616 (0.302)\\
1 & 4 & NBS & 16.370 (1.050) & 10.269 (0.416) & 6.925 (0.241)\\
\hline
3 & 4 & \underline{mnhist} & \textbf{1.748 (0.190)} & \textbf{0.743 (0.049)} & \textbf{0.373 (0.029)}\\
3 & 4 & nethist & 9.649 (0.527) & 4.790 (0.181) & 2.222 (0.094)\\
3 & 4 & SAS & 11.466 (1.163) & 11.031 (0.990) & 10.922 (0.664)\\
3 & 4 & USVT & 6.851 (0.753) & 3.803 (0.418) & 2.035 (0.268)\\
3 & 4 & NBS & 16.413 (0.842) & 9.545 (0.420) & 6.792 (0.271)\\
\hline
\end{tabular}
% \end{table}
% %%%
% % mixed + Scenario 3
% \begin{table}[!ht]
\caption{Comparisons of weighted mean squared errors ($\times 100$ with the standard deviation in parentheses) averaged over 100 replications by the number of vertices, $n\in \{200,400,800\}$, for multiplex networks with seven layers under Scenario 3 (Heterogeneous) in the mixed sparsity case. Function IDs correspond to those in Table~\ref{tab:sparsity_level_setup}. Underlined methods indicate the proposed methods, and bolded values denote the smallest average WMSE in each case.}
\label{tab:Heterogeneous_mixed_by_n}
\centering
\begin{tabular}{lll|rrr}
\hline
ID1 & ID2 & Method & 200 & 400 & 800\\
\hline
1 & 2 & \underline{mnhist} & 1.935 (0.139) & \textbf{0.958 (0.060)} & \textbf{0.480 (0.023)}\\
1 & 2 & nethist & 7.329 (0.333) & 4.203 (0.129) & 2.338 (0.046)\\
1 & 2 & SAS & \textbf{1.678 (0.101)} & 0.968 (0.043) & 0.546 (0.015)\\
1 & 2 & USVT & 5.822 (1.040) & 5.391 (0.760) & 5.368 (0.492)\\
1 & 2 & NBS & 19.182 (1.343) & 14.745 (0.678) & 12.530 (0.380)\\
\hline
1 & 4 & \underline{mnhist} & \textbf{2.259 (0.215)} & \textbf{0.992 (0.070)} & \textbf{0.448 (0.021)}\\
1 & 4 & nethist & 11.229 (0.478) & 6.216 (0.183) & 3.398 (0.064)\\
1 & 4 & SAS & 11.675 (1.199) & 11.053 (0.879) & 10.819 (0.725)\\
1 & 4 & USVT & 10.239 (1.644) & 8.878 (1.178) & 8.641 (0.660)\\
1 & 4 & NBS & 30.667 (1.650) & 23.401 (1.164) & 19.826 (0.600)\\
\hline
3 & 4 & \underline{mnhist} & \textbf{2.052 (0.236)} & \textbf{0.846 (0.058)} & \textbf{0.403 (0.022)}\\
3 & 4 & nethist & 9.974 (0.448) & 5.390 (0.185) & 2.937 (0.066)\\
3 & 4 & SAS & 10.545 (1.156) & 10.668 (0.862) & 10.468 (0.630)\\
3 & 4 & USVT & 6.799 (1.098) & 4.030 (0.614) & 2.326 (0.343)\\
3 & 4 & NBS & 32.199 (1.582) & 23.602 (1.030) & 21.041 (0.789)\\
\hline
\end{tabular}
\end{table}

%All dense + Scenario 1

\begin{table}[!h]
\vspace{10pt}
\caption{Comparisons of weighted mean squared errors ($\times 100$ with the standard deviation in parentheses) averaged over 100 replications by the number of layers, $L\in\{5,7,10\}$, for multiplex networks with 400 vertices under Scenario 1 (Homogeneous) in the dense layer case. Function IDs correspond to those in Table~\ref{tab:sparsity_level_setup}. Underlined methods indicate the proposed methods, and bolded values denote the smallest average WMSE in each case.}
\label{tab:homogeneous_all_dense_by_L}
\centering
\begin{tabular}{ll|rrr}
\hline
ID & Method & 5 & 7 & 10\\
\hline
1 & mnhist & 2.210 (0.133) & 1.843 (0.131) & 1.532 (0.126)\\
1 & h-mnhist & \textbf{1.708 (0.110)} & \textbf{1.276 (0.099)} & \textbf{0.954 (0.099)}\\
1 & nethist & 6.945 (0.199) & 6.971 (0.157) & 6.936 (0.170)\\
1 & SAS & 2.212 (0.092) & 2.223 (0.103) & 2.211 (0.092)\\
1 & USVT & 3.697 (0.170) & 3.667 (0.189) & 3.694 (0.179)\\
1 & NBS & 5.176 (0.168) & 5.175 (0.135) & 5.206 (0.187)\\
\hline
2 & mnhist & 0.435 (0.024) & 0.353 (0.022) & 0.288 (0.016)\\
2 & h-mnhist & 0.365 (0.020) & \textbf{0.277 (0.016)} & \textbf{0.200 (0.012)}\\
2 & nethist & 1.592 (0.048) & 1.590 (0.043) & 1.590 (0.040)\\
2 & SAS & \textbf{0.345 (0.013)} & 0.349 (0.013) & 0.348 (0.012)\\
2 & USVT & 0.425 (0.014) & 0.428 (0.014) & 0.428 (0.012)\\
2 & NBS & 1.264 (0.021) & 1.268 (0.024) & 1.267 (0.019)\\
\hline
3 & mnhist & 0.286 (0.020) & 0.219 (0.016) & 0.175 (0.011)\\
3 & h-mnhist & \textbf{0.263 (0.016)} & \textbf{0.191 (0.015)} & \textbf{0.135 (0.009)}\\
3 & nethist & 1.772 (0.099) & 1.775 (0.075) & 1.781 (0.073)\\
3 & SAS & 1.358 (0.064) & 1.369 (0.070) & 1.356 (0.066)\\
3 & USVT & 1.449 (0.066) & 1.460 (0.072) & 1.447 (0.069)\\
3 & NBS & 1.827 (0.027) & 1.832 (0.022) & 1.837 (0.018)\\
\hline
4 & mnhist & 0.656 (0.050) & 0.592 (0.059) & 0.535 (0.062)\\
4 & h-mnhist & \textbf{0.479 (0.044)} & \textbf{0.356 (0.036)} & \textbf{0.264 (0.021)}\\
4 & nethist & 2.531 (0.145) & 2.482 (0.151) & 2.502 (0.106)\\
4 & SAS & 11.513 (0.957) & 11.736 (1.072) & 11.662 (0.987)\\
4 & USVT & 1.957 (0.053) & 1.958 (0.042) & 1.954 (0.042)\\
4 & NBS & 3.856 (0.120) & 3.776 (0.112) & 3.746 (0.090)\\
\hline
\end{tabular}
\end{table}

%All dense + Scenario 2
\begin{table}[!h]
\caption{Comparisons of weighted mean squared errors ($\times 100$ with the standard deviation in parentheses) averaged over 100 replications by the number of layers, $L\in \{5,7,10\}$, for multiplex networks with 400 vertices under Scenario 2 (Perturbation) in the dense layer case. Function IDs correspond to those in Table~\ref{tab:sparsity_level_setup}. Underlined methods indicate the proposed methods, and bolded values denote the smallest average WMSE in each case.}
\label{tab:Perturbed_all_dense_by_L}
\centering
\begin{tabular}{lll|rrr}
\hline
ID1 & ID2 & Method & 5 & 7 & 10\\
\hline
1 & 2 & \underline{mnhist} & 0.765 (0.043) & \textbf{0.630 (0.032)} & \textbf{0.509 (0.031)}\\
1 & 2 & nethist & 2.530 (0.063) & 2.537 (0.055) & 2.519 (0.046)\\
1 & 2 & SAS & \textbf{0.659 (0.025)} & 0.651 (0.026) & 0.643 (0.029)\\
1 & 2 & USVT & 0.880 (0.031) & 0.859 (0.032) & 0.835 (0.031)\\
1 & 2 & NBS & 1.942 (0.046) & 1.925 (0.042) & 1.911 (0.039)\\
\hline
1 & 4 & \underline{mnhist} & \textbf{0.835 (0.063)} & \textbf{0.669 (0.057)} & \textbf{0.563 (0.036)}\\
1 & 4 & nethist & 3.322 (0.120) & 3.299 (0.111) & 3.296 (0.097)\\
1 & 4 & SAS & 7.806 (0.618) & 7.909 (0.682) & 7.730 (0.759)\\
1 & 4 & USVT & 2.024 (0.050) & 2.042 (0.045) & 2.049 (0.041)\\
1 & 4 & NBS & 3.533 (0.085) & 3.549 (0.101) & 3.540 (0.082)\\
\hline
3 & 4 & \underline{mnhist} & \textbf{0.427 (0.035)} & \textbf{0.365 (0.030)} & \textbf{0.328 (0.036)}\\
3 & 4 & nethist & 2.037 (0.080) & 2.039 (0.100) & 2.042 (0.091)\\
3 & 4 & SAS & 6.043 (0.458) & 5.923 (0.480) & 6.015 (0.449)\\
3 & 4 & USVT & 1.372 (0.026) & 1.405 (0.060) & 1.430 (0.028)\\
3 & 4 & NBS & 2.548 (0.045) & 2.556 (0.044) & 2.551 (0.035)\\
\hline
\end{tabular}

%All dense + Scenario 3
% \begin{table}[!ht]
\vspace{10pt}
\caption{Comparisons of weighted mean squared errors ($\times 100$ with the standard deviation in parentheses) averaged over 100 replications by the number of layers, $L\in \{5,7,10\}$, for multiplex networks with 400 vertices under Scenario 3 (Heterogeneous) in the dense layer case. Function IDs correspond to those in Table~\ref{tab:sparsity_level_setup}. Underlined methods indicate the proposed methods, and bolded values denote the smallest average WMSE in each case.}
\label{tab:Heterogeneous_all_dense_by_L}
\centering
\begin{tabular}{lll|rrr}
\hline
ID1 & ID2 & Method & 5 & 7 & 10\\
\hline
1 & 2 & \underline{mnhist} & 1.035 (0.087) & \textbf{0.809 (0.061)} & \textbf{0.601 (0.038)}\\
1 & 2 & nethist & 3.219 (0.103) & 3.113 (0.087) & 2.814 (0.071)\\
1 & 2 & SAS & \textbf{0.977 (0.064)} & 0.920 (0.048) & 0.812 (0.028)\\
1 & 2 & USVT & 1.485 (0.072) & 1.412 (0.068) & 1.235 (0.053)\\
1 & 2 & NBS & 2.505 (0.099) & 2.432 (0.106) & 2.210 (0.062)\\
\hline
1 & 4 & \underline{mnhist} & \textbf{1.164 (0.099)} & \textbf{0.915 (0.075)} & \textbf{0.718 (0.063)}\\
1 & 4 & nethist & 4.198 (0.146) & 4.082 (0.121) & 3.826 (0.105)\\
1 & 4 & SAS & 7.467 (0.554) & 7.604 (0.566) & 8.161 (0.593)\\
1 & 4 & USVT & 2.415 (0.085) & 2.382 (0.079) & 2.275 (0.067)\\
1 & 4 & NBS & 3.876 (0.116) & 3.856 (0.104) & 3.781 (0.091)\\
\hline
3 & 4 & \underline{mnhist} & \textbf{0.382 (0.025)} & \textbf{0.335 (0.025)} & \textbf{0.318 (0.028)}\\
3 & 4 & nethist & 2.164 (0.091) & 2.150 (0.079) & 2.216 (0.060)\\
3 & 4 & SAS & 5.163 (0.450) & 5.300 (0.484) & 6.012 (0.493)\\
3 & 4 & USVT & 1.467 (0.047) & 1.477 (0.040) & 1.511 (0.037)\\
3 & 4 & NBS & 2.373 (0.028) & 2.413 (0.027) & 2.522 (0.026)\\
\hline
\end{tabular}
\end{table}

%All dense + Scenario 1
%Effect of n
\begin{table}[!h]
\caption{Comparisons of weighted mean squared errors ($\times 100$ with the standard deviation in parentheses) averaged over 100 replications by the number of vertices, $n\in \{200,400,800\}$, for multiplex networks with seven layers under Scenario 1 (Homogeneous) in the dense layer case. Function IDs correspond to those in Table~\ref{tab:sparsity_level_setup}. Underlined methods indicate the proposed methods, and bolded values denote the smallest average WMSE in each case.}
\label{tab:homogeneous_all_dense_by_n}
\centering
\begin{tabular}{ll|rrr}
\hline
ID & Method & 200 & 400 & 800\\
\hline
1 & mnhist & 3.637 (0.411) & 1.843 (0.131) & 0.945 (0.067)\\
1 & h-mnhist & \textbf{2.488 (0.304)} & \textbf{1.276 (0.099)} & \textbf{0.668 (0.047)}\\
1 & nethist & 12.844 (0.509) & 6.971 (0.157) & 3.641 (0.082)\\
1 & SAS & 4.081 (0.255) & 2.223 (0.103) & 1.183 (0.044)\\
1 & USVT & 5.429 (0.289) & 3.667 (0.189) & 1.814 (0.053)\\
1 & NBS & 9.241 (0.628) & 5.175 (0.135) & 3.047 (0.059)\\
\hline
2 & mnhist & 0.713 (0.051) & 0.353 (0.022) & 0.184 (0.010)\\
2 & h-mnhist & \textbf{0.522 (0.032)} & \textbf{0.277 (0.016)} & \textbf{0.140 (0.007)}\\
2 & nethist & 2.875 (0.106) & 1.590 (0.043) & 0.856 (0.020)\\
2 & SAS & 0.658 (0.032) & 0.349 (0.013) & 0.184 (0.007)\\
2 & USVT & 0.824 (0.035) & 0.428 (0.014) & 0.229 (0.007)\\
2 & NBS & 2.179 (0.048) & 1.268 (0.024) & 0.769 (0.010)\\
\hline
3 & mnhist & 0.536 (0.156) & 0.219 (0.016) & 0.101 (0.006)\\
3 & h-mnhist & \textbf{0.421 (0.055)} & \textbf{0.191 (0.015)} & \textbf{0.089 (0.006})\\
3 & nethist & 3.182 (0.108) & 1.775 (0.075) & 0.809 (0.056)\\
3 & SAS & 1.623 (0.099) & 1.369 (0.070) & 1.199 (0.042)\\
3 & USVT & 1.824 (0.105) & 1.460 (0.072) & 0.550 (0.022)\\
3 & NBS & 3.009 (0.074) & 1.832 (0.022) & 1.088 (0.011)\\
\hline
4 & mnhist & 1.194 (0.116) & 0.592 (0.059) & 0.350 (0.036)\\
4 & h-mnhist & \textbf{0.825 (0.095)} & \textbf{0.356 (0.036)} & \textbf{0.169 (0.009)}\\
4 & nethist & 5.907 (0.543) & 2.482 (0.151) & 1.091 (0.045)\\
4 & SAS & 12.226 (1.624) & 11.736 (1.072) & 11.391 (0.752)\\
4 & USVT & 4.006 (0.170) & 1.958 (0.042) & 0.968 (0.014)\\
4 & NBS & 6.783 (0.296) & 3.776 (0.112) & 2.129 (0.026)\\
\hline
\end{tabular}
\end{table}

\begin{table}[!h]
\caption{Comparisons of weighted mean squared errors ($\times 100$ with the standard deviation in parentheses) averaged over 100 replications by the number of vertices, $n\in \{200,400,800\}$, for multiplex networks with seven layers under Scenario 2 (Perturbation) in the dense layer case. Function IDs correspond to those in Table~\ref{tab:sparsity_level_setup}. Underlined methods indicate the proposed methods, and bolded values denote the smallest average WMSE in each case.}
\label{tab:Perturbed_all_dense_by_n}
\centering
\begin{tabular}{lll|rrr}
\hline
ID1 & ID2 & Method & 200 & 400 & 800\\
\hline
1 & 2 & \underline{mnhist} & 1.251 (0.098) & \textbf{0.630 (0.032)} & \textbf{0.322 (0.015)}\\
1 & 2 & nethist & 4.672 (0.169) & 2.537 (0.055) & 1.342 (0.023)\\
1 & 2 & SAS & \textbf{1.210 (0.067)} & 0.651 (0.026) & 0.346 (0.011)\\
1 & 2 & USVT & 1.484 (0.077) & 0.859 (0.032) & 0.503 (0.012)\\
1 & 2 & NBS & 3.370 (0.167) & 1.925 (0.042) & 1.152 (0.017)\\
\hline
1 & 4 & \underline{mnhist} & \textbf{1.462 (0.148)} & \textbf{0.669 (0.057)} & \textbf{0.333 (0.026)}\\
1 & 4 & nethist & 6.639 (0.258) & 3.299 (0.111) & 1.662 (0.054)\\
1 & 4 & SAS & 9.134 (0.812) & 7.909 (0.682) & 6.722 (0.606)\\
1 & 4 & USVT & 4.008 (0.128) & 2.042 (0.045) & 1.031 (0.018)\\
1 & 4 & NBS & 6.277 (0.243) & 3.549 (0.101) & 2.067 (0.031)\\
\hline
3 & 4 & \underline{mnhist} & \textbf{0.764 (0.070)} & \textbf{0.365 (0.030)} & \textbf{0.197 (0.021)}\\
3 & 4 & nethist & 4.298 (0.258) & 2.039 (0.100) & 0.952 (0.050)\\
3 & 4 & SAS & 6.458 (0.644) & 5.923 (0.480) & 5.744 (0.329)\\
3 & 4 & USVT & 2.840 (0.126) & 1.405 (0.060) & 0.707 (0.028)\\
3 & 4 & NBS & 4.459 (0.109) & 2.556 (0.044) & 1.504 (0.017)\\
\hline
\end{tabular}

\vspace{10pt}

\caption{Comparisons of weighted mean squared errors ($\times 100$ with the standard deviation in parentheses) averaged over 100 replications by the number of vertices, $n\in \{200,400,800\}$, for multiplex networks with seven layers under Scenario 3 (Heterogeneous) in the dense layer case. Function IDs correspond to those in Table~\ref{tab:sparsity_level_setup}. Underlined methods indicate the proposed methods, and bolded values denote the smallest average WMSE in each case.}
\label{tab:Heterogeneous_all_dense_by_n}
\centering
\begin{tabular}{lll|rrr}
\hline
ID1 & ID2 & Method & 200 & 400 & 800\\
\hline
1 & 2 & \underline{mnhist} & \textbf{1.595 (0.110)} & \textbf{0.809 (0.061)} & \textbf{0.419 (0.022)}\\
1 & 2 & nethist & 5.666 (0.221) & 3.113 (0.087) & 1.641 (0.037)\\
1 & 2 & SAS & 1.666 (0.095) & 0.920 (0.048) & 0.493 (0.024)\\
1 & 2 & USVT & 2.157 (0.108) & 1.412 (0.068) & 0.752 (0.030)\\
1 & 2 & NBS & 4.203 (0.171) & 2.432 (0.106) & 1.408 (0.034)\\
\hline
1 & 4 & \underline{mnhist} & \textbf{1.898 (0.175)} & \textbf{0.915 (0.075)} & \textbf{0.466 (0.032)}\\
1 & 4 & nethist & 7.916 (0.309) & 4.082 (0.121) & 2.083 (0.052)\\
1 & 4 & SAS & 8.609 (0.838) & 7.604 (0.566) & 7.075 (0.482)\\
1 & 4 & USVT & 3.927 (0.151) & 2.382 (0.079) & 1.240 (0.034)\\
1 & 4 & NBS & 6.903 (0.256) & 3.856 (0.104) & 2.237 (0.040)\\
\hline
3 & 4 & \underline{mnhist} & \textbf{0.719 (0.070)} & \textbf{0.335 (0.025)} & \textbf{0.172 (0.010)}\\
3 & 4 & nethist & 3.935 (0.132) & 2.150 (0.079) & 0.981 (0.062)\\
3 & 4 & SAS & 5.785 (0.620) & 5.300 (0.484) & 5.140 (0.312)\\
3 & 4 & USVT & 2.325 (0.079) & 1.477 (0.040) & 0.693 (0.031)\\
3 & 4 & NBS & 4.021 (0.070) & 2.413 (0.027) & 1.466 (0.016)\\
\hline
\end{tabular}
\end{table}

%All sparse + Scenario 1
%Effect of L
\begin{table}[!h]
\vspace{10pt}
    \caption{Comparisons of weighted mean squared errors ($\times 100$ with the standard deviation in parentheses) averaged over 100 replications by the number of layers, $L\in \{5,7,10\}$, for multiplex networks with 400 vertices under Scenario 1 (Homogeneous) in the sparse layer case. Function IDs correspond to those in Table~\ref{tab:sparsity_level_setup}. Underlined methods indicate the proposed methods, and bolded values denote the smallest average WMSE in each case.}
    \label{tab:homogeneous_all_sparse_by_L}
    \centering
\begin{tabular}{ll|rrr}
\hline
ID & Method & 5 & 7 & 10\\
\hline
1 & mnhist & 8.94 (0.446) & 7.19 (0.334) & 5.81 (0.301)\\
1 & h-mnhist & 7.53 (0.364) & \textbf{5.60 (0.299)} & \textbf{4.07 (0.210)}\\
1 & nethist & 28.25 (0.863) & 28.29 (0.750) & 28.39 (0.597)\\
1 & SAS & \textbf{7.04 (0.349)} & 7.12 (0.342) & 7.08 (0.277)\\
1 & USVT & 31.23 (4.696) & 30.69 (3.713) & 30.20 (3.857)\\
1 & NBS & 86.91 (5.413) & 90.87 (5.316) & 91.97 (3.831)\\
\hline
2 & mnhist & 7.54 (0.278) & 6.17 (0.226) & 5.05 (0.160)\\
2 & h-mnhist & 6.83 (0.275) & 5.17 (0.182) & 3.85 (0.129)\\
2 & nethist & 23.48 (0.738) & 23.38 (0.650) & 23.48 (0.566)\\
2 & SAS & \textbf{3.31 (0.150)} & \textbf{3.34 (0.131)} & \textbf{3.36 (0.125)}\\
2 & USVT & 13.65 (2.646) & 12.80 (2.197) & 12.57 (1.982)\\
2 & NBS & 85.92 (3.600) & 100.77 (3.699) & 102.31 (3.268)\\
\hline
3 & mnhist & 7.96 (0.350) & 6.65 (0.304) & 5.61 (0.253)\\
3 & h-mnhist & 7.41 (0.261) & 5.82 (0.176) & 4.56 (0.148)\\
3 & nethist & 23.65 (0.676) & 23.71 (0.641) & 23.58 (0.502)\\
3 & SAS & \textbf{4.03 (0.153)} & \textbf{4.06 (0.153)} & \textbf{4.04 (0.118)}\\
3 & USVT & 14.28 (2.489) & 13.22 (2.271) & 12.45 (1.663)\\
3 & NBS & 85.49 (3.377) & 103.38 (3.539) & 104.31 (3.617)\\
\hline
4 & mnhist & 6.88 (1.839) & 5.67 (2.610) & 4.05 (1.134)\\
4 & h-mnhist & \textbf{5.88 (0.379)} & \textbf{4.16 (0.259)} & \textbf{3.01 (0.162)}\\
4 & nethist & 30.45 (1.309) & 30.64 (1.013) & 30.60 (0.936)\\
4 & SAS & 13.94 (1.095) & 13.84 (0.966) & 13.75 (1.116)\\
4 & USVT & 22.11 (2.294) & 21.88 (1.767) & 21.09 (1.674)\\
4 & NBS & 88.38 (3.539) & 105.67 (4.068) & 107.22 (3.124)\\
\hline
\end{tabular}
\end{table}

%All sparse + Scenario 2
\begin{table}[!h]
\caption{Comparisons of weighted mean squared errors ($\times 100$ with the standard deviation in parentheses) averaged over 100 replications by the number of layers, $L\in \{5,7,10\}$, for multiplex networks with 400 vertices under Scenario 2 (Perturbation) in the sparse layer case. Function IDs correspond to those in Table~\ref{tab:sparsity_level_setup}. Underlined methods indicate the proposed methods, and bolded values denote the smallest average WMSE in each case.}
\label{tab:Perturbed_all_sparse_by_L}
\centering
\begin{tabular}{lll|rrr}
\hline
ID1 & ID2 & Method & 5 & 7 & 10\\
\hline
1 & 2 & \underline{mnhist} & 8.23 (0.325) & 6.55 (0.233) & 5.23 (0.184)\\
1 & 2 & nethist & 24.85 (0.750) & 24.67 (0.612) & 24.60 (0.537)\\
1 & 2 & SAS & \textbf{4.21 (0.189)} & \textbf{4.14 (0.140)} & \textbf{4.14 (0.133)}\\
1 & 2 & USVT & 23.21 (3.394) & 20.27 (2.991) & 19.43 (2.318)\\
1 & 2 & NBS & 91.86 (3.388) & 101.81 (3.473) & 102.78 (3.293)\\
\hline
1 & 4 & \underline{mnhist} & \textbf{7.22 (0.458)} & \textbf{5.52 (0.335)} & \textbf{4.34 (0.262)}\\
1 & 4 & nethist & 29.82 (1.177) & 29.91 (1.020) & 29.89 (0.920)\\
1 & 4 & SAS & 12.58 (0.889) & 12.65 (0.871) & 12.71 (0.886)\\
1 & 4 & USVT & 28.86 (3.648) & 27.11 (2.496) & 25.74 (2.251)\\
1 & 4 & NBS & 93.54 (4.005) & 105.39 (4.268) & 106.32 (3.204)\\
\hline
3 & 4 & \underline{mnhist} & \textbf{7.67 (2.099)} & \textbf{5.42 (0.989)} & \textbf{4.21 (0.842)}\\
3 & 4 & nethist & 28.31 (1.117) & 28.65 (0.954) & 28.80 (0.692)\\
3 & 4 & SAS & 11.41 (0.835) & 11.36 (0.765) & 11.47 (0.727)\\
3 & 4 & USVT & 19.54 (2.284) & 19.24 (2.064) & 19.00 (1.917)\\
3 & 4 & NBS & 87.92 (3.233) & 104.95 (3.540) & 105.90 (3.445)\\
\hline
\end{tabular}

% \begin{table}[!ht]
\vspace{10pt}
\caption{Comparisons of weighted mean squared errors ($\times 100$ with the standard deviation in parentheses) averaged over 100 replications by the number of layers, $L\in \{5,7,10\}$, for multiplex networks with 400 vertices under Scenario 3 (Heterogeneous) in the sparse layer case. Function IDs correspond to those in Table~\ref{tab:sparsity_level_setup}. Underlined methods indicate the proposed methods, and bolded values denote the smallest average WMSE in each case.}
\label{tab:Heterogeneous_all_sparse_by_L}
\centering
\begin{tabular}{lll|rrr}
\hline
ID1 & ID2 & Method & 5 & 7 & 10\\
\hline
1 & 2 & \underline{mnhist} & 8.78 (0.425) & 6.94 (0.329) & 5.50 (0.276)\\
1 & 2 & nethist & 26.19 (0.632) & 26.11 (0.621) & 25.63 (0.554)\\
1 & 2 & SAS & \textbf{5.41 (0.253)} & \textbf{5.31 (0.225)} & \textbf{5.06 (0.202)}\\
1 & 2 & USVT & 30.10 (4.107) & 28.03 (3.711) & 26.57 (2.847)\\
1 & 2 & NBS & 92.41 (4.242) & 99.33 (3.966) & 101.70 (3.092)\\
\hline
1 & 4 & \underline{mnhist} & \textbf{7.75 (0.477)} & \textbf{5.97 (0.428)} & \textbf{4.50 (0.300)}\\
1 & 4 & nethist & 29.30 (1.324) & 29.76 (0.847) & 30.18 (0.842)\\
1 & 4 & SAS & 11.59 (0.724) & 11.63 (0.744) & 12.22 (0.746)\\
1 & 4 & USVT & 33.55 (4.259) & 31.66 (3.318) & 31.19 (2.953)\\
1 & 4 & NBS & 93.42 (4.788) & 101.46 (4.274) & 104.42 (3.455)\\
\hline
3 & 4 & \underline{mnhist} & \textbf{9.24 (2.680)} & \textbf{6.62 (2.480)} & \textbf{4.25 (0.821)}\\
3 & 4 & nethist & 26.64 (1.187) & 27.21 (1.080) & 27.64 (0.963)\\
3 & 4 & SAS & 9.75 (0.591) & 10.10 (0.652) & 10.70 (0.722)\\
3 & 4 & USVT & 17.15 (2.607) & 16.78 (1.907) & 16.50 (1.641)\\
3 & 4 & NBS & 86.38 (3.067) & 101.58 (4.027) & 104.07 (2.940)\\
\hline
\end{tabular}
\end{table}

\begin{table}[!h]
\caption{Comparisons of weighted mean squared errors ($\times 100$ with the standard deviation in parentheses) averaged over 100 replications by the number of vertices, $n\in \{200,400,800\}$, for multiplex networks with seven layers under Scenario 1 (Homogeneous) in the sparse layer case. Function IDs correspond to those in Table~\ref{tab:sparsity_level_setup}. Underlined methods indicate the proposed methods, and bolded values denote the smallest average WMSE in each case.}
\label{tab:homogeneous_all_sparse_by_n}
\centering
\begin{tabular}{ll|rrr}
\hline
ID & Method & 200 & 400 & 800\\
\hline
1 & \underline{mnhist} & 10.55 (0.690) & 7.19 (0.334) & 4.97 (0.153)\\
1 & \underline{h-mnhist} & \textbf{8.14 (0.682)} & \textbf{5.79 (0.341)} & \textbf{4.03 (0.155)}\\
1 & nethist & 36.01 (1.322) & 28.29 (0.750) & 21.97 (0.318)\\
1 & SAS & 8.96 (0.547) & 7.12 (0.342) & 5.57 (0.158)\\
1 & USVT & 25.50 (3.669) & 30.69 (3.713) & 43.57 (3.909)\\
1 & NBS & 83.63 (5.107) & 90.87 (5.316) & 109.35 (4.121)\\
\hline
2 & \underline{mnhist} & 8.61 (0.433) & 6.17 (0.226) & 4.25 (0.100)\\
2 & \underline{h-mnhist} & 6.79 (0.446) & 4.91 (0.262) & 3.46 (0.108)\\
2 & nethist & 29.79 (1.081) & 23.38 (0.650) & 18.19 (0.358)\\
2 & SAS & \textbf{4.61 (0.247)} & \textbf{3.34 (0.131)} & \textbf{2.42 (0.066)}\\
2 & USVT & 15.54 (2.843) & 12.80 (2.197) & 10.84 (1.847)\\
2 & NBS & 90.55 (4.461) & 100.77 (3.699) & 129.95 (3.450)\\
\hline
3 & \underline{mnhist} & 9.09 (0.386) & 6.65 (0.304) & 4.88 (0.139)\\
3 & \underline{h-mnhist} & 7.31 (0.412) & 5.49 (0.211) & 4.26 (0.118)\\
3 & nethist & 30.01 (1.247) & 23.71 (0.641) & 18.43 (0.360)\\
3 & SAS & \textbf{5.31 (0.258)} & \textbf{4.06 (0.153)} & \textbf{3.13 (0.071)}\\
3 & USVT & 17.31 (3.150) & 13.22 (2.271) & 10.23 (1.569)\\
3 & NBS & 93.56 (3.873) & 103.38 (3.539) & 132.25 (3.418)\\
\hline
4 & \underline{mnhist} & 8.24 (1.743) & 5.67 (2.610) & 3.06 (0.154)\\
4 & \underline{h-mnhist} & \textbf{5.73 (0.571)} & \textbf{3.76 (0.266)} & \textbf{2.42 (0.141)}\\
4 & nethist & 37.58 (1.424) & 30.64 (1.013) & 23.86 (0.954)\\
4 & SAS & 15.32 (1.279) & 13.84 (0.966) & 13.11 (0.752)\\
4 & USVT & 26.48 (2.816) & 21.88 (1.767) & 15.93 (1.377)\\
4 & NBS & 95.30 (4.621) & 105.67 (4.068) & 134.51 (3.690)\\
\hline
\end{tabular}
\end{table}

%All sparse + Scenario 2 and 3
%Effect of n
\begin{table}[!h]
\caption{Comparisons of weighted mean squared errors ($\times 100$ with the standard deviation in parentheses) averaged over 100 replications by the number of vertices, $n\in \{200,400,800\}$, for multiplex networks with seven layers under Scenario 2 (Perturbation) in the sparse layer case. Function IDs correspond to those in Table~\ref{tab:sparsity_level_setup}. Underlined methods indicate the proposed methods, and bolded values denote the smallest average WMSE in each case.}
\label{tab:Perturbed_all_sparse_by_n}
\centering
\begin{tabular}{lll|rrr}
\hline
ID1 & ID2 & Method & 200 & 400 & 800\\
\hline
1 & 2 & \underline{mnhist} & 9.46 (0.555) & 6.55 (0.233) & 4.55 (0.169)\\
1 & 2 & nethist & 31.17 (1.274) & 24.67 (0.612) & 19.33 (0.317)\\
1 & 2 & SAS & \textbf{5.47 (0.261)} & \textbf{4.14 (0.140)} & \textbf{3.18 (0.099)}\\
1 & 2 & USVT & 20.10 (3.511) & 20.27 (2.991) & 22.54 (2.115)\\
1 & 2 & NBS & 91.03 (4.580) & 101.81 (3.473) & 132.34 (3.518)\\
\hline
1 & 4 & \underline{mnhist} & \textbf{8.55 (1.014)} & \textbf{5.52 (0.335)} & \textbf{3.54 (0.155)}\\
1 & 4 & nethist & 37.06 (1.150) & 29.91 (1.020) & 22.75 (0.962)\\
1 & 4 & SAS & 14.05 (1.174) & 12.65 (0.871) & 11.78 (0.577)\\
1 & 4 & USVT & 27.93 (3.313) & 27.11 (2.496) & 26.47 (2.024)\\
1 & 4 & NBS & 94.61 (4.091) & 105.39 (4.268) & 136.70 (3.868)\\
\hline
3 & 4 & \underline{mnhist} & \textbf{9.53 (2.740)} & \textbf{5.42 (0.989)} & \textbf{3.38 (0.918)}\\
3 & 4 & nethist & 35.23 (1.314) & 28.65 (0.954) & 22.21 (0.673)\\
3 & 4 & SAS & 12.51 (1.166) & 11.36 (0.765) & 10.54 (0.480)\\
3 & 4 & USVT & 22.24 (2.152) & 19.24 (2.064) & 14.96 (1.402)\\
3 & 4 & NBS & 94.60 (3.546) & 104.95 (3.540) & 134.37 (3.880)\\
\hline
\end{tabular}
% \end{table}

% \begin{table}[!ht]
\vspace{10pt}
\caption{Comparisons of weighted mean squared errors ($\times 100$ with the standard deviation in parentheses) averaged over 100 replications by the number of vertices, $n\in \{200,400,800\}$, for multiplex networks with seven layers under Scenario 3 (Heterogeneous) in the sparse layer case. Function IDs correspond to those in Table~\ref{tab:sparsity_level_setup}. Underlined methods indicate the proposed methods, and bolded values denote the smallest average WMSE in each case.}
\label{tab:Heterogeneous_all_sparse_by_n}
\centering
\begin{tabular}{lll|rrr}
\hline
ID1 & ID2 & Method & 200 & 400 & 800\\
\hline
1 & 2 & \underline{mnhist} & 10.28 (0.697) & 6.94 (0.329) & 4.77 (0.176)\\
1 & 2 & nethist & 33.13 (1.287) & 26.11 (0.621) & 20.32 (0.327)\\
1 & 2 & SAS & \textbf{6.85 (0.411)} & \textbf{5.31 (0.225)} & \textbf{4.14 (0.110)}\\
1 & 2 & USVT & 23.87 (4.031) & 28.03 (3.711) & 37.28 (2.489)\\
1 & 2 & NBS & 86.60 (4.702) & 99.33 (3.966) & 125.72 (3.058)\\
\hline
1 & 4 & \underline{mnhist} & \textbf{8.97 (0.739)} & \textbf{5.97 (0.428)} & \textbf{3.90 (0.170)}\\
1 & 4 & nethist & 37.07 (1.387) & 29.76 (0.847) & 22.26 (1.000)\\
1 & 4 & SAS & 13.21 (1.084) & 11.63 (0.744) & 10.63 (0.468)\\
1 & 4 & USVT & 29.58 (4.035) & 31.66 (3.318) & 40.00 (2.853)\\
1 & 4 & NBS & 89.80 (5.098) & 101.46 (4.274) & 127.97 (3.354)\\
\hline
3 & 4 & \underline{mnhist} & \textbf{10.14 (2.436)} & \textbf{6.62 (2.480)} & \textbf{3.89 (1.952)}\\
3 & 4 & nethist & 33.77 (1.377) & 27.21 (1.080) & 20.24 (0.905)\\
3 & 4 & SAS & 11.22 (0.952) & 10.10 (0.652) & 9.19 (0.425)\\
3 & 4 & USVT & 21.41 (2.900) & 16.78 (1.907) & 12.61 (1.606)\\
3 & 4 & NBS & 93.61 (4.250) & 101.58 (4.027) & 132.95 (3.475)\\
\hline
\end{tabular}
\end{table}

\subsection{Layer-wise Estimation Error}

Tables~\ref{tab:layerwise_homogeneous}--\ref{tab:layerwise_heterogeneous} report the layer-wise MSEs obtained from the results in Tables~\ref{tab:homogeneous_mixed_by_L}--\ref{tab:Heterogeneous_mixed_by_L} of the main manuscript when $L=10$. Overall, the proposed methods show smaller layer-wise MSE in the sparsest layer relative to the competing methods. This suggests that sharing information across layers is effective in graphon estimation when the layers are sparse. 
In the densest layer, the proposed methods achieve estimation errors that are smaller than or comparable to those of the competing methods. 

\begin{table}
\caption{
Layer-wise MSEs ($\times 100$ with the standard deviation in parentheses) of the sparsest layer ($\ell = 1$), middle layer ($\ell = 5$), and densest layer ($\ell = 10$) in the mixed sparsity case under the Homogeneous scenario with $n=400$ and $L=10$.
All values are averaged over 100 replications.
}
\label{tab:layerwise_homogeneous}
\centering
\begin{tabular}{ll|ccc}
\hline
ID & Method & $\ell = 1$ & $\ell = 5$ & $\ell = 10$ \\
\hline
1 & \underline{mnhist} & 12.168 (1.842) & 2.604 (0.313) & \textbf{1.519 (0.138)} \\
1 & \underline{h-mnhist} & \textbf{1.523 (0.094)} & \textbf{1.523 (0.094)} & 1.523 (0.094) \\
1 & nethist & 64.052 (4.607) & 13.392 (0.673) & 5.700 (0.342) \\
1 & SAS & 12.096 (1.045) & 3.930 (0.305) & 1.847 (0.135) \\
1 & USVT & 172.582 (33.878) & 5.490 (0.330) & 3.332 (0.189) \\
1 & NBS & 260.249 (26.601) & 12.707 (0.605) & 3.995 (0.137) \\
\hline
2 & \underline{mnhist} & 9.419 (1.722) & 0.620 (0.088) & 0.245 (0.026) \\
2 & \underline{h-mnhist} & \textbf{0.397 (0.024)} & \textbf{0.397 (0.024)} & 0.397 (0.024) \\
2 & nethist & 54.502 (3.722) & 3.921 (0.197) & 0.806 (0.057) \\
2 & SAS & 4.853 (0.792) & 0.828 (0.065) & \textbf{0.197 (0.017)} \\
2 & USVT & 82.381 (32.408) & 1.072 (0.077) & 0.236 (0.017) \\
2 & NBS & 211.654 (23.075) & 3.434 (0.097) & 0.630 (0.027) \\
\hline
3 & \underline{mnhist} & 5.092 (0.991) & 0.408 (0.065) & \textbf{0.204 (0.017)} \\
3 & \underline{h-mnhist} & \textbf{0.342 (0.022)} & \textbf{0.342 (0.022)} & 0.342 (0.022) \\
3 & nethist & 55.621 (4.721) & 4.500 (0.211) & 0.680 (0.079) \\
3 & SAS & 5.271 (0.734) & 1.800 (0.081) & 1.207 (0.059) \\
3 & USVT & 75.187 (34.872) & 2.088 (0.091) & 1.260 (0.059) \\
3 & NBS & 205.451 (22.043) & 4.365 (0.123) & 0.942 (0.043) \\
\hline
4 & \underline{mnhist} & 3.491 (0.686) & 0.699 (0.073) & 0.512 (0.041) \\
4 & \underline{h-mnhist} & \textbf{0.502 (0.033)} & \textbf{0.502 (0.033)} & \textbf{0.502 (0.033)} \\
4 & nethist & 65.692 (4.556) & 5.339 (0.459) & 1.192 (0.127) \\
4 & SAS & 15.162 (1.234) & 12.044 (1.013) & 11.098 (1.083) \\
4 & USVT & 92.422 (28.221) & 3.692 (0.207) & 1.065 (0.059) \\
4 & NBS & 219.194 (20.883) & 7.597 (0.266) & 1.938 (0.069) \\
\hline
\end{tabular}
\end{table}

\begin{table}
\caption{Layer-wise MSEs ($\times 100$ with the standard deviation in parentheses) of the sparsest layer ($\ell = 1$), middle layer ($\ell = 5$), and densest layer ($\ell = 10$) in the mixed sparsity case under the Perturbed scenario with $n=400$ and $L=10$.
All values are averaged over 100 replications.}
\label{tab:layerwise_perturbation}
\centering
\begin{tabular}{lll|ccc}
\hline
ID1 & ID2 & Method & $\ell = 1$ & $\ell = 5$ & $\ell = 10$ \\
\hline
1 & 2 & \underline{mnhist} & \textbf{11.822 (1.865)} & \textbf{0.638 (0.086)} & 0.235 (0.023) \\
1 & 2 & nethist & 64.011 (4.429) & 3.928 (0.194) & 0.804 (0.044) \\
1 & 2 & SAS & 12.156 (1.168) & 0.821 (0.064) & \textbf{0.197 (0.016)} \\
1 & 2 & USVT & 170.323 (38.738) & 1.045 (0.077) & 0.236 (0.016) \\
1 & 2 & NBS & 261.472 (24.961) & 3.389 (0.105) & 0.627 (0.023) \\
\hline
1 & 4 & \underline{mnhist} & \textbf{3.562 (0.829)} & \textbf{0.699 (0.063)} & \textbf{0.522 (0.038)} \\
1 & 4 & nethist & 64.640 (4.026) & 5.323 (0.554) & 1.178 (0.128) \\
1 & 4 & SAS & 12.179 (1.075) & 11.570 (0.915) & 11.303 (1.079) \\
1 & 4 & USVT & 179.559 (33.793) & 3.682 (0.192) & 1.072 (0.059) \\
1 & 4 & NBS & 263.976 (25.473) & 7.647 (0.269) & 1.938 (0.075) \\
\hline
3 & 4 & \underline{mnhist} & \textbf{2.957 (0.634)} & \textbf{0.687 (0.070)} & \textbf{0.526 (0.039)} \\
3 & 4 & nethist & 55.073 (4.403) & 5.472 (0.490) & 1.156 (0.121) \\
3 & 4 & SAS & 5.333 (0.771) & 11.234 (0.986) & 11.297 (1.152) \\
3 & 4 & USVT & 76.066 (29.316) & 3.721 (0.205) & 1.071 (0.059) \\
3 & 4 & NBS & 211.297 (23.506) & 7.699 (0.292) & 1.937 (0.068) \\
\hline
\end{tabular}
% \end{table}
\vspace{10pt}
% \begin{table}
\caption{Layer-wise MSEs ($\times 100$ with the standard deviation in parentheses) of the sparsest layer ($\ell = 1$), middle layer ($\ell = 5$), and densest layer ($\ell = 10$) in the mixed sparsity case under the Heterogeneous scenario with $n=400$ and $L=10$.
All values are averaged over 100 replications.}
\label{tab:layerwise_heterogeneous}
\centering
\begin{tabular}{lll|ccc}
\hline
ID1 & ID2 & Method & $\ell = 1$ & $\ell = 5$ & $\ell = 10$ \\
\hline
1 & 2 & \underline{mnhist} & \textbf{10.924 (1.832)} & \textbf{5.440 (0.681)} & 0.211 (0.022) \\
1 & 2 & nethist & 64.893 (4.848) & 30.850 (1.755) & 0.807 (0.054) \\
1 & 2 & SAS & 12.108 (1.082) & 7.796 (0.602) & \textbf{0.196 (0.014)} \\
1 & 2 & USVT & 176.431 (35.376) & 25.827 (11.484) & 0.234 (0.017) \\
1 & 2 & NBS & 262.887 (24.748) & 127.427 (13.550) & 0.628 (0.025) \\
\hline
1 & 4 & \underline{mnhist} & \textbf{4.317 (0.903)} & \textbf{2.034 (0.302)} & \textbf{0.508 (0.032)} \\
1 & 4 & nethist & 64.178 (4.386) & 30.947 (1.612) & 1.184 (0.126) \\
1 & 4 & SAS & 12.395 (1.053) & 7.802 (0.661) & 11.122 (0.958) \\
1 & 4 & USVT & 175.905 (34.208) & 29.257 (10.121) & 1.066 (0.058) \\
1 & 4 & NBS & 257.268 (25.226) & 127.156 (16.785) & 1.939 (0.068) \\
\hline
3 & 4 & \underline{mnhist} & \textbf{3.215 (0.691)} & \textbf{1.298 (0.310)} & \textbf{0.522 (0.044)} \\
3 & 4 & nethist & 56.157 (4.226) & 25.691 (1.593) & 1.185 (0.130) \\
3 & 4 & SAS & 5.356 (0.700) & 4.341 (0.334) & 11.166 (1.014) \\
3 & 4 & USVT & 75.888 (31.850) & 10.274 (1.690) & 1.067 (0.056) \\
3 & 4 & NBS & 210.119 (21.953) & 134.442 (17.882) & 1.937 (0.070) \\
\hline
\end{tabular}
\end{table}

\subsection{Robustness of the bandwidth to sparse layers}
\label{subsec:MISE_WMISE_bandwidth_comp}

In Section~E.3.1, we investigate how the selected bandwidths are affected by the addition of sparse layers, and in Section~E.3.2, we report the estimation errors of the two-stage estimator.

We first describe the common simulation settings used in both subsections. We generate multiplex networks with five dense layers. Specifically, $f^{(\ell)}$ is determined based on the functions introduced in Table~1 and the three scenarios defined in Section~5.1. Under the Perturbed and Heterogeneous scenarios, we use the graphon ID pairs $(1,2)$, $(1,4)$, and $(3,4)$. The sparsity parameter $\rho_n^{(\ell)}$ is set within the same range as in the all-dense setting.

We add $m\in\{0,2,4,8\}$ near-empty layers with the same graphon structure, where $\rho_n^{(\ell)}=0.5/n$. Under the Perturbed and Heterogeneous scenarios, the near-empty layers are assigned to the first graphon in each pair and equally to the two graphons, respectively. All results are based on $n=400$ and averaged over 100 replications. Tables~\ref{tab:stress}, \ref{tab:stress_perturbed}, and \ref{tab:stress_heterogeneous} summarize the simulation results under the Homogeneous, Perturbed, and Heterogeneous scenarios, respectively.

\begin{table}
\caption{
(i) The selected bandwidths and dense-layer MSEs under the WMISE-based ($\widehat{h}$) and MISE-based ($\widehat{h}_{\mathrm{MISE}}$) bandwidth selections, and (ii) the WMSE $(\times 100)$ of the multi-network histogram (mnhist) and the two-stage estimator. Results are based on multiplex networks with five dense layers and $n=400$ vertices under the Homogeneous scenario, where $m\in\{0,2,4,8\}$ sparse layers are added. All values are averaged over 100 replications.
}
\label{tab:stress}
\centering
\begin{tabular}{l|cccc}
\hline
 & $m=0$ & $m=2$ & $m=4$ & $m=8$ \\
\hline
\multicolumn{5}{c}{Homogeneous $f_1$} \\
\hline
$\widehat{h}$ & 24.7 & 26.7 & 28.2 & 31.1 \\
$\widehat{h}_{MISE}$ & 24.8 & 63.2 & 64.9 & 65.9 \\
% \hline
Dense-layer MSE, $\widehat{h}$ & 2.24 & 2.02 & 1.88 & 1.68 \\
Dense-layer MSE, $\widehat{h}_{MISE}$ & 2.25 & 1.21 & 1.23 & 1.21 \\
\hline
WMSE, mnhist & 2.22 & 2.42 & 2.61 & 2.94 \\
WMSE, two-stage & 2.15 & 2.04 & 1.97 & 1.87 \\
\hline
\multicolumn{5}{c}{Homogeneous $f_2$} \\
\hline
$\widehat{h}$ & 31.5 & 34.1 & 35.8 & 38.5 \\
$\widehat{h}_{MISE}$ & 31.7 & 75.9 & 67.7 & 62.8 \\
% \hline
Dense-layer MSE, $\widehat{h}$ & 0.46 & 0.42 & 0.38 & 0.34 \\
Dense-layer MSE, $\widehat{h}_{MISE}$ & 0.46 & 0.22 & 0.24 & 0.22 \\
\hline
WMSE, mnhist & 0.43 & 0.54 & 0.62 & 0.76 \\
WMSE, two-stage & 0.42 & 0.41 & 0.41 & 0.43 \\
\hline
\multicolumn{5}{c}{Homogeneous $f_3$} \\
\hline
$\widehat{h}$ & 43.0 & 45.5 & 47.1 & 49.5 \\
$\widehat{h}_{MISE}$ & 42.8 & 88.6 & 72.6 & 64.5 \\
% \hline
Dense-layer MSE, $\widehat{h}$ & 0.30 & 0.27 & 0.29 & 0.29 \\
Dense-layer MSE, $\widehat{h}_{MISE}$ & 0.29 & 0.45 & 0.36 & 0.35 \\
\hline
WMSE, mnhist & 0.29 & 0.34 & 0.41 & 0.51 \\
WMSE, two-stage & 0.43 & 0.32 & 0.35 & 0.36 \\
\hline
\multicolumn{5}{c}{Homogeneous $f_4$} \\
\hline
$\widehat{h}$ & 43.8 & 46.1 & 47.4 & 50.1 \\
$\widehat{h}_{MISE}$ & 43.4 & 77.7 & 68.6 & 63.6 \\
% \hline
Dense-layer MSE, $\widehat{h}$ & 0.69 & 0.70 & 0.71 & 0.74 \\
Dense-layer MSE, $\widehat{h}_{MISE}$ & 0.68 & 2.20 & 1.32 & 1.11 \\
\hline
WMSE, mnhist & 0.67 & 0.77 & 0.84 & 1.00 \\
WMSE, two-stage & 0.91 & 0.83 & 0.80 & 0.89 \\
\hline
\end{tabular}
\end{table}

\begin{table}
\caption{
(i) The selected bandwidths and dense-layer MSEs under the WMISE-based ($\widehat{h}$) and MISE-based ($\widehat{h}_{\mathrm{MISE}}$) bandwidth selections, and (ii) the WMSE $(\times 100)$ of the multi-network histogram (mnhist) and the two-stage estimator. Results are based on multiplex networks with five dense layers and $n=400$ vertices under the Perturbed scenario, where $m\in\{0,2,4,8\}$ sparse layers are added. All values are averaged over 100 replications.}
\label{tab:stress_perturbed}
\centering
\begin{tabular}{l|cccc}
\hline
 & $m=0$ & $m=2$ & $m=4$ & $m=8$ \\
\hline
\multicolumn{5}{c}{{Perturbed (1, 2)}} \\
\hline
$\widehat{h}$ & 27.7 & 30.3 & 32.2 & 34.7 \\
$\widehat{h}_{MISE}$ & 27.0 & 71.0 & 70.6 & 67.8 \\
Dense-layer MSE, $\widehat{h}$ & 1.11 & 1.00 & 0.93 & 0.86 \\
Dense-layer MSE, $\widehat{h}_{MISE}$ & 1.13 & 0.73 & 0.78 & 0.70 \\
\hline
WMSE, mnhist & 0.77 & 0.88 & 0.98 & 1.16 \\
WMSE, two-stage & 0.75 & 0.72 & 0.71 & 0.71 \\
\hline
\multicolumn{5}{c}{{Perturbed (1, 4)}} \\
\hline
$\widehat{h}$ & 32.6 & 35.1 & 37.0 & 40.1 \\
$\widehat{h}_{MISE}$ & 30.8 & 73.5 & 70.2 & 70.9 \\
Dense-layer MSE, $\widehat{h}$ & 0.89 & 0.83 & 0.77 & 0.75 \\
Dense-layer MSE, $\widehat{h}_{MISE}$ & 0.95 & 1.21 & 1.18 & 1.17 \\
\hline
WMSE, mnhist & 0.84 & 0.95 & 1.03 & 1.20 \\
WMSE, two-stage & 1.07 & 0.99 & 0.96 & 0.95 \\
\hline
\multicolumn{5}{c}{{Perturbed (3, 4)}} \\
\hline
$\widehat{h}$ & 44.2 & 46.9 & 47.8 & 49.8 \\
$\widehat{h}_{MISE}$ & 44.2 & 92.8 & 71.0 & 63.0 \\
Dense-layer MSE, $\widehat{h}$ & 0.41 & 0.41 & 0.42 & 0.44 \\
Dense-layer MSE, $\widehat{h}_{MISE}$ & 0.42 & 1.52 & 0.77 & 0.57 \\
\hline
WMSE, mnhist & 0.42 & 0.49 & 0.55 & 0.67 \\
WMSE, two-stage & 1.27 & 0.80 & 0.76 & 0.67 \\
\hline
\end{tabular}
\end{table}

\begin{table}
\caption{
(i) The selected bandwidths and dense-layer MSEs under the WMISE-based ($\widehat{h}$) and MISE-based ($\widehat{h}_{\mathrm{MISE}}$) bandwidth selections, and (ii) the WMSE $(\times 100)$ of the multi-network histogram (mnhist) and the two-stage estimator. Results are based on multiplex networks with five dense layers and $n=400$ vertices under the Heterogenous scenario, where $m\in\{0,2,4,8\}$ sparse layers are added. All values are averaged over 100 replications.}
\label{tab:stress_heterogeneous}
\centering
\begin{tabular}{l|cccc}
\hline
 & $m=0$ & $m=2$ & $m=4$ & $m=8$ \\
\hline
\multicolumn{5}{c}{{Heterogeneous (1, 2)}} \\
\hline
$\widehat{h}$ & 26.6 & 29.0 & 30.5 & 33.1 \\
$\widehat{h}_{MISE}$ & 25.8 & 64.7 & 64.0 & 64.8 \\
Dense-layer MSE, $\widehat{h}$ & 1.59 & 1.43 & 1.34 & 1.27 \\
Dense-layer MSE, $\widehat{h}_{MISE}$ & 1.66 & 1.19 & 1.09 & 1.18 \\
\hline
WMSE, mnhist & 1.02 & 1.13 & 1.25 & 1.48 \\
WMSE, two-stage & 0.97 & 0.93 & 0.91 & 0.93 \\
\hline
\multicolumn{5}{c}{{Heterogeneous (1, 4)}} \\
\hline
$\widehat{h}$ & 28.1 & 30.5 & 32.3 & 35.0 \\
$\widehat{h}_{MISE}$ & 26.9 & 66.0 & 66.4 & 63.8 \\
Dense-layer MSE, $\widehat{h}$ & 1.25 & 1.09 & 1.02 & 0.95 \\
Dense-layer MSE, $\widehat{h}_{MISE}$ & 1.33 & 1.16 & 1.16 & 1.07 \\
\hline
WMSE, mnhist & 1.14 & 1.25 & 1.38 & 1.61 \\
WMSE, two-stage & 1.26 & 1.23 & 1.22 & 1.20 \\
\hline
\multicolumn{5}{c}{{Heterogeneous (3, 4)}} \\
\hline
$\widehat{h}$ & 43.5 & 46.0 & 47.7 & 49.6 \\
$\widehat{h}_{MISE}$ & 43.6 & 85.6 & 74.8 & 63.1 \\
Dense-layer MSE, $\widehat{h}$ & 0.40 & 0.41 & 0.41 & 0.42 \\
Dense-layer MSE, $\widehat{h}_{MISE}$ & 0.40 & 1.15 & 0.91 & 0.63 \\
\hline
WMSE, mnhist & 0.39 & 0.45 & 0.52 & 0.62 \\
WMSE, two-stage & 1.75 & 0.82 & 0.88 & 0.72 \\
\hline
\end{tabular}
\end{table}

\subsubsection{Effect of near-empty layers on the WMISE bandwidth}

To assess the sensitivity of the WMISE- and MISE-based bandwidth selection methods to the addition of sparse layers, we examine how the selected bandwidths and dense-layer MSEs change.
The WMISE-based bandwidth increases only mildly across all graphons and scenarios considered (13--26\% from $m=0$ to $m=8$). In contrast, the MISE-based bandwidth increases much more substantially, ranging from 43\% to 166\%, as it does not downweight the near-empty layers. When using the WMISE-based bandwidth, the impact of adding near-empty layers on the dense-layer MSE varies across graphons. Under the Homogeneous scenario, the MSE changes across graphons, ranging from a decrease of 27\% to an increase of 7\%. The MISE-based bandwidth selection has a larger impact on dense-layer estimation performance. Relative to the WMISE-based bandwidth at the same $m$, the dense-layer MSE under the MISE-based bandwidth ranges from a decrease of 46\% to an increase of 214\% across graphons. 
The same pattern is observed under the Perturbed and Heterogeneous scenarios.

\subsubsection{Two-stage layer-specific bandwidth variant}
\label{subsubsec:two-stage-variant}

We implement a two-stage method that estimates labels jointly across all layers while selecting the bandwidth separately for each layer. Stage 1 fits the multi-network histogram. Stage 2 merges groups based on layer-specific bandwidth values through the following three steps.

\begin{enumerate}
\item \textbf{Group ordering.} We calculate the weighted average of $\widehat{f}_{ab}^{(\ell)}$ with the weight
  $w_\ell = \widehat\rho^{(\ell)}/\sum_{\ell=1}^L \widehat\rho^{(\ell)}$, denoted
  $\bar f_{ab} = \sum_{\ell=1}^L w_\ell\, \widehat f_{ab}^{(\ell)}$. Then for the group size vector $\mathbf s=(s_1,\dots,s_k)$, we compute the group-wise expected degree $d_a =  \sum_{b=1}^k \bar f_{ab} s_b$ and order the $k$ groups by increasing $d_a$. The weighted average of the expected degrees across layers serves as a one-dimensional proxy for the unobserved ordering of the latent variable.
\item \textbf{Target number of groups per layer.} For each layer $\ell$, we compute the bandwidth $\widehat{h}_\ell = \sqrt n(2\widehat M_\ell^2\widehat\rho^{(\ell)})^{-1/4}\vee \widehat{h}$, and set the target number of groups to $k_\ell = \lfloor n/\widehat{h}_\ell \rfloor$.
\item \textbf{Merging.} We merge the ordered $k$ groups into $k_\ell$ consecutive blocks while preserving their order, with as equal a number of original groups as possible, giving a map $m_\ell : \{1,\dots,k\}\to\{1,\dots,k_\ell\}$. Layer $\ell$'s merged labels are $\widehat z^{(\ell)}_i = m_\ell(\widehat z_i)$, and the block probabilities are recomputed under this merged partition.
\end{enumerate}

This two-stage extension is a simple implementation of the layer-specific bandwidth idea. In this approach, both the group ordering and merging are determined by fixed rules rather than optimization. This method improves WMSE for networks consisting of structurally similar layers, but it may increase WMSE for graphon pairs with substantially different structures (Tables~\ref{tab:stress_perturbed} and \ref{tab:stress_heterogeneous}).
This behaviour arises because the ordering criterion used in this method assumes a monotone relationship between the expected degree and the latent position. This assumption is appropriate for graphons whose expected degree is approximately monotone in the latent position, but it does not hold for graphons with non-monotone or constant expected degrees. In such cases, degree-based ordering may fail to adequately approximate the ordering of the latent positions, which is related to the well-known identifiability issue in graphon models \cite{sogan2026Degree}. 
Therefore, in both scenarios, WMSE improves for graphon combinations where both graphons have approximately monotone expected-degree functions, whereas WMSE increases for combinations where neither graphon satisfies this condition. Intermediate cases, where only one graphon satisfies the condition, show mixed effects.

An approach to address this issue would be to determine the merging criterion based on profile likelihood or the similarity of estimated block heights. Such an approach is conceptually related to the block-merging method proposed by \cite{verdeyme2024Hybrid}, which merges blocks with similar estimated heights to form an irregular partition. 

\subsection{Bandwidth Selection}
\label{subsec:bandwidth_select}

We conduct additional simulations to further evaluate the performance of the bandwidth selection procedure based on Algorithm~\ref{alg:bw_sel}. We consider the following estimators:
\begin{itemize}
\item Baseline: Estimating all edge probabilities by the layer-wise overall density $\widehat{\rho}^{(\ell)}$. Due to the normalization $\iint f^{(\ell)}(x,y)dxdy=1$, this is equivalent to the constant estimator $\widehat{f}^{(\ell)}\equiv 1$ and serves as a benchmark.
\item mnhist: The multi-network histogram using the labels $\widehat{\mathbf{z}}$ obtained from Algorithm~\ref{alg:multi_nethist_combined_short}.
\item mnhist-oracle (\eqref{eq:oracle_f} of the manuscript): The multi-network histogram using the oracle labels $\widetilde{\mathbf{z}}^*$ defined by the ranks of the true latent variables (\eqref{def:oracle_z} of the manuscript), which allows us to assess the impact of label estimation errors.
\end{itemize}
For the bandwidth selection, we consider two bandwidths:
\begin{itemize}
\item $\widehat{h}$ (plug-in bandwidth): The bandwidth obtained from Algorithm~\ref{alg:bw_sel} based on \eqref{eq:oracle_bandwidth_WMISE}, which is derived from Theorem~\ref{thm:WMISE} by minimizing the WMISE upper bound of the oracle estimator.
\item $h_{oracle}$ (oracle bandwidth): The bandwidth selected for each replication by refitting mnhist or mnhist-oracle over a grid of values $\{0.5,0.75,1,1.5,2,3\}\times\widehat{h}$ around $\widehat{h}$ and choosing the bandwidth with the smallest realized WMSE under the true graphon. This serves as an empirical benchmark unavailable in real data applications.
\end{itemize}

To separate the effects of bandwidth selection and label estimation error, we compare the following five estimators: (i) mnhist, (ii) mnhist-oracle + $\widehat{h}$, (iii) mnhist + $h_{\mathrm{oracle}}$, (iv) mnhist-oracle + $h_{\mathrm{oracle}}$, and (v) the baseline estimator.
\begin{table}
\caption{WMSEs ($\times 100$) for five estimators under the mixed sparsity setting: (i) mnhist with plug-in bandwidth $\widehat{h}$, (ii) mnhist-oracle with $\widehat{h}$, (iii) mnhist with oracle bandwidth $h_{\mathrm{oracle}}$, (iv) mnhist-oracle with $h_{\mathrm{oracle}}$, and (v) baseline. Results are reported for multiplex networks with $n=400$ vertices across $L \in \{5,7,10\}$ layers under the Homogeneous, Perturbed, and Heterogeneous scenarios. Each function ID corresponds to a graphon combination specified in Table~1. Values are averaged over 100 replications. The reported bandwidths include $\widehat{h}$ and the oracle bandwidths separately selected for mnhist and mnhist-oracle.}
\label{tab:decomposition_by_graphon}
\centering
\centering
\resizebox{\textwidth}{!}{%
\begin{tabular}{c|c|ccc|cccc|c}
\hline
ID & $L$ & $\widehat{h}$ & \shortstack{$h_{oracle}$\\(mnhist)} & \shortstack{$h_{oracle}$\\(mnhist-oracle)} & \shortstack{mnhist\\$+\widehat{h}$} & \shortstack{mnhist-oracle\\$+\widehat{h}$} & \shortstack{mnhist\\$+h_{oracle}$} & \shortstack{mnhist-oracle\\$+h_{oracle}$} & baseline \\
\hline
\multicolumn{10}{l}{\textbf{Homogeneous}} \\
\hline
1 &  5 & 27.5 & 73.6 & 41.9 & 3.550 & 0.952 & 1.652 & 0.713 & 13.93 \\
1 &  7 & 27.2 & 67.0 & 43.7 & 2.930 & 0.939 & 1.419 & 0.686 & 14.24 \\
1 & 10 & 27.3 & 57.8 & 43.9 & 2.433 & 0.954 & 1.218 & 0.712 & 14.11 \\
\hline
2 &  5 & 33.6 & 117.7 & 61.1 & 0.827 & 0.168 & 0.246 & 0.087 & 1.21 \\
2 &  7 & 33.8 & 96.8 & 62.5 & 0.686 & 0.171 & 0.230 & 0.090 & 1.22 \\
2 & 10 & 33.7 & 86.7 & 62.4 & 0.558 & 0.168 & 0.190 & 0.087 & 1.22 \\
\hline
3 &  5 & 41.8 & 70.7 & 51.0 & 0.633 & 0.150 & 0.352 & 0.147 & 1.30 \\
3 &  7 & 41.9 & 65.3 & 51.7 & 0.494 & 0.150 & 0.285 & 0.147 & 1.30 \\
3 & 10 & 41.9 & 63.8 & 50.9 & 0.404 & 0.150 & 0.241 & 0.148 & 1.29 \\
\hline
4 &  5 & 42.4 & 46.6 & 37.3 & 0.899 & 0.479 & 0.886 & 0.466 & 10.99 \\
4 &  7 & 42.4 & 42.9 & 37.3 & 0.751 & 0.471 & 0.751 & 0.461 & 11.05 \\
4 & 10 & 42.5 & 42.5 & 38.1 & 0.660 & 0.472 & 0.660 & 0.463 & 11.14 \\
\hline
\multicolumn{10}{l}{\textbf{Perturbation}} \\
\hline
1, 2 &  5 & 32.9 & 96.9 & 60.8 & 0.885 & 0.181 & 0.301 & 0.100 & 1.48 \\
1, 2 &  7 & 33.0 & 92.2 & 61.9 & 0.723 & 0.182 & 0.246 & 0.098 & 1.45 \\
1, 2 & 10 & 33.2 & 87.7 & 60.3 & 0.581 & 0.178 & 0.205 & 0.097 & 1.43 \\
\hline
1, 4 &  5 & 41.2 & 47.9 & 37.0 & 0.919 & 0.474 & 0.902 & 0.463 & 10.79 \\
1, 4 &  7 & 41.6 & 42.2 & 35.4 & 0.759 & 0.477 & 0.757 & 0.465 & 10.91 \\
1, 4 & 10 & 41.7 & 41.7 & 36.3 & 0.674 & 0.473 & 0.674 & 0.465 & 10.84 \\
\hline
3, 4 &  5 & 42.5 & 49.1 & 38.1 & 0.883 & 0.458 & 0.864 & 0.446 & 10.47 \\
3, 4 &  7 & 42.5 & 43.6 & 40.0 & 0.741 & 0.453 & 0.740 & 0.449 & 10.37 \\
3, 4 & 10 & 42.5 & 43.1 & 39.5 & 0.649 & 0.456 & 0.648 & 0.451 & 10.52 \\
\hline
\multicolumn{10}{l}{\textbf{Heterogeneous}} \\
\hline
1, 2 &  5 & 32.5 & 95.1 & 58.4 & 1.249 & 0.263 & 0.461 & 0.162 & 2.37 \\
1, 2 &  7 & 32.4 & 94.1 & 59.2 & 0.955 & 0.247 & 0.370 & 0.155 & 2.24 \\
1, 2 & 10 & 32.4 & 90.3 & 60.5 & 0.674 & 0.217 & 0.275 & 0.135 & 1.98 \\
\hline
1, 4 &  5 & 37.1 & 55.8 & 37.3 & 1.313 & 0.536 & 0.979 & 0.536 & 11.58 \\
1, 4 &  7 & 37.4 & 53.8 & 37.3 & 0.997 & 0.532 & 0.856 & 0.532 & 11.58 \\
1, 4 & 10 & 38.0 & 46.6 & 38.0 & 0.747 & 0.488 & 0.718 & 0.488 & 11.50 \\
\hline
3, 4 &  5 & 41.9 & 59.8 & 41.3 & 1.076 & 0.500 & 0.972 & 0.499 & 9.95 \\
3, 4 &  7 & 42.1 & 50.2 & 42.0 & 0.851 & 0.487 & 0.827 & 0.487 & 9.90 \\
3, 4 & 10 & 42.2 & 42.6 & 40.5 & 0.672 & 0.474 & 0.672 & 0.472 & 10.22 \\
\hline
\end{tabular}%
}
\end{table}

Table~\ref{tab:decomposition_by_graphon} reports the WMSE of the estimators described above under the settings corresponding to Tables~\ref{tab:homogeneous_mixed_by_L}--\ref{tab:Heterogeneous_mixed_by_L} in the main manuscript. Compared with the baseline estimator $\widehat{f}=1$ (v), mnhist (i)--(iv) consistently achieves lower WMSE regardless of the bandwidth and label selection schemes. In addition, when mnhist is used, WMSE tends to decrease as $L$ increases regardless of the bandwidth selection scheme. The following comparisons evaluate whether this improvement comes from bandwidth selection or label estimation.

First, we assess the effect of bandwidth selection by comparing the WMSEs of estimators using the plug-in bandwidth and the oracle bandwidth under oracle labels ((ii) mnhist-oracle + $\widehat{h}$ vs. (iv) mnhist-oracle + $h_{\mathrm{oracle}}$).
The WMSE with the oracle bandwidth is mostly similar to that with the plug-in bandwidth, with a ratio of 1.0--1.4. For some function combinations in the Homogeneous and Perturbation scenarios, the ratio increases to 1.6--1.9. Thus, the plug-in bandwidth selected by Algorithm~\ref{alg:bw_sel} achieves similar performance to the oracle bandwidth in most cases.

To evaluate the effect of label estimation error, we now compare the WMSEs of the estimators using the estimated and oracle labels under the plug-in bandwidth ((i) mnhist vs. (ii) mnhist-oracle + $\widehat{h}$). The WMSE of mnhist is 1.4--5 times larger than that of mnhist-oracle. We also observe similar patterns when oracle bandwidths are separately selected for each label setting ((iii) mnhist + $h_{\mathrm{oracle}}$ vs. (iv) mnhist-oracle + $h_{\mathrm{oracle}}$). Combined with the previous results, this shows that graphon estimation error is more affected by label estimation error than by bandwidth selection.

\section{Data analysis: further on Indian village data}
\label{sec_append:indian_vils}

We first provide a brief description of the original dataset; see \cite{banerjee2013Diffusion} for the full description.
The dataset contains 75 villages, each with two types of networks, household and individual. Each network type is constructed from twelve kinds of socio-economical interactions, along with their union, and their intersection. 
Each individual and household has associated demographic information. Table~\ref{tab:covariate} lists the covariates selected for the data analysis. A full description of the survey questions and their possible responses is available in \cite{indian_vil_data}. In the main article, we focus on two covariates for the bin summary plots: caste and electricity.

For our analysis, we focus on the 12 household social interactions from village ID 40, treating each layer as undirected and simple graphs. Table~\ref{tab:indian_vil_layer_info} lists the interaction types, and Table~\ref{tab:indian_vil_40} summarizes their summary statistics of Village 40. The layer-wise edge densities (sparsity) vary from 0.0021 (temple company) to 0.0198 (visit come), implying a sparse pattern across layers. The average degree varies from 0.4762 to 4.5628. 
Degree scale parameters for most layers range from 4.3684 to 6.5971, except for the visit come layer ($\gamma= 12.56$), indicating that the degree distributions decay faster than those of typical scale-free networks.
Most of the layers show weak degree assortativity, suggesting a limited tendency of vertices for connecting with similar degrees. The global clustering coefficients range from 0.1111 to 0.3149, with the kero rice come/go and rel layers showing comparatively higher clustering.
\begin{table}[!th]
\caption{Covariates about household characteristics used in Figure~\ref{fig:indian_vil_40_covariate}}
\label{tab:covariate}
\begin{tabular}{|p{0.31\linewidth}|p{0.59\linewidth}|}
\hline
Variable        & Response               \\ \hline
Caste    & 1: OBC, 2: Schedule Caste, 3: General, 4: Schedule Tribe, 5: Other       \\
Have electricity?       & 1: Yes, Private, 2: Yes, Government, 3: No 
\\ \hline
\end{tabular}
\end{table}
\begin{table}[!h]
\caption{Description of layers in Indian village dataset \cite{banerjee2013Diffusion}.}
\label{tab:indian_vil_layer_info}
\centering
\begin{tabular}{|p{0.21\linewidth}|p{0.7\linewidth}|}
\hline
\textbf{Layers} & \textbf{Description}             \\ \hline
borrow money     & those from whom the respondent would borrow money   \\
give advice      & those to whom the respondent gives advice           \\
help decision    & those from whom the respondent gets advice          \\
kerorice come    & those from whom the respondent  would borrow material goods (kerosene, rice, etc.)
\\
kerorice go      & those to whom the respondent would lend material goods             \\
lend money       & those to whom the respondent would lend money       \\
medic            & those from whom the respondent receives medical advice             \\
nonrel           & nonrelatives with whom the respondent socializes    \\
rel              & kin in the village    \\
temple company   & those whom the respondent goes to pray with (at a temple, church, or mosque)      \\
visit come       & those who visit the respondent’s home               \\
visit go         & those whose homes the respondent visits             \\
And              & Intersection of all networks above   \\
Or               & Union of all networks above     
\\ \hline       
\end{tabular}
\end{table}
\begin{table}[!ht]
\vspace{10pt}
\caption{Network summary statistics for each layer of village ID 40. The numbers of vertices and edges are denoted by $V$ and $E$, respectively. The edge density is $\rho$. The average degree, standard deviation of degrees, and maximum degree are denoted by $\bar{d}$, $sd(d)$, and $\max(d)$, respectively. 
The degree scale parameter is denoted by $\gamma$, the degree assortativity coefficient by $r$, and the global clustering coefficient by $C_{\Delta}$.}
\label{tab:indian_vil_40}
\centering
\begin{tabular}{l|llllll}
Layer    & $V$ & $E$ & $\rho$ & $\bar{d}$ & $sd(d)$ & $\max(d)$ \\ \hline
borrow money   & 231    & 460    & 0.0173        & 3.9827      & 3.1759     & 17                    \\
give advice    & 231    & 275    & 0.0104        & 2.3810      & 2.3167     & 11            \\
help decision  & 231    & 290    & 0.0109        & 2.5108      & 2.3570     & 11                  \\
kero rice come & 231    & 442    & 0.0166        & 3.8268      & 2.8507     & 14                  \\
kero rice go   & 231    & 444    & 0.0167        & 3.8442      & 2.8502     & 14                   \\
lend money     & 231    & 413    & 0.0155        & 3.5758      & 2.8563     & 17                      \\
medic          & 231    & 398    & 0.0150        & 3.4459      & 2.7952     & 17                  \\
nonrel         & 231    & 449    & 0.0169        & 3.8874      & 3.2631     & 16                     \\
rel            & 231    & 229    & 0.0086        & 1.9827      & 1.6124     & 7                     \\
temple company & 231    & 55     & 0.0021        & 0.4762      & 0.8486     & 6                        \\
visit come     & 231    & 527    & 0.0198        & 4.5628      & 3.1505     & 13                 \\
visit go       & 231    & 524    & 0.0197        & 4.5368      & 3.0010     & 12                  \\ \hline 
\end{tabular}
\begin{tabular}{l|llll}
Layer    & $\gamma$ & $r$ & $C_{\Delta}$ & diameter \\ \hline
Borrow money   & 5.1879             & -0.0494             & 0.1991               & 10   \\
give advice    & 5.5089             & -0.0475             & 0.1474               & 14       \\
help decision  & 4.9165             & 0.0462              & 0.1755               & 13       \\
kero rice come & 5.1702             & 0.0378              & 0.2898               & 12       \\
kero rice go   & 5.4734             & 0.0602              & 0.2909               & 11       \\
lend money     & 6.5959             & -0.0379             & 0.1933               & 9        \\
medic          & 6.5971             & -0.0305             & 0.2115               & 10       \\
nonrel         & 4.3684             & -0.0381             & 0.1142               & 9        \\
rel            & 5.4921             & 0.1197              & 0.3149               & 23       \\
temple company & 4.9032             & 0.0409              & 0.1111               & 8        \\
visit come     & 12.5626            & -0.0932             & 0.2276               & 8        \\
visit go       & 5.2070             & -0.1320             & 0.2243               & 8       \\ \hline 
\end{tabular}
\end{table}
\begin{figure}[!h]
    \centering
    \includegraphics[width=0.5\linewidth, trim={1cm, 1.5cm, 1cm, 1.5cm}]{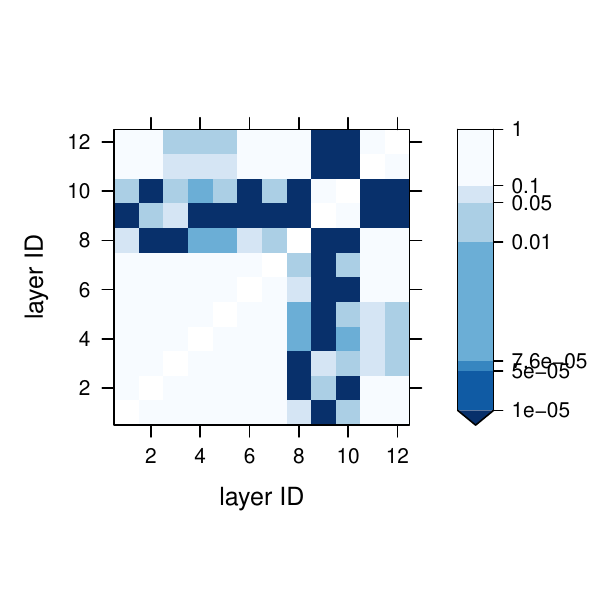}
    \caption{$p$-values from pair-wise tests by \cite{shao2022Higherorder} using triangle densities. Cell colours indicate $p$-values for each pair of distinct layers, with darker cells representing smaller $p$-values for the corresponding pair.}
    \label{fig:indian_vil_40_net_comparison}
\end{figure}

Before fitting the homogeneous multi-network histogram, we first identify layers with similar structure. We apply a two-sample network comparison test \cite{shao2022Higherorder}, where the null hypothesis is that the normalized network moments are identical ($H_0: (\rho^{(1)})^{-s} \mu_1 = (\rho^{(2)})^{-s} \mu_2$), with $\mu_\ell$ denoting a network statistic (e.g., triangle density) and $s$ the number of edges involved. Pairwise comparisons are performed across the 12 layers using triangle densities ($s=3$), controlling the family-wise error rate at 0.05 via Bonferroni correction. Figure~\ref{fig:indian_vil_40_net_comparison} displays the resulting pairwise $p$-value matrix. We find insufficient evidence to conclude that the following nine layers differ: borrow money, give advice, help decision, kero rice come, kero rice go, lend money, medic, visit come, and visit go. Therefore, we use the homogeneous multi-network histogram to estimate a common graphon for these nine layers.

% \bibliographystyle{comnet}
% \bibliography{multinethist/ref}

\end{appendix}
%
% once the .bbl file has been generated then place the text in your article.

% To get the unnumbered reference style the author should use [unnumbib]
%as an option in the document class.  For example: \documentclass[unnumbib]{comnet}

% \begin{thebibliography}{99}

% \bibitem{Rottmann:2010a}
% {\sc Rottmann-Matthes, J.} (2011a) Linear stability of traveling
% waves in nonstrictly hyperbolic PDES.\break {\em J. Dynam.
% Differential Equations}, \textbf{23}, 365--393.

% \bibitem{Rottmann:2011a}
% {\sc Rottmann-Matthes, J.} (2011b) Stability and freezing of
% nonlinear waves in first-order hyperbolic PDEs. Preprint
% 11-016, CRC 701, Bielefeld University.

% \bibitem{Rottmann:2011b}
% {\sc Rottmann-Matthes, J.} (2012) Stability of
% parabolic-hyperbolic traveling waves. Preprint 12-005, CRC
% 701, Bielefeld University.

% \bibitem{RowleyKevrekidisMarsdenLust:2003}
% {\sc Rowley, C. W., Kevrekidis, I. G., Marsden, J. E. \& Lust,
% K.} (2003) Reduction and reconstruction for self-similar
% dynamical systems. {\em Nonlinearity}, \textbf{16},
% 1257--1275.

% \end{thebibliography}
\end{document}